\documentclass{article}

\newcommand{\ifFull}[1]{#1}
\newcommand{\ifConference}[1]{}
\newcommand{\ifSubmission}[1]{}
\PassOptionsToPackage{numbers,sort}{natbib}

\ifSubmission{\usepackage{neurips_2024}}
\ifFull{\usepackage[final]{neurips_2024}}
\ifConference{\usepackage[final]{neurips_2024}}

\usepackage{amsmath,mathtools} 
\usepackage{amssymb} 
\usepackage{bm} 
\usepackage{stmaryrd} 
\usepackage{amsthm}




\usepackage[utf8]{inputenc} 
\usepackage[T1]{fontenc}    
\usepackage[colorlinks=true,citecolor=purple,linkcolor=purple,hypertexnames=false]{hyperref}       
\usepackage{url}            
\usepackage{booktabs}       
\usepackage{amsfonts}       
\usepackage{nicefrac}       
\usepackage{microtype}      
\usepackage{xcolor}         
\usepackage[capitalise,nameinlink]{cleveref}
\usepackage{todonotes}
\usepackage{newfile}

\usepackage[shortlabels]{enumitem}

\newtheorem{theorem}{Theorem}
\newtheorem{proposition}[theorem]{Proposition}
\newtheorem{lemma}[theorem]{Lemma}
\newtheorem{remark}[theorem]{Remark}
%
\newtheorem{example}[theorem]{Example}

\newcommand{\ltltl}{\ensuremath{(\text{LT})^2\text{L}}}

\newcommand{\Q}{\mathbb{Q}}
\newcommand{\OMIT}[1]{}

\newcommand{\ialphabet}{\Sigma}

\newcommand{\TCzero}{\text{TC}^0}
\newcommand{\ACzero}{\text{AC}^0}

\newcommand{\automatonfont}[1]{\mathfrak{#1}}
\newcommand{\monoidfont}[1]{\mathbb{#1}}
\newcommand{\structurefont}[1]{\mathcal{#1}}

\makeatletter
\newcommand{\@abbrev}[3]{
  \def\c@a@def##1{
    \if ##1.
    \relax
    \else
    \@ifdefinable{\@nameuse{#1##1}}{\@namedef{#1##1}{#2##1}}
    \expandafter\c@a@def
    \fi
  }
  \c@a@def #3.
}

\@abbrev{bb}{\mathbb}{ABCDEFGHIJKLMNOPQRSTUVWXYZ}
\@abbrev{bf}{\mathbf}{ABCDEFGHIJKLMNOPQRSTUVWXYZabcdefghijklmnopqrstuvwxyz}
\@abbrev{bit}{\boldsymbol}{ABCDEFGHIJKLMNOPQRSTUVWXYZabcdefghijklmnopqrstuvwxyz}
\@abbrev{cal}{\mathcal}{ABCDEFGHIJKLMNOPQRSTUVWXYZ}
\@abbrev{frak}{\mathfrak}{ABCDEFGHIJKLMNOPQRSTUVWXYZabcdefghijklmnopqrstuvwxyz}
\@abbrev{up}{\mathrm}{ABCDEFGHIJKLMNOPQRSTUVWXYZabcdefghijklmnopqrstuvwxyz}
\@abbrev{scr}{\mathscr}{ABCDEFGHIJKLMNOPQRSTUVWXYZ}
\@abbrev{sf}{\mathsf}{ABCDEFGHIJKLMNOPQRSTUVWXYZabcdefghijklmnopqrstuvwxyz}
\@abbrev{ov}{\ov}{ABCDEFGHIJKLMNOPQRSTUVWXYZabcdefghijklmnopqrstuvwxyz}
\@abbrev{struct}{\structurefont}{ABCDEFGHIJKLMNOPQRSTUVWXYZ}
\@abbrev{aut}{\automatonfont}{ABCDEFGHIJKLMNOPQRSTUVWXYZ}
\@abbrev{mono}{\monoidfont}{ABCDEFGHIJKLMNOPQRSTUVWXYZ}
\@abbrev{graph}{\structurefont}{ABCDEFGHIJKLMNOPQRSTUVWXYZ}
\makeatother

\newcommand{\bc}{\bm{c}}
\newcommand{\bd}{\bm{d}}
\newcommand{\be}{\bm{e}}
\newcommand{\ba}{\bm{a}}
\newcommand{\bb}{\bm{b}}
\newcommand{\br}{\bm{r}}

\newcommand{\bs}{\bm{s}}
\newcommand{\bt}{\bm{t}}
\newcommand{\bu}{\bm{u}}
\newcommand{\bv}{\bm{v}}
\newcommand{\bw}{\bm{w}}
\newcommand{\bx}{\bm{x}}
\newcommand{\by}{\bm{y}}
\newcommand{\bz}{\bm{z}}
\newcommand{\N}{\mathbb{N}}
\newcommand{\R}{\mathbb{R}}
\newcommand{\Z}{\mathbb{Z}}

\title{The Power of Hard Attention Transformers on Data Sequences:
A Formal Language Theoretic Perspective}

%

\author{%
  Pascal Bergsträßer\\
  RPTU Kaiserslautern-Landau\\
  67663 Kaiserslautern, Germany\\
  \texttt{bergstraesser@cs.uni-kl.de}
  \And
  Chris Köcher\\
  MPI-SWS\\
  67663 Kaiserslautern, Germany\\
  \texttt{ckoecher@mpi-sws.org}
  \AND
  Anthony Widjaja Lin\\
  MPI-SWS and RPTU Kaiserslautern-Landau\\
  67663 Kaiserslautern, Germany\\
  \texttt{awlin@mpi-sws.org}
  \And
  Georg Zetzsche\\
  MPI-SWS\\
  67663 Kaiserslautern, Germany\\
  \texttt{georg@mpi-sws.org}
}

\begin{document}

\maketitle

\begin{abstract}
    Formal language theory has recently been successfully employed to unravel 
    the power of transformer encoders. This setting is primarily applicable in
    Natural Language Processing (NLP), as a token embedding function (where
    a bounded number of tokens is admitted) is first applied before feeding 
    the input to the transformer. 
    On certain kinds of data (e.g. time 
    series), we want our transformers to be able to handle \emph{arbitrary} 
    input sequences of numbers (or tuples thereof) without \emph{a priori} 
    limiting the values of these numbers. In this
    paper, we initiate the study of the expressive power of transformer encoders
    on sequences of data (i.e. tuples of numbers).
    Our results indicate an increase in expressive power of 
    hard attention transformers over data sequences, in stark contrast to the 
    case of strings.
    In particular, we prove that Unique Hard Attention Transformers (UHAT) over 
    inputs as data sequences no longer lie within the circuit complexity 
    class AC0 (even without positional encodings), unlike the case of string
    inputs, 
    but are still within the complexity class TC0 (even with positional
    encodings). Over strings, UHAT without positional encodings capture only
    regular languages. In contrast, we show that over data sequences
    UHAT can capture non-regular properties.
    Finally, we show that UHAT capture languages 
    definable in an extension of linear temporal logic with unary numeric
    predicates and arithmetics. 
\end{abstract}

\section{Introduction}
\label{sec:intro}

Recent years have witnessed the success of transformers \citep{Vaswani} 
in different applications, including natural language processing
\citep{BERT}, computer vision \citep{vision-transformers}, speech recognition
\citep{speech-transformers}, and time series analysis
\citep{time-series-survey,Zhou21}. In the quest to better understand the ability
and limitation of transformers, theoretical investigations have actively been 
undertaken in the last few years. 
Among others, \emph{formal language theory} has been successfully applied to reveal
deep insights into the expressive power of transformers, 
e.g., see the recent survey \citep{transformers-survey} and 
\citep{ACY24,barcelo2024logical,Perez21,Strobl23,saturated,Hao22,transformers-transducers,Chiang23,dusell24,Hahn20}.
In particular, relevant questions pertain to the power of various attention 
mechanisms, bounded/unbounded precision, positional encoding functions, and 
interplay between encoders and decoders, among many others. 

One common assumption in the formal language theoretic approach to transformers
is that the input sequence ranges over a \emph{finite} set $\ialphabet$ (called
alphabet), which is then to be fed into a transformer after applying a
%
token embedding function of the form $f: \ialphabet \to \R^d$. As a by-product,
the number of tokens is finite.
In certain applications 
(e.g. time series forecasting \citep{iTransformer-iclr24}), 
we want our transformers to be able to handle \emph{arbitrary} 
input sequences of numbers (or tuples thereof) without \emph{a priori} 
limiting the values of these numbers. Moreover, numbers could be compared using
arithmetic and (in)equality, which is not the case for elements of alphabets
considered in formal language theory.
%
For this reason, we propose to investigate 
the expressive power of \emph{transformers over data sequences}, which 
takes us to the setting of formal language theory over the alphabet
$\ialphabet = \Q^d$, for some $d \in \Z_{> 0}$, e.g., see \citep{atom-book,DV21}.
That is, \emph{what properties of a sequence of (tuples of) numbers
can be recognized by transformers?}

\textbf{Connections to circuit complexity.} Existing work 
has revealed intimate connections between transformers and circuit complexity.
In particular, let us consider the following class of transformer encoders that 
has been the main focus of many recent papers: \emph{Unique Hard Attention 
Transformers (UHAT)}. Among others, UHAT allows arbitrary positional encoding
and an attention mechanism that picks a vector at a unique minimum position in
the sequence that maximizes the attention score.
It is known that the class of formal languages recognized by 
UHAT is a strict subset of the circuit complexity
class $\ACzero$ (cf. \citep{Hao22,Hahn20,barcelo2024logical}), i.e., each UHAT 
can be simulated by a family of boolean circuits of constant depth. More
concretely, this entails among others that UHAT cannot compute the parity
(even/oddness) of the number of occurrences of any given letter $a$
in the input string (for strings over an alphabet containing at least two 
letters). 

\textbf{Our first contribution} is that UHAT over data sequences (even without 
positional encodings) is \emph{no longer} contained in $\ACzero$, unlike the case of finite 
number of tokens. Instead, we show that UHAT can be captured by the circuit
complexity class $\TCzero$, which extends $\ACzero$ circuits with majority
gates. 
\begin{theorem}\label{main-result-tc0}
    UHAT with positional encoding over data sequences is in $\TCzero$ but not
in $\ACzero$.
\end{theorem}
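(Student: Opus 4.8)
The plan is to prove the two directions separately, working throughout in a fixed-precision model in which each entry of a data sequence $\bw \in (\Q^d)^*$ is encoded in binary using a number of bits polynomial in the length $n$ of the sequence. This polynomial bound is what makes the separation from $\ACzero$ meaningful, and it is preserved by the finitely many arithmetic operations that a fixed transformer performs.

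For the $\TCzero$ upper bound, I would argue layer by layer that a single UHAT layer is computable by a $\TCzero$ circuit family and then compose the constantly many layers. Fix the architecture: constantly many heads, fixed-dimension parameter matrices, and affine-plus-activation feedforward blocks. The only operations are then (i) forming each attention score as a bilinear/affine expression in the (polynomially many bits of the) input and positional vectors, which reduces to a fixed number of integer multiplications together with an iterated addition, both in $\TCzero$; (ii) selecting, for each position, the leftmost position maximizing the score, i.e.\ an iterated maximum with a leftmost tie-break, which is a comparison-and-threshold computation in $\TCzero$; and (iii) the affine and piecewise-linear activation maps of the feedforward block, again in $\TCzero$. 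Assuming the positional encoding is itself $\TCzero$-computable, the whole constant-depth pipeline, followed by the final linear read-out and threshold at the output position, is a constant composition of $\TCzero$ functions and hence in $\TCzero$; this is the routine direction.

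The interesting direction is that UHAT over data sequences is not in $\ACzero$, and I would exhibit a single concrete UHAT witnessing this, using only value comparisons so that no positional encoding is needed. The key structural observation is that, unlike the string setting where embeddings take finitely many values, over data the query--key score is a genuine bilinear form in the unbounded input numbers, so a single attention head effectively \emph{multiplies} two input integers. I would therefore design a UHAT that places two arbitrary integers $a,b$ (with polynomially many bits) and an auxiliary threshold at designated positions, computes the product $a\cdot b$ inside an attention score, and accepts according to a comparison involving this product; the resulting data language encodes integer multiplication, which by the classical circuit lower bounds of Furst--Saxe--Sipser and H\aa stad is not in $\ACzero$. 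Since all gadgets use only arithmetic and comparison of data values, the construction needs no positional information, matching the claim in the introduction.

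The main obstacle is making this reduction airtight. A hard-attention read-out ultimately exposes only \emph{threshold} information about the computed product (whether $a\cdot b$ exceeds some value), whereas the robustly non-$\ACzero$ part of multiplication lies in its middle output bits; I therefore have to choose the encodings of $a$, $b$, and the threshold so that the single accept bit of the UHAT provably computes a hard bit of the product rather than an easy high-order comparison, and then verify that the gadget is realizable by a legal UHAT (fixed heads and layers, valid tie-breaking, consistent precision). On the upper-bound side, the analogous care is needed to handle entries of unbounded magnitude within the fixed-precision bookkeeping and to confirm that the argmax selection and all intermediate arithmetic remain uniformly in $\TCzero$.
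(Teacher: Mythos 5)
Your upper bound has a genuine gap: you work ``in a fixed-precision model'' with rational parameters and you explicitly assume the positional encoding is $\TCzero$-computable, but the theorem is about UHATs whose affine transformations and positional encodings may contain \emph{arbitrary real} constants, and this is exactly where the difficulty of \cref{main-result-tc0} lies. A layer-by-layer simulation cannot be repaired by replacing each real constant with a rational approximation: the paper exhibits (\cref{appendix-example-real-parameters}) a UHAT with two reals $\alpha,\beta$ that accepts a fixed length-$3$ sequence of $\{0,1\}$-vectors if and only if $\alpha\beta=2$ and $\alpha=\beta$, i.e.\ $\alpha=\beta=\sqrt{2}$, so \emph{any} rational perturbation changes the accepted language already on constant-size inputs. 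The paper's proof instead compiles the entire UHAT into a polynomial-size positive Boolean combination of degree-$\le 2$ polynomial inequalities with a bounded number of $\wedge/\vee$ alternations (\cref{construct-cpr,prop:uhatToPC}), and then shows by a convex-geometry argument---decomposition of rational polyhedra plus Carath\'eodory's theorem (\cref{rational-solution,equivalent-rational-vector})---that the real coefficient vectors can be replaced by rationals of polynomial size satisfying \emph{exactly} the same rational inequalities up to the relevant input size (\cref{rational-pc}); only then does the circuit construction go through, with the alternation bound controlling the constant depth. None of this machinery appears in your sketch, so your ``routine direction'' proves only the special case of rational parameters and efficiently computable positional encodings, not the stated theorem.

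Your lower bound is closer in spirit but leaves its crux open, and the paper shows it can be closed much more cheaply. You correctly worry that a hard-attention readout exposes only threshold information about a product $ab$, and you propose to engineer the encoding so the accept bit computes a ``hard middle bit''; no such engineering is needed. By Buss's result, deciding $ab=c$ for binary integers is $\TCzero$-complete under $\ACzero$ reductions, and since $ab=c$ iff $ab>c-1$ and $-ab>-(c+1)$, the threshold predicate $ab>c$ is itself $\TCzero$-complete---this single citation fills exactly the gap you flag as your main obstacle. Moreover, the paper needs no in-model multiplication gadget at all: because rationals are encoded as numerator/denominator strings, the $\ACzero$ reduction merely writes the one-letter sequence $(a,\tfrac{c}{b})$, and the trivial UHAT accepting $\{(r,s)\mid r>s\}$ via the acceptance vector $(0,1,-1)$ then decides $ab>c$. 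The hardness lives in the input encoding, not in the attention mechanism; your bilinear-score multiplication gadget, while plausibly realizable, is unnecessary, and without the completeness of the threshold version your argument as written does not yield the $\ACzero$ separation.
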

This complexity upper bound allows us to deduce the expressive power of
UHAT over data sequences by using complexity theory.
For example, since UHAT accepts only $\TCzero$ languages, successfully
constructing a UHAT (e.g. through learning) for
        \[
            \mathsf{SQRTSUM} := \left\{(a_1,b_1),\ldots,(a_n,b_n) 
\,\middle|\,
        \sum_{i=1}^n \sqrt{a_i} \leq \sum_{i=1}^n \sqrt{b_i},
        \text{and each } (a_i,b_i) \in \Z_{>0}\times \Z_{> 0},
\right\},
    \]
would constitute a major breakthrough in complexity theory (cf.
\citep{ABKM06,GGJ76}), i.e.,
showing that $\mathsf{SQRTSUM}$ is in the complexity class 
$\TCzero\subseteq\text{P}/\text{poly} $. A byproduct of our proof is that for each length, the set of accepted sequences is a semialgebraic set. This implies, e.g., that the graph $\{(x,e^x) \mid x\in\R\}\subseteq\R^2$ of $x\mapsto e^x$ (viewed as a set of length-$1$ sequences) is not accepted by UHAT.

\textbf{Connection to regular languages over data sequences.} Recent results
have revealed surprising connections between regular languages and formal 
languages recognizable by transformer encoders. In particular, it was proven
(cf. \citep{ACY24}) that languages
recognizable by UHAT (even \emph{with no positional encodings}) form a strict subset of
regular languages, namely those that are ``star-free'' or, equivalently, 
definable in First-Order Logic (FO), or Linear Temporal Logic (LTL). With
positional encodings, similar connections 
hold, by extending the logics with \emph{unary numerical predicates} (cf.
\citep{ACY24,barcelo2024logical}).

To investigate whether such connections extend to data sequences, we bring forth
\emph{formal languages theory over infinite alphabets} (cf.,
\citep{atom-book,DV21}), which has been an active research field in the last
decade or so with applications to programming languages and databases (to name a
few), e.g., see \citep{DV21,popl12,LMV16}. \textbf{Our second contribution} is a 
language over data
sequences
recognizable by UHAT without positional encodings that lies beyond existing
formal models over infinite alphabets (in particular, ``regular'' ones). This
shows the strength of UHAT over data sequences even without positional
encodings, in stark contrast to the case of finite alphabets.
\begin{theorem}\label{thm:non-regular}
    There is a non-regular language over $\ialphabet = \Q^d$ that is accepted
    by masked UHAT with no positional encoding.
\end{theorem}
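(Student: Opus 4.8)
The plan is to exhibit a single, concrete language and verify the two halves of the claim — non-regularity and UHAT-recognizability — separately. I would take the language of data sequences with pairwise distinct entries,
\[
  L_{\neq} = \{\, d_1 d_2 \cdots d_n \in \Q^* : d_i \neq d_j \text{ for all } i \neq j \,\},
\]
over $\ialphabet = \Q$ (so $d = 1$; the construction lifts verbatim to any $\Q^d$ by reading off one coordinate). This language is a natural target because it is the canonical witness separating data languages from the regular ones: $L_{\neq}$ is not recognizable by register automata, not even nondeterministic ones (Kaminski--Francez; see \citep{atom-book,DV21}), the intuitive reason being that accepting it forces the machine to retain arbitrarily many previously seen values, which finitely many registers cannot. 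Establishing this direction is then a citation plus a short pumping/memory argument, and — crucially for the contrast with the string setting — it holds regardless of whether the underlying atoms carry only equality or also the order of $\Q$.

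The substantive direction is building a masked UHAT without positional encoding for $L_{\neq}$. The idea is that hard attention over actual numeric values lets each position detect whether its datum has already occurred, something impossible over a finite alphabet. Concretely, in a first causal layer I would, at position $i$, attend over the positions $j \le i$ with a score maximized exactly when $d_j = d_i$: taking the score $-(d_i - d_j)^2 = 2 d_i d_j - d_j^2 - d_i^2$ and discarding the $j$-independent term $-d_i^2$ leaves the bilinear score $\langle (2 d_i, 1),\, (d_j, -d_j^2)\rangle$, so unique hard attention returns the earlier position $j^\ast$ whose value is closest to $d_i$. A feedforward step then sets a flag $\mathrm{flag}_i = 1$ precisely when $d_{j^\ast} = d_i$, i.e.\ when $d_i$ has a duplicate among the earlier positions; note the leftmost tie-breaking of unique hard attention is harmless here, since any earlier position carrying the value $d_i$ already yields score $0$. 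Finally, a second attention step has the last position attend over all $j$ with score $\mathrm{flag}_j$, retrieving $\max_j \mathrm{flag}_j \in \{0,1\}$, and the sequence is accepted iff this maximum is $0$. No step consults a position index, so no positional encoding is used, and only causal masking is needed.

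I expect the main obstacle to be purely one of \emph{realizability inside the formal UHAT model}, rather than of the high-level construction: I must check that the quadratic key map $d_j \mapsto (d_j, -d_j^2)$ and the exact-equality indicator defining $\mathrm{flag}_i$ are expressible by the admissible query/key and feedforward functions of the data-sequence UHAT as defined earlier in the paper. If those functions are restricted (e.g.\ to piecewise-linear maps, which cannot square), I would fall back on an order-based variant that avoids multiplication — using two heads to fetch the nearest earlier value from below and from above and testing equality against $d_i$ — at the cost of invoking the order of $\Q$. A secondary point to pin down is the exact automaton model against which ``non-regular'' is measured, so that the separation is stated for the strongest reasonable notion (orbit-finite / register recognizability); choosing $L_{\neq}$, which escapes even nondeterministic register automata, makes this separation as strong as possible and sharply highlights that the finiteness of $\ialphabet$ — the linchpin of the string-case argument that UHAT captures only star-free regular languages — is exactly what fails over data.
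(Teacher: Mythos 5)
Your non-regularity half is fine (indeed $L_{\neq}$ escapes both register and symbolic automata, and is an even stronger witness than needed), but the UHAT half has a genuine gap, and it is exactly the one you flagged as a ``realizability'' risk. In the paper's model the attention score is $\langle A\bm v_i, B\bm v_j\rangle$ with $A,B$ \emph{affine}, and everything between attention layers is a composition of affine maps and ReLU, hence a continuous piecewise-linear function. So your key map $d_j \mapsto (d_j,-d_j^2)$ is not expressible, and the score $-(d_i-d_j)^2$ is outside the model. Your fallback fails for a structural reason, not just a bookkeeping one: for a fixed query, the score is an affine functional of the key vector, whose entries are continuous piecewise-linear functions of $d_j$ with breakpoints \emph{independent of} $d_i$; such a function of $d_j$ cannot attain its maximum at ``the largest earlier value below $d_i$'' for arbitrary rational $d_i$, since that would require the peak location to track the query. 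On top of this, even granting the retrieval, the step ``set $\mathrm{flag}_i = 1$ precisely when $d_{j^\ast}=d_i$'' asks a feedforward block to compute the discontinuous indicator of exact equality, which continuous piecewise-linear maps cannot do; the paper's gadget for normalizing truth values to $\{0,1\}$ (its \cref{lem:bools}) crucially uses positional encoding, which is unavailable here. (Two smaller mismatches: the paper's ``past masking'' attends to \emph{future} positions $j>i$, not your causal $j\le i$, and acceptance is read off position $1$ via $\langle\bm t,\bm v_1'\rangle>0$, not at the last position -- both adaptable, but your write-up uses primitives the model does not literally provide.)

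The paper sidesteps all of this by choosing a language whose defining test is \emph{affine in the key}: $\mathsf{Double}$, requiring $2r_i<r_j$ for all $i<j$, is checked with the bilinear score $2\bm u_i[2]\cdot\bm u_j[1]-\bm u_j[2]$, where the appended end-marker (coordinate $\bm u_j[1]\in\{0,1\}$) makes the score $0$ at the last position; a position then attends to the marker iff all later values exceed twice its own, so the retrieved marker coordinate is already an exact $\{0,1\}$ flag \emph{by construction}, with no equality test or truth-value normalization ever needed, and a second layer aggregates the flags by attention. The moral is that with affine queries/keys, unique hard attention can decide order comparisons between data values but has no evident mechanism for equality detection; your choice of $L_{\neq}$ forces equality detection and therefore (at least with the mechanisms you propose, and plausibly at all) does not go through, whereas an order-based language like $\mathsf{Double}$ is what the model natively supports. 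To repair your proof you would have to either replace $L_{\neq}$ by an order-defined non-regular language or supply a genuinely new equality-detection mechanism within affine-score unique hard attention -- the latter being an open-ended obstacle your proposal does not overcome.
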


Finally, to better understand languages over data sequences recognizable by 
UHAT, \textbf{our third contribution} is to show how UHAT can recognize 
languages definable by the so-called
\emph{Locally Testable LTL} ($\ltltl$), which extends LTL with unary numerical
predicates and \emph{local arithmetic tests} for fixed-size windows over the
input sequence. Using $\ltltl$, we can see \emph{at a glance} what can be
expressed by UHAT over data sequences. For one, this includes the well-known 
\emph{Simple Moving Averages}. As another example, using $\ltltl$ it can be 
easily shown that UHAT
can capture linear recurrence sequences considered in the famous Skolem 
problem and 
discrete linear dynamical systems \citep{luca23,lipton22,skolem-survey}, i.e., 
sequences of the form
    $\bm x, A\bm x, \dots, A^n\bm x$
such that $n \ge 0$ is minimal with $\bm y A^n \bm x = 0$ 
where $\bm y \in \Q^{1 \times d}$ and $A \in \Q^{d \times d}$ are fixed and $\bm x \in \Q^d$.
\begin{theorem}\label{thm:ltl}
Every \ltltl-definable language is accepted by UHAT with positional encoding.
\end{theorem}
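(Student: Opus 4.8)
The plan is to prove Theorem~\ref{thm:ltl} by a compositional, structural induction on the syntax of \ltltl{} formulas, building up a UHAT (with positional encoding) for each subformula so that the truth value of the subformula at every position $i$ of the input data sequence is computed and stored in a dedicated coordinate of the hidden state at position $i$. Since \ltltl{} extends LTL with unary numerical predicates and local arithmetic tests over fixed-size windows, the base cases split into three kinds: (i) the atomic local arithmetic tests, (ii) the unary numerical predicates, and (iii) the propositional/boolean connectives, with the genuinely temporal work concentrated in the induction steps for the temporal operators. I would first fix the invariant precisely: after processing a formula $\varphi$, there is a coordinate whose value at position $i$ is $1$ if the suffix (or the relevant local context) starting at $i$ satisfies $\varphi$, and $0$ otherwise. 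Maintaining a clean Boolean encoding throughout is what makes the connectives trivial (they become affine combinations followed by a thresholding nonlinearity in the feedforward layer).

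For the base cases I would argue as follows. A local arithmetic test over a window of fixed size $k$ depends only on the data values at positions $i, i+1, \dots, i+k-1$; each such neighboring value can be fetched by a unique hard attention head that keys on the absolute position offset encoded in the positional encoding, so a constant number of heads suffices to gather the whole window into position $i$'s state, after which the arithmetic comparison is a fixed affine/threshold computation in the feedforward network. Unary numerical predicates depend only on the datum at the current position and are therefore handled directly by the feedforward layer without any attention. The Boolean combinations are, as noted, immediate. The design principle throughout is that the positional encoding lets attention simulate bounded, position-relative lookups exactly, which is the standard way UHAT recovers the ``local'' modalities of LTL.

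The heart of the argument is the inductive step for the temporal operators, which is where I expect the main obstacle. For an operator like $\mathsf{F}\varphi$ (``$\varphi$ holds somewhere in the future''), position $i$ must learn whether \emph{any} later position satisfies $\varphi$; the standard UHAT technique is to attend to the position maximizing (or, via the unique-minimum tie-breaking, canonicalizing) an attention score that is driven by the already-computed truth coordinate of $\varphi$, so that the head returns a nonzero value precisely when some future $\varphi$-position exists. The subtlety is that UHAT attention is \emph{unique hard} attention---it returns a single vector rather than an aggregate---so one must carefully arrange the attention scores (using the positional encoding to enforce the causal/masked direction and to break ties by minimal position) so that the existence of a satisfying future position is faithfully detected from the single returned vector. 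Handling \emph{until} ($\varphi\,\mathsf{U}\,\psi$) is more delicate, since it is not just an existence query but requires that $\varphi$ hold continuously up to the first $\psi$-position; the clean way is to attend to the \emph{nearest} future position satisfying $\psi$ (selected by a score that is maximized at, and the minimal position among, $\psi$-satisfying positions) and simultaneously verify via a second head that no $\neg\varphi$-position occurs strictly before that nearest $\psi$-position. Encoding ``nearest'' and ``no bad position in between'' using only unique-hard-attention scores and the positional encoding is the technically demanding part, and is precisely where I would spend the most care.

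Finally I would assemble the layers: each subformula contributes a bounded number of attention heads and one feedforward block, the depth of the resulting UHAT is bounded by the nesting depth (hence constant in the input length) of the \ltltl{} formula, and all widths are constant, so the construction yields a legitimate UHAT with positional encoding. The output/acceptance coordinate is read off from the truth coordinate of the top-level formula at the designated initial position. Correctness follows by unwinding the invariant through the induction, using the semantics of each temporal operator together with the verified behavior of the corresponding attention heads. I would expect that the only place requiring genuinely new ideas beyond the known LTL-to-UHAT simulation over strings is ensuring the base-case local arithmetic tests interact correctly with the positional-encoding-based window lookups, and that the masked/causal attention direction is consistent across all temporal operators.
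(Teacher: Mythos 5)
Your high-level plan (structural induction, a dedicated $\{0,1\}$ truth coordinate per position, positional-encoding-based neighbor lookups for the window atoms, depth bounded by formula size) matches the paper, but two of your steps fail as stated. First, your treatment of unary numerical predicates is wrong: in \ltltl, $\Theta$ is an arbitrary family $\theta_n\colon\{1,\dots,n\}\to\{0,1\}$ of predicates on \emph{positions} (possibly uncomputable), not a test on the datum at position $i$, so no fixed feedforward block can compute it ``from the datum at the current position.'' The paper injects it through the positional encoding itself, setting $p(i,n+1):=\theta_n(i)$; this is the one place where the arbitrariness of the positional encoding is essential. Second, and more fundamentally, your ``affine combinations followed by a thresholding nonlinearity in the feedforward layer'' does not exist in this model: the feedforward blocks are compositions of affine maps and ReLU, hence continuous, while the Booleanization mapping $x>0$ to $1$ and $x\le 0$ to $0$ is discontinuous. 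After evaluating an atom $\langle\bm a,\cdot\rangle+b>0$ you only have a value that is \emph{positive} iff the atom holds, and with such ``positive means true'' values both negation (via $1-x$) and the until construction break. The paper flags exactly this as a key technical challenge and resolves it with a hard-attention normalization gadget (\cref{lem:bools}): mark the end-of-input position with a fresh $1$-bit, use the attention score $\bm u_i[\ell]\cdot\bm u_j[1]$ so that position $i$ attends to the marked position iff $\bm u_i[\ell]>0$ (with ties resolved to position $1$, which carries a $0$), and read off the retrieved bit. This gadget is absent from your proposal, and without it your induction invariant cannot even be established at the base case.

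On until, your characterization (nearest future $\psi$-position, plus no $\neg\varphi$-position strictly before it) is semantically correct, but you leave its implementation open—it would need two retrievals plus a comparison of retrieved position indices. The paper's key trick, which you did not find, is the reformulation: $(\bar{\bm v},i)\models\varphi U\psi$ iff at the \emph{first} $j\ge i$ where $\neg\varphi\vee\psi$ holds, $\psi$ holds; the appended zero vector guarantees such a $j\le n+1$ always exists, with $\psi$-bit $0$ there, so nonexistence is handled for free. Computing the auxiliary coordinate $\min\{\bm u_j[m+1],1-\bm u_j[m+2]\}$ turns this into a single ``attend to the first position whose designated coordinate is $0$'' step (\ref{it:iclr-3} of \cref{lem:iclr})—which again requires exact $\{0,1\}$ bits, a second reason the normalization gadget is indispensable. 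Note also that the paper uses no masking here (the future-directed lookups are realized by positional-encoding-based scores as in \cref{lem:iclr}), and the boundary semantics you gloss over—an atom with $k$ lookaheads is false when $i+k>n$, and $X\varphi$ is false at position $n$—are handled through the end marker via \ref{it:iclr-1} of \cref{lem:iclr}.
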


\OMIT{
In particular, this would allow
understanding the expressive power of transformers, when the input sequence is a
sequence of \emph{data} (in particular, numbers), which are relevant in man
}
\OMIT{
As previously mentioned, transformers have been successfully applied not only to
textual data, but also to other types of data, including visual data, audio
data, time series, and more generally sequences of (tuples of) numbers. For this
reason, we would like to understand what can be learned by transformers on 
such forms of data; in particular, \emph{are the recent insights through formal
language theory valid here?} To answer this question, we observe that

This is a by-product that the \emph{token embedding function} allowed in
the formal language theoretic approach to transformers is of the form $f: 
\ialphabet \to \R^d$, where $\ialphabet$

we note that one underlying assumption in the formal
language theoretic approach to transformers is the

recall the standard assumption in formal language 
theory that the elements allowed in the input sequence range over a 
\emph{finite} set $\ialphabet$ (called alphabet). 
In turn, 
the \emph{token embedding function} allowed in the formal language 
theoretic approach to transformers is of the form
$f: \ialphabet \to \R^d$,

which means that
only a \emph{finite} number of tokens is allowed. 
%
}







\textbf{Technical challenges.} Obtaining our results poses several challenges.
First, for the $\TCzero$ upper bound, we need to use Boolean (constant depth)
circuits to simulate UHATs, in which real constants can occur (in affine
transformations or positional encodings).  While in $\TCzero$, it is known that
majority gates can be used to perform multiplication of
rationals~\cite{chandra1984constant}, arithmetic with reals requires infinite
precision and cannot be done with Boolean circuits. To this end, we compute
rational approximations of reals accurate enough to preserve the
acceptance condition for inputs up to a particular length $n$.

Here, a naive attempt would be to replace each real occurring in the UHAT in
affine transformations and the positional encoding by some rational
approximation.  However, this is not possible, meaning \emph{any} rational
approximation would change the behavior on input sequences of length $n=3$,
even for low-dimensional vectors with entries in $\{0,1\}$. Indeed, there is a
UHAT involving real numbers $\alpha,\beta$ that accepts a simple sequence of
$\{0,1\}$-vectors if and only if $\alpha\beta=2$ and $\alpha=\beta$, i.e.\
$\alpha=\beta=\sqrt{2}$. Thus, $\alpha$ and $\beta$ cannot be replaced by
rationals, even for very short inputs (see
\cref{appendix-example-real-parameters} for details).

Instead, we show that a UHAT can be translated into a small Boolean combination
of polynomial inequalities. This format has the advantage that---as we show
using convex geometry---the real coefficients of those polynomials
\emph{can} be replaced by suitably chosen rational numbers. In turn, the
layer-by-layer construction of these polynomial inequalities requires a
carefully chosen data structure to encode the function computed by a sequence
of transformer layers. For example, we show that the resulting Boolean
combinations of polynomial inequalities have a bound on the number of
alternations between conjunctions and disjunctions, which is crucial for
constructing $\TCzero$-circuits.

Another key challenge occurs in the translation from \ltltl\ to UHATs: In the
inductive construction, we need to represent truth values using reals in
$[0,1]$. To implement negation, we use a UHAT gadget (with positional encodings)
that can normalize these
truth values to $\{0,1\}$.

\textbf{Notation.} In the sequel, we assume some background from
computational complexity, in particular circuit complexity (see the book
\cite{Vollmer-book}). In particular, we use the circuit complexity class 
$\ACzero$, which defines a class of problems that are computable by a nonuniform
family of constant-depth boolean circuits, where each gate permits an unbounded
fan-in (i.e. arbitrary many inputs). Similarly, the complexity class $\TCzero$
is an extension of $\ACzero$, where majority gates are allowed. It is well-known
that $\ACzero \subsetneq \TCzero$. Assuming \emph{uniformity}, both $\ACzero$
and $\TCzero$ are contained in the class of problems solvable in
polynomial-time. For \emph{nonuniformity}, these classes are contained in the
complexity class $\text{P}/\text{poly}$, which admits (nonuniform) 
polynomial-size circuits. It is a long-standing open problem whether numerical
analysis (e.g. square-root-sum) is in $\text{P}/\text{poly}$, e.g., see
\cite{ABKM06}.

\section{Transformer encoders and their languages}
\label{sec:prelims}

In the following, we adapt the setting of formal language theory
(see \cite{transformers-survey,Hao22,barcelo2024logical}) to data sequences.
For a vector $\bm a = (a_1,\dots,a_d)$ we write $\bm a[i,j] := (a_i,\dots,a_j)$ for all $1\le i \le j \le d$
and if $i = j$, we simply write $\bm a[i]$.
For a set $S$ we denote the set of (potentially empty) sequences of elements from $S$ by $S^*$.
We write $S^+$ for the restriction to non-empty sequences. 
We consider languages $L$ over the infinite alphabet $\Sigma = \Q^d$, for some
integer $d > 0$. That is, $L$ is a set of sequences of $d$-tuples over rational
numbers. We will define a UHAT (similarly as in previous papers that study formal
language theoretic perspectives) as a length preserving map $(\Q^d)^* \to (\R^e)^*$.

\textbf{Standard encoder layer with unique hard attention.}
A standard encoder layer is defined by three affine transformations $A,B \colon \R^d \to \R^d$ and $C \colon \R^{2d} \to \R^e$.
For a sequence $\bm v_1,\dots,\bm v_n \in \R^d$ with $n \ge 1$ we define the \emph{attention vector} at position $i \in [1,n]$ as
$\bm a_i := \bm v_j$
with $j \in [1,n]$ minimal such that the \emph{attention score} 
$\langle A\bm v_i, B\bm v_j \rangle$ is maximized.
The layer outputs the sequence $C(\bm v_1,\bm a_1),\dots,C(\bm v_n,\bm a_n)$.

\textbf{ReLU encoder layer.}
A ReLU layer for some $k \in [1,d]$ on input $\bm v_1,\dots,\bm v_n \in \R^d$ 
applies the ReLU function to the $k$-th coordinate of each $\bm v_i$, i.e.
it outputs the sequence $\bm v'_1,\dots,\bm v'_n$ where $\bm v'_i := (\bm
v_i[1,k-1],\max\{0,\bm v_i[k]\},\bm v_i[k+1,n])$. [Equivalently, one could
instead allow a feed-forward network at the end of an encoder layer (see
\cite{transformers-survey,Hao22,barcelo2024logical}).]

\textbf{Transformer encoder.}
A \emph{unique hard attention transformer encoder} (UHAT) is a repeated application of 
standard encoder layers with unique hard attention and ReLU encoder layers.
Clearly, using an alternation of standard layers and ReLU layers, we can assume that the output of a UHAT layer 
is an arbitrary composition of affine transformations and component-wise ReLU application.
In particular, these compositions may use the functions $\max$ and $\min$.

\textbf{Languages accepted by UHATs.}
The notion of ``languages'' accepted by a UHAT (i.e. a set of accepted
sequences) can be defined, depending on whether a \emph{positional encoding} is
permitted. 
If it is permitted,
a language $L \subseteq (\Q^d)^+$ is accepted by a UHAT $T$ if and only if
there exists a positional encoding function $p \colon \N \times \N \to \R^s$ and an acceptance vector $\bm t \in \R^e$ such that
on every sequence 
\begin{equation}\label{eq:input}
(p(1,n+1),\bm v_1),\dots,(p(n,n+1),\bm v_n), (p(n+1,n+1),\bm 0)
\end{equation}
$T$ outputs a sequence $\bm v'_1,\dots,\bm v'_{n+1} \in \R^e$ with
$\langle \bm t,\bm v'_1 \rangle > 0$ if and only if $(\bm v_1,\dots,\bm v_n) \in L$.
Note that if $T_1$ and $T_2$ are UHATs with positional encoding that realize functions
$f_1 \colon (\Q^d)^* \to (\Q^e)^*$ and $f_2 \colon (\Q^e)^* \to (\Q^r)^*$, 
then there is a UHAT $T_2 \circ T_1$ with positional encoding that realizes the composition $f_2 \circ f_1$
by using a positional encoding that combines the positional encodings of $T_1,T_2$.

In the above definition we appended an additional zero vector to the end of the input.
Over finite alphabets it is often assumed that the input sequence is extended
with a special unique end-of-input marker (e.g. see
\cite{transformers-survey,Hao22}).
When the input is a sequence of (tuples of) numbers, if we allow positional 
encoding, then the zero vector at the end of the input can be turned into a 
unique vector marking the end of the input (see \cref{sec:ltl}).
Without positional encoding, however, we have to explicitly make the zero vector at the end of the input unique.
That is, a UHAT without positional encoding is initialized with the sequence
$(1,\bm v_1),\dots,(1,\bm v_n),\bm 0 \in \Q^{d+1}$ instead; this ensures, among
others, that the end-of-input marker does not appear in the actual input.

In the definition of a standard encoder layer the attention vector at position $i \in [1,n]$ can be any vector in the sequence
$\bm v_1,\dots,\bm v_n$.
Using \emph{masking}, one can restrict the attention vector to vectors of certain positions.
A UHAT with \emph{past masking} restricts the attention vector $\bm a_i$ at position $1 \le i < n$ 
to be contained in the subsequence $\bm v_{i+1},\dots,\bm v_n$ and at position $n$ to $\bm a_n := \bm v_n$.

\section{UHAT and $\TCzero$}\label{sec:tc0}
In this section, we prove \cref{main-result-tc0}. First, we show that all languages of UHAT (even with positional encoding) belong to the class $\TCzero$. Then, we show that there is a UHAT (even without positional encoding) whose language is $\TCzero$-hard under $\ACzero$-reductions. We begin with the proof that all UHAT languages belong to $\TCzero$.

\paragraph{Input encoding} A language $L\subseteq\Sigma^*$ over a finite
alphabet $\Sigma$ belongs to $\TCzero$ if for every input length $n$, there
is a circuit of size polynomial $n$, such that the circuit consists
of input gates (for each input position $i$, and each letter $a\in\Sigma$,
there is a gate that evaluates to ``true'' if and only if position $i$ of the
input words carries an $a$), Boolean gates (computing the AND, OR, or NOT
function) and majority gates (evaluating to true if more than half of their
input wires carry true). Here, AND, OR, and majority gates can
have arbitrary fan-in. In order to define what it means that a language
$L\subseteq (\Q^d)^+$ belongs to $\TCzero$, we need to specify an encoding as
finite-alphabet words. To this end, we encode a sequence $\bm
u_1,\ldots,\bm{u}_n$ with $\bm u_i\in\Q^d$ as a string $v_1\#\cdots\#v_n$,
where $v_i=p_1/q_1\square\cdots\square p_d/q_d$ with $p_j,q_j\in\{\mathord{-},0,1\}^*$. Here,
$v_i\in\{\mathord{-},\mathord{/},\square,0,1\}^*$ represents the vector $\bm u_i\in\Q^d$ such that $\bm
u_i[j]=\tfrac{a_{j}}{b_{j}}$, $a_j,b_j\in\Z$, and $p_j,q_j\in\{\mathord{-},0,1\}^*$ are the binary expansions
of $a_j$ and $b_j$.

\begin{remark}
The main challenge in proving \cref{main-result-tc0} is that the constants
appearing in a UHAT can be real numbers. These can in general not be avoided:
There are UHAT languages over $\Sigma=\Q$ (even without positional encoding)
that cannot be accepted by UHAT with rational constants (even with positional
encoding). For example, for every real number $r>0$, one can\footnote{First,
use a standard encoding layer to transform the sequence
$(x_1,\ldots,x_n)\in\Sigma^n$ into
$(r^{-1}x_1-1,\ldots,r^{-1}x_n-1)\in\Sigma^n$. Then, accept if and only if the
left-most vector is $>0$.} construct a UHAT for the language $L_r$ of all
sequences over $\Sigma=\Q$ where the first letter is $>r$. Note that $L_r\ne L_{s}$
for any $r,s>0$, $r\ne s$.  However, there are clearly only countably many
languages over $\Sigma=\Q$ accepted by UHAT with rational constants where
membership only depends on the first letter\footnote{This is because all
employed affine transformations (which are rational matrices), but also the
values $p(1,2), p(2,2)\in \Q$ of the positional encoding can be described using
finitely many bits.}. Thus, there are uncountably many real $r>0$ such that
$L_r$ is not accepted by a UHAT with rational constants.
\end{remark}

The construction of $\TCzero$ circuits comprises three steps. In Step~I, we show that the set of accepted length-$n$ sequences can be represented by a Boolean combination of polynomial inequalities. Importantly, (i)~this representation, called ``polynomial constraints'' is polynomial-sized in $n$, and (ii)~the number of alternations between conjunction and disjunction is bounded (i.e.~independent of $n$). The polynomials in this representations can still contain real coefficients. 
In Step~II, we show that if we restrict the input not only to length-$n$ sequences, but to rational numbers of size $\le m$, then we can replace all real coefficients of our polynomials by rationals of size polynomial in $m$ and $n$, without changing the language (among vectors of size $\le m$). 
%
In Step~III, we implement a $\TCzero$ circuit. Here, it is important that the number of alternations between conjunctions and disjunctions in our polynomial constraints is bounded, because the depth of the circuit is proportional to this number of alternations.

\newcommand{\preplong}{polynomial UHAT representation} 
\newcommand{\prep}{PUR} 
\newcommand{\setof}[1]{\llbracket #1\rrbracket} 
\newcommand{\normof}[1]{\|#1\|_2} 

\paragraph{Step I: UHAT as polynomials}
We first consider a formalism to describe a set of sequences over $\Q^d$. We consider such sequences $(\bx_1,\ldots,\bx_n)$ of length $n$, where $\bx_i\in\Q^d$ for each $i$. In this case, we also abbreviate $\bar{\bx}=(\bx_1,\ldots,\bx_n)$. 
A \emph{polynomial constraint} (PC) is a positive Boolean combination (i.e., without negation) of constraints
of the form $p(\bar{\bx})>0$ or $p(\bar{\bx})\ge 0$, where
$p\in\R[X_1,\ldots,X_{d\cdot n}]$ is a polynomial with real coefficients. Here,
plugging $\bar{\bx}\in (\Q^d)^n$ into $p$ is defined by assigning the $d\cdot
n$ rational numbers in $\bar{\bx}$ to the $d\cdot n$ variables
$X_1,\ldots,X_{d\cdot n}$. The PC $\alpha$ \emph{accepts} a sequence of vectors $\bar{\bx}\in(\Q^d)^n$, if the
Boolean formula evaluates to true when plugging $\bar{\bx}$ into the polynomials $p$ in $\alpha$.
The set of accepted sequences is denoted by $\setof{\alpha}$.
Now let $a\in\N$. A PC
has \emph{$a$ alternations} if the positive Boolean combination has
$a$ alternations between disjunctions and conjunctions.

In the following, a \emph{constrained polynomial representation (CPR)} can be used to
compute from a sequence of inputs $(\bx_1,\dots,\bx_n)$ with $\bx_1,\dots,\bx_n\in\R^{d'}$ a new
sequence of outputs $(\by_1,\dots,\by_n)$ with $\by_1,\dots,\by_n\in\R^d$. Formally,
a CPR comprises for each $i\in\{1,\ldots,n\}$ a sequence
$(\varphi_1,D_1),\ldots,(\varphi_{s_i},D_{s_i})$ of pairs $(\varphi_j,D_j)$, where
each pair $(\varphi_j,D_j)$ is a ``conditional assignment'': each $(\varphi_j,D_j)$ tells us that if the condition $\varphi_j$ is satisfied, then we return $D_j(\bar{\bx})$. More precisely:
(i)~each $\varphi_j$ is a polynomial constraint where all polynomials have degree $\le 2$, 
(ii)~for any $j\ne m$, the constraints $\varphi_j$ and $\varphi_m$ are mutually exclusive, and
(iii)~each $D_j\colon \R^{d'\cdot n}\to\R^{d}$ is an affine transformation.
Because of their role as conditional assignments, we also write $\varphi_j\to D_j$ for such pairs.
For $a\in\N$, we say that the CPR is \emph{$a$-alternation-bounded} if each of the
formulas $\varphi_j$ has at most $a$ alternations. A CPR as above computes a
function $\R^{d'\cdot n}\to\R^{d\cdot n}$: Given
$\bar{\bx}=(\bx_1,\ldots,\bx_n)$ with $\bx_1,\ldots,\bx_n\in\R^{d'}$, it computes
the sequence $(\by_1,\ldots,\by_n)$ if for every $i\in\{1,\ldots,n\}$, we have
$\by_i=D_j(\bar{\bx})$, provided that $j$ is the (in case of existence uniquely
determined) index for which $\varphi_j(\bar{\bx})$ is satisfied. The \emph{size} of PCs and CPRs are their bit lengths (see \cref{appendix-tc0} for details).
\begin{proposition}\label{prop:uhatToPC}
	Fix a UHAT with positional encoding and $\ell$ layers. For any given sequence length $n$, 
	there exists a polynomial-sized PC $\alpha$ with $O(\ell)$ alternations such that $\setof{\alpha}$ equals the set of accepted sequences of length $n$.
\end{proposition}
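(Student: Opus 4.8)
The plan is to process the UHAT one layer at a time, maintaining the invariant that the map computed by the first $t$ layers is realized by an $O(t)$-alternation-bounded CPR of size polynomial in $n$, whose guard polynomials have degree $\le 2$ and whose assignments are affine in the data $\bar{\bx}=(\bx_1,\dots,\bx_n)$. Write $N=n+1$ for the internal length after appending the end marker. For the base case I fold the positional encoding and the marker into the affine maps: at each internal position $i$ I use a single conditional assignment $(\mathrm{true},D_i)$ with $D_i(\bar\bx)=(p(i,N),\bx_i)$ for $i\le n$ and $D_i(\bar\bx)=(p(i,N),\bm 0)$ for $i=N$. For fixed $n$ the values $p(i,N)$ are real constants, so each $D_i$ is affine (and the polynomials may legitimately carry real coefficients, to be rationalized only in Step~II). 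Once the two inductive steps below are established, the final acceptance condition $\langle\bt,\by_1\rangle>0$ is read off the last CPR as the PC $\bigvee_j(\varphi_j\wedge\langle\bt,D_j(\bar\bx)\rangle>0)$ ranging over the conditional assignments at internal position $1$; this is a positive Boolean combination of degree-$\le 2$ (indeed degree-$1$) inequalities with $O(\ell)$ alternations, and $\setof{\cdot}$ of it is exactly the set of accepted length-$n$ sequences.

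The crux is the attention layer. Suppose the incoming CPR has, at each position $i$, mutually exclusive and exhaustive guards $\varphi^i_1,\dots,\varphi^i_s$ with affine assignments $D^i_1,\dots,D^i_s$. For positions $i,j$ and guard indices $k,k'$ set $\sigma(i,k;j,k'):=\langle A\,D^i_k(\bar\bx),\,B\,D^j_{k'}(\bar\bx)\rangle$, which is a polynomial of degree exactly $2$ (an inner product of two affine maps). For the output at position $i$ I enumerate all triples $(j^*,k_i,k_{j^*})$, where $j^*\in\{1,\dots,N\}$ is the intended attention target and $k_i,k_{j^*}$ are the active guards at $i$ and $j^*$; under such a triple the output is the affine map $C\bigl(D^i_{k_i}(\bar\bx),D^{j^*}_{k_{j^*}}(\bar\bx)\bigr)$. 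The guard asserting that $j^*$ is the unique minimal maximizer is
\[
\varphi^i_{k_i}\wedge \varphi^{j^*}_{k_{j^*}}\wedge\bigwedge_{j<j^*}\Theta_j^{>}\wedge\bigwedge_{j>j^*}\Theta_j^{\ge},
\]
where, for $\bowtie\in\{>,\ge\}$, the local subformula $\Theta_j^{\bowtie}:=\bigvee_{k}\bigl(\varphi^j_k\wedge\;\sigma(i,k_i;j^*,k_{j^*})\bowtie\sigma(i,k_i;j,k)\bigr)$ resolves the a-priori-unknown active guard at the competing position $j$ by a disjunction over its $s$ possibilities. Since exactly one $\varphi^j_k$ holds, $\Theta_j^{\bowtie}$ is equivalent to comparing the true scores; the strict inequalities for $j<j^*$ and non-strict ones for $j>j^*$ encode exactly that $j^*$ is the least argmax. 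All polynomials occurring are the degree-$\le 2$ guards and the degree-$2$ score differences, and guards are only combined by $\wedge,\vee$ and never multiplied, so the degree bound is preserved; strict negations $D[k]<0$ are written as $-D[k]>0$, so the result stays a positive Boolean combination. Mutual exclusivity of the new guards follows from that of the $\varphi^i_\bullet$ and $\varphi^{j^*}_\bullet$ together with uniqueness of the minimal maximizer (which forbids two distinct targets), and exhaustiveness is inherited because the layer computes a total function. Counting: the guards per position grow from $s$ to $O(Ns^2)$, each new guard has size $O(Ns\cdot\max_{j,k}|\varphi^j_k|)$, and the alternations increase by an additive constant.

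A ReLU layer on coordinate $k$ is handled by splitting each assignment $(\varphi,D)$ into $(\varphi\wedge D[k]\ge 0,\,D)$ and $\bigl(\varphi\wedge(-D[k]>0),\,D'\bigr)$, where $D'$ zeroes coordinate $k$; this keeps assignments affine, doubles the guard count, adds one degree-$1$ atom, and raises alternations by $O(1)$. Because the UHAT has a fixed number $\ell$ of layers, iterating the two constructions multiplies the per-position guard count by only a polynomial factor each time, so after $\ell$ layers it is $n^{O(1)}$ with each guard of polynomial size, while the alternation count telescopes to $O(\ell)$. I expect the main obstacle to be exactly this composition bookkeeping: the naive encoding of ``$j^*$ is the global argmax'' would branch on the active guard of all $N$ positions at once, incurring an $s^N$ blow-up. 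The CPR data structure circumvents this through three properties it is designed to preserve: (a)~guards stay mutually exclusive, so the active branch at each competing position is resolved by the \emph{local} disjunction $\bigvee_k(\varphi^j_k\wedge\cdots)$ rather than a global product; (b)~assignments stay affine, so scores stay degree~$2$; and (c)~guard polynomials are never multiplied, so the degree never exceeds $2$ and alternations grow only additively per layer. Together these yield the polynomial-sized PC with $O(\ell)$ alternations claimed in the statement.
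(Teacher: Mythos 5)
Your proposal is correct and follows essentially the same route as the paper: it is the layer-by-layer induction of \cref{construct-cpr}, maintaining mutually exclusive degree-$\le 2$ guards with affine assignments, encoding the minimal argmax via strict inequalities before the target and non-strict ones after, resolving each competing position's active guard by a local disjunction, splitting on the sign for ReLU layers, and reading off the final PC as $\bigvee_J(\varphi_{1,J}\wedge\langle\bt,D_{1,J}\bar{\bx}\rangle>0)$. The only differences are cosmetic (e.g.\ your bookkeeping of guard counts and sizes is slightly more explicit than the paper's).
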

Note that \cref{prop:uhatToPC} implies that the set of sequences of each length $n$ is a semialgebraic set~\cite{Mishra93}.
The proof is by induction on the number of layers, which requires a slight strengthening:
\begin{lemma}\label{construct-cpr}
	Fix a UHAT with positional encoding and $\ell$ layers. For any given sequence length $n$, one can construct in
	polynomial time an $O(\ell)$-alternation-bounded CPR computing the function $\R^{d\cdot (n+1)}\to\R^{e\cdot (n+1)}$
	computed by the UHAT.
\end{lemma}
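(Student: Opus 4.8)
The plan is to prove the lemma by induction on the number of layers $\ell$, maintaining throughout a stronger invariant: the constructed CPR has (a)~every assignment map $D_j$ affine in $\bar{\bx}$ (degree $1$); (b)~every condition $\varphi_j$ a positive Boolean combination of polynomial inequalities of degree $\le 2$; and (c)~for each position $i$ the conditions $\varphi^{(i)}_1,\ldots,\varphi^{(i)}_{s_i}$ not merely mutually exclusive but in fact a \emph{partition} of $\R^{d\cdot(n+1)}$, with total size polynomial in $n$. The key structural point that keeps the degree bound sustainable is that the stored outputs $D_j$ always stay affine: a degree-$2$ polynomial arises only in the \emph{conditions}, never in the output vectors, so the degree cannot accumulate across layers. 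For the base case $\ell=0$ the UHAT merely appends the fixed positional-encoding vectors $p(i,n+1)$; for fixed $n$ these are constants, so this map is affine and we take the single assignment $\mathsf{true}\to D$, which has $0$ alternations.

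For the inductive step I treat the atomic layer operations. An affine transformation composes with each affine $D_j$ and leaves the conditions untouched, adding no alternations. A ReLU on coordinate $k$ is handled position-by-position: each assignment $\varphi^{(i)}_j\to D^{(i)}_j$ is split into $\bigl(\varphi^{(i)}_j\wedge(D^{(i)}_j(\bar{\bx})[k]\ge 0)\bigr)\to D^{(i)}_j$ and $\bigl(\varphi^{(i)}_j\wedge(-D^{(i)}_j(\bar{\bx})[k]> 0)\bigr)\to \tilde D^{(i)}_j$, where $\tilde D^{(i)}_j$ zeroes coordinate $k$. Since $D^{(i)}_j$ is affine the new atom has degree $1$; the number of assignments at most doubles, the partition property is preserved, and the alternation count grows by $O(1)$.

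The genuinely delicate case is the standard attention layer, where I expect the main obstacle. Let $\bm v^{(i)}$ be the previous output at position $i$, so $\bm v^{(i)}=D^{(i)}_a(\bar{\bx})$ on the piece $\varphi^{(i)}_a$. The new output is $C(\bm v^{(i)},\bm a_i)$ with $\bm a_i=\bm v^{(j)}$ for the minimal $j$ maximizing $\langle A\bm v^{(i)},B\bm v^{(j)}\rangle$. The obstacle is that this score is bilinear in two \emph{piecewise}-affine vectors, so expressing it as a fixed degree-$2$ polynomial naively seems to require fixing the active piece at all $n+1$ positions simultaneously, an exponential blow-up. I avoid this by introducing, for each query $i$, one assignment per triple $(a,j,b)$ (piece $a$ of $i$, target $j$, piece $b$ of $j$), with condition
\[
\psi_{i,a,j,b}\;=\;\varphi^{(i)}_a\wedge\varphi^{(j)}_b\wedge\bigwedge_{k\neq j}\ \bigvee_{c}\Bigl(\varphi^{(k)}_c\wedge\sigma^{(i,j,k)}_{a,b,c}\Bigr),
\]
where $\sigma^{(i,j,k)}_{a,b,c}$ is the single degree-$2$ comparison asserting that $\langle AD^{(i)}_a(\bar{\bx}),BD^{(j)}_b(\bar{\bx})\rangle$ is $\ge$ (for $k>j$) resp.\ $>$ (for $k<j$) the score $\langle AD^{(i)}_a(\bar{\bx}),BD^{(k)}_c(\bar{\bx})\rangle$, and the output is the affine map $D_{i,a,j,b}(\bar{\bx})=C\bigl(D^{(i)}_a(\bar{\bx}),D^{(j)}_b(\bar{\bx})\bigr)$. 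The point is that each pairwise comparison against $k$ carries its \emph{own} inner disjunction over $k$'s piece $c$; since the previous conditions partition the space, on any input exactly one $\varphi^{(k)}_c$ holds, the disjunction selects the correct degree-$2$ comparison, and consistency of the piece chosen for $i$ (and for $j$) across the different $k$ is automatic from mutual exclusivity. This bounds the number of new assignments at $i$ by $O(s_i\cdot(n+1)\cdot S)$ with $S=\max_k s_k$, and every comparison polynomial has degree $\le 2$.

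It remains to verify exclusivity, coverage, alternations, and size. Exclusivity across different $a$ or $b$ follows from exclusivity of $\varphi^{(i)}_a$ resp.\ $\varphi^{(j)}_b$, and across different $j$ because the strict/non-strict pattern encodes ``$j$ is the unique minimal maximizer,'' which at most one $j$ satisfies; coverage holds since a minimal maximizer always exists and each position lies in exactly one previous piece. For alternations, $\psi_{i,a,j,b}$ wraps the inductively alternation-bounded $\varphi$'s in a constant number of extra $\wedge/\vee$ levels, so each layer adds only $O(1)$ alternations, giving $O(\ell)$ in total. For size, each attention layer sends the per-position assignment count $s$ to $O\bigl((n+1)s^2\bigr)$ and inflates each condition's size by a comparable factor; since $\ell$ is a \emph{fixed} constant this yields size $n^{O(2^\ell)}$, i.e.\ polynomial in $n$, and the whole symbolic construction runs in polynomial time, completing the induction.
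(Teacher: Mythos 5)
Your proposal is correct and follows essentially the same route as the paper's proof: induction on layers with conditional assignments indexed by the query's piece, the attended position, and its piece, an inner disjunction over the compared position's piece carrying a degree-$2$ score comparison, the strict/non-strict asymmetry pinning down the minimal maximizer, and the identical ReLU splitting. Your explicit partition invariant and the size recursion giving $n^{O(2^\ell)}$ are slightly more detailed than the paper's presentation but add nothing structurally different.
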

\begin{proof}
  We prove the statement by induction on the number of layers. First, we consider the positional encoding $p\colon\N\times\N\to\R^d$ as some affine transformations $P_i\colon\R^{d\cdot (n+1)}\to\R^d$ mapping the input sequence $\bar{\bx}$ to $\bx_i+p(i,n+1)$. Then we obtain a CPR with $\top\to P_i$ for each $1\leq i\leq n+1$. Now, suppose the statement is shown for $\ell$ layers
	and consider a UHAT with $\ell+1$ layers. Suppose that the first $\ell$ layers of our UHAT compute a function $\R^{d'\cdot (n+1)}\to\R^{d\cdot (n+1)}$, and the last layer computes a function $\R^{d\cdot (n+1)}\to\R^{e\cdot (n+1)}$.
	By induction, we have a polynomial size CPR consisting of conditional assignments $\varphi_{i,k}\to D_{i,k}$ for every $i\in\{1,\ldots,n+1\}$ and $1\le k\le s_i$. Here, each $D_{i,j}$ is an affine transformation $\R^{d'\cdot (n+1)}\to\R^d$.

  Let us first consider the case that the last layer of our UHAT is a standard encoder layer.
	For each $(i,I,j,J)\in\{1,\ldots,n+1\}^4$, we build the conditional assignment using the formula $\psi_{i,I,j,J}$:
\begin{align*}
\bigwedge_{m=1}^{j-1} \left(
\bigvee_{M=1}^{s_m} \varphi_{m,M}\wedge p_{i,I,j,J,m,M}(\bar{\bx})>0
\right) \wedge
	\bigwedge_{m=j+1}^{n+1} \left(
\bigvee_{M=1}^{s_m} \varphi_{m,M}\wedge p_{i,I,j,J,m,M}(\bar{\bx})\ge 0
\right)
\end{align*}
where $p_{i,I,j,J,m,M}(\bar{\bx})$ is the polynomial $\langle AD_{i,I}\bar{\bx},~BD_{j,J}\bar{\bx}-BD_{m,M}\bar{\bx}\rangle$. Then, the conditional assignment is $\varphi_{i,I}\wedge\varphi_{j,J}\wedge\psi_{i,I,j,J}\to C(D_{i,I}\bar{\bx},D_{j,J}\bar{\bx})$. Here, the idea is that (i)~$\varphi_{i,I}$ expresses that the $I$-th conditional assignment was used to produce the $i$-th cell in the previous layer, (ii) $\varphi_{j,J}$ says the $J$-th conditional assignment was used to produce the $j$-th vector in the previous layer, and (iii)~$\psi_{i,I,j,J}$ says the vector $\bx_j$ yields the maximal attention score for the input $\bx_i$, meaning (iii-a) for all positions $m<j$, $\bx_m$ has a lower score than $\bx_j$ (left parenthesis), and (iii-b) for all positions $m>j$, $\bx_m$ has at most the score of $\bx_j$ (right parenthesis). In (iii-a) and (iii-b), we first find the index $M$ of the conditional assignment used to produce the $m$-th cell in the previous layer.
	Note that then indeed, all the PCs $\psi_{i,I,j,I}$ are mutually
	exclusive. Moreover, the polynomials $\langle
	AD_{i,I}\bar{\bx},~BD_{j,J}\bar{\bx}-BD_{m,M}\bar{\bx}\rangle$ have indeed degree $2$
	and are of size polynomial in $n$.
	Furthermore, if the assignments $\varphi_{i,k}\to D_{i,k}$ had at
	most $a$ alternations, then the new assignments have at most $a+3$
	alternations.
Finally, the case of ReLU layers is straightforward (see \cref{appendix-tc0}).
\end{proof}

Finally, the proof of \cref{prop:uhatToPC} is straightforward: from the 
constructed CPR in \cref{construct-cpr} we obtain the polynomial constraint
$\bigvee_{J=1}^{s_1} \varphi_{1,J}\wedge \langle \bt,D_{1,J}\bar{\bx}\rangle>0$
with a bounded number of alternations.

\paragraph{Step II: Replace real coefficients by rationals}
In our proof, the key step is to replace the real coefficients in the PC by rational coefficients so that the rational PC will define the same set of rational sequences, up to some given size. Let us make this precise. We denote by $\Q_{\le m}=\{a\in\Q \mid \normof{a}\le m\}$ the set of all rational numbers of size at most $m$. A polynomial constraint is \emph{rational} if all the polynomials occurring in it have rational coefficients.
\begin{proposition}\label{rational-pc}
	For every $m\in\N$ and every PC $\alpha$ with polynomials having $n$ variables, there exists a
	rational PC $\alpha'$ of polynomial size such that $\setof{\alpha}\cap
	\Q_{\le m}^{n}=\setof{\alpha'}\cap\Q_{\le m}^{n}$.
\end{proposition}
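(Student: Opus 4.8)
The plan is to keep the Boolean skeleton of $\alpha$ fixed and to rationalize the coefficients of each polynomial separately, exploiting that $\Q_{\le m}^{n}$ is \emph{finite}. Since $\alpha$ is a positive Boolean combination of atoms $p(\bar{\bx})>0$ and $p(\bar{\bx})\ge 0$, it suffices to produce, for every polynomial $p$ occurring in $\alpha$, a rational polynomial $p'$ (with the same monomials as $p$) that has the same sign as $p$ at every point of $\Q_{\le m}^{n}$, i.e.\ $p'(\bar{\bx})$ is positive, zero, or negative exactly when $p(\bar{\bx})$ is. Substituting each $p'$ into the unchanged Boolean combination then preserves the truth value of every atom on $\Q_{\le m}^{n}$ (three-way sign agreement handles $>0$ and $\ge 0$ simultaneously), so the resulting rational PC $\alpha'$ satisfies $\setof{\alpha}\cap\Q_{\le m}^{n}=\setof{\alpha'}\cap\Q_{\le m}^{n}$. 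This per-polynomial rationalization is possible only because the grid is finite: over an unbounded domain one cannot in general match the sign of, say, $X_1-\sqrt{2}$ with any rational polynomial of the same shape.

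Fix one polynomial $p=\sum_i c_i M_i$ with real coefficient vector $c=(c_i)$ ranging over its monomials $M_i$, and write $v(\bar{\bx})=(M_i(\bar{\bx}))_i$, so that $p(\bar{\bx})=\langle c,v(\bar{\bx})\rangle$ and each $v(\bar{\bx})$ is a rational vector of small bit-size. Finding a suitable $p'$ amounts to finding a rational vector $c'$ satisfying the linear system $\langle c',v(\bar{\bx})\rangle=0$ for all $\bar{\bx}\in\Q_{\le m}^{n}$ with $p(\bar{\bx})=0$, together with $\langle c',v(\bar{\bx})\rangle>0$ where $p(\bar{\bx})>0$ and $\langle c',v(\bar{\bx})\rangle<0$ where $p(\bar{\bx})<0$. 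This is a finite system of linear equalities and strict inequalities with rational data, and it is feasible over $\R$ because $c$ is a solution.

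The convex-geometry core is then a standard rationalization-with-size-bound fact: a finite system of linear equalities and (strict) inequalities with rational coefficients that is solvable over $\R$ is solvable over $\Q$, by a solution whose bit-size is bounded polynomially in the bit-sizes of the coefficients and the ambient dimension. Concretely, the equalities cut out a rational linear subspace $L\ni c$; inside $L$ the strict inequalities describe the nonempty relative interior of a rational polyhedral cone, where rational points of controlled size are dense. One obtains $c'$ by taking a small-bit-size rational basis of $L$ and rounding $c$ to a rational point of $L$ fine enough that the finitely many strict inequalities, which hold with positive margin $\delta=\min\{|p(\bar{\bx})| : p(\bar{\bx})\ne 0\}$, remain satisfied. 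The decisive size observation is that, although $\Q_{\le m}^{n}$ contributes exponentially many constraints, the number of monomials (the dimension of $c'$) is at most the size of $\alpha$ and the entries of the $v(\bar{\bx})$ have bit-size polynomial in $m$; hence a relative-interior (vertex-type) solution has bit-size polynomial in $m$, $n$, and $\lvert\alpha\rvert$, independent of the number of constraints. Applying this to each of the polynomially many polynomials of $\alpha$ yields $\alpha'$ of polynomial size.

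I expect the main obstacle to be exactly this simultaneous control: obtaining a single rational $c'$ that meets the \emph{exact} equalities at the rational zeros of $p$ while keeping all strict inequalities, and bounding its bit-size polynomially despite exponentially many grid constraints. I would isolate this as a self-contained lemma on rational solvability and bit-size of mixed strict/equality rational linear systems; the reduction to low degree from \cref{prop:uhatToPC} (degree-$\le 2$ polynomials, hence polynomially many monomials) is what keeps the coefficient dimension, and therefore the final size bound, polynomial.
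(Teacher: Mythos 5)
Your overall route coincides with the paper's Step~II: fix the Boolean skeleton, view each polynomial as a scalar product $\langle \bc, v(\bar{\bx})\rangle$ between its real coefficient vector and rational monomial-evaluation vectors, collect the finite-but-exponential family of sign constraints the real coefficients satisfy, and invoke a rational-solvability lemma whose bit-size bound depends only on the ambient dimension and the coefficient sizes, not on the number of constraints. This is exactly the paper's \cref{equivalent-rational-vector} and \cref{rational-solution} (the paper's system is slightly larger, taking \emph{all} rational inequalities of size $\le 2m$ satisfied by the coefficient vector rather than only the grid-induced ones, but that difference is immaterial).

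The genuine gap is in your justification of that key lemma. You propose to obtain $\bc'$ by rounding $\bc$ within the rational subspace $L$ cut out by the equalities, to precision fine enough relative to the margin $\delta=\min\{|p(\bar{\bx})| : p(\bar{\bx})\neq 0\}$. This yields \emph{existence} of a rational solution but not the polynomial \emph{size} bound: $\delta$ is determined by the real vector $\bc$, not by the rational constraint data, and can be arbitrarily small compared to any function of $m$, $n$, and the size of $\alpha$. For instance, for $p(X)=X-r$ with $r=q-2^{-K}$ for a grid rational $q$ and huge $K$, rounding to precision $\approx 2^{-K}$ forces $\approx K$ bits into $\bc'$, whereas the proposition needs a bound polynomial in $m$, $n$, $|\alpha|$ alone. (A small rational with the correct sign pattern does exist---any rational separating $q$ from the largest grid point below it---but it must be found from the rational data, not by approximating $r$.) The paper's proof of \cref{rational-solution} supplies the missing mechanism: write the weak-inequality polyhedron as $\mathrm{conv}(X)+\mathrm{cone}(Y)$ with rational generators whose sizes are polynomial in the facet complexity, use Carath\'eodory's theorem to select $n$ generators of each kind witnessing the real solution, and take the barycenter of the selected vertices plus the sum of the selected rays; strictness of the strict rows then follows because otherwise every selected generator, and hence the real solution itself, would lie on the corresponding hyperplane. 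Your ``vertex-type solution'' remark points in this direction, so the repair is to replace the margin/rounding step by such a generator-based argument; note also the paper's remark following \cref{equivalent-rational-vector}, which warns that $\varepsilon$-approximation of $\bc$ is insufficient (your subspace $L$ handles the exact-equality correctness issue raised there, but not the size bound).
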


Proving \cref{rational-pc} requires the following technical lemma, for which we introduce some notation.
For two vectors $\bu,\bv\in\R^t$ and $m\in\N$, we write $\bu\sim_m\bv$ if for
every $\bw\in\Q^t$ and every $z\in\Q$ with $\normof{\bw},\normof{z}\le m$, we have
(i)~$\langle \bw,\bu\rangle\ge z$ if and only if $\langle \bw,\bv\rangle\ge z$
and (ii)~$\langle \bw,\bu\rangle> z$ if and only if $\langle \bw,\bv\rangle>z$.
In other words, we have $\bu\sim_m\bv$ if and only if $\bu$ and $\bv$ satisfy
exactly the same inequalities with rational coefficients of size at most $m$.
\begin{lemma}\label{equivalent-rational-vector}
	For every $\bc\in\R^t$ and $m\in\N$, there is a 
	$\bc'\in\Q^t$ with $\normof{\bc'}\le (mt)^{O(1)}$ and $\bc\sim_m\bc'$.
\end{lemma}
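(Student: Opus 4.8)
The plan is to read $\bc\sim_m\bc'$ as the statement that $\bc$ and $\bc'$ satisfy \emph{exactly} the same inequalities with rational coefficients of size at most $m$, and then to obtain $\bc'$ as a small rational solution of the linear system that records which of these inequalities are tight at $\bc$. Let $\mathcal H$ be the (finite) set of pairs $(\bw,z)$ with $\bw\in\Q^t$, $z\in\Q$ and $\normof{\bw},\normof{z}\le m$; note that $\mathcal H$ is closed under $(\bw,z)\mapsto(-\bw,-z)$. For each $(\bw,z)\in\mathcal H$ exactly one of $\langle\bw,\bc\rangle>z$, $\langle\bw,\bc\rangle=z$, $\langle\bw,\bc\rangle<z$ holds, and since the ``$<$'' case for $(\bw,z)$ is the ``$>$'' case for $(-\bw,-z)\in\mathcal H$, it suffices to find $\bc'$ with $\langle\bw,\bc'\rangle=z$ precisely when $\langle\bw,\bc\rangle=z$ and $\langle\bw,\bc'\rangle>z$ precisely when $\langle\bw,\bc\rangle>z$; such a $\bc'$ then agrees with $\bc$ on every inequality appearing in the definition of $\sim_m$.

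First I would introduce the feasibility system $\Pi$ in the unknown $\bx\in\R^t$: take the equation $\langle\bw,\bx\rangle=z$ for each $(\bw,z)\in\mathcal H$ with $\langle\bw,\bc\rangle=z$, and the strict inequality $\langle\bw,\bx\rangle>z$ for each $(\bw,z)\in\mathcal H$ with $\langle\bw,\bc\rangle>z$. By construction $\bc$ itself is a solution, so $\Pi$ is feasible, and every rational solution of $\Pi$ is a valid choice of $\bc'$. All coefficients of $\Pi$ are rationals of size $\mathrm{poly}(m)$, so the task reduces to the purely convex-geometric claim that a \emph{feasible} system of linear equalities and strict inequalities with coefficients of size $L$ in $t$ variables has a rational solution of size $\mathrm{poly}(L,t)$; applying this with $L=\mathrm{poly}(m)$ yields $\normof{\bc'}\le(mt)^{O(1)}$.

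The heart of the argument, and the step I expect to be the main obstacle, is proving that size bound so that it is \emph{independent of the number of constraints} in $\Pi$ (which is exponential in $m$ and $t$). I would handle the strict inequalities by passing to a genuine linear program: maximize a slack variable $\delta$ (capped at $\delta\le1$) subject to the equalities of $\Pi$ and to $\langle\bw,\bx\rangle\ge z+\delta$ for its strict inequalities. Feasibility of $\Pi$, witnessed by $\bc$, forces the optimal value $\delta^\ast$ to be strictly positive, so any optimal $\bx$ satisfies the equalities exactly and the strict inequalities strictly. I would then take a vertex (basic optimal solution): it is pinned down by at most $t+1$ linearly independent tight constraints, so Cramer's rule together with the Hadamard determinant bound shows its coordinates are rationals of size $\mathrm{poly}(m,t)$, \emph{regardless} of how many constraints $\Pi$ has. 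The remaining technical point is the existence of such a vertex, since the feasible region may contain a line; I would address this by first restricting to the rational affine subspace $V$ cut out by the tight equalities — where the strict part is full-dimensional and contains $\bc$ in its relative interior — and, if needed, intersecting with a box of polynomially bounded radius to create vertices while retaining a strictly feasible point.
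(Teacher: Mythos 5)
Your proposal follows the same high-level architecture as the paper's proof: both encode $\bc\sim_m\bc'$ as a huge but finite feasible system of strict and weak linear constraints with rational data of size $\mathrm{poly}(m)$ (your use of closure under $(\bw,z)\mapsto(-\bw,-z)$ to get an ``if and only if'' is implicit in the paper's proof of \cref{equivalent-rational-vector} as well), and both then rest entirely on a small-solution lemma for such systems whose bound must be independent of the (exponential) number of constraints --- this is exactly the paper's \cref{rational-solution}. Where you genuinely diverge is in how that core lemma is proved. The paper takes the weak relaxation $B\bx\ge\bb$, writes it via the decomposition theorem as a convex hull plus a cone with rational generators of size $\mathrm{poly}(mt)$ (size controlled by individual rows, not their number), uses Carath\'eodory to select $t$ generators of each kind whose combination contains the real witness, and outputs the barycenter of the selected hull points plus the sum of the selected rays; strictness is inherited because if every selected generator were tight on some strict row, the real witness itself would be tight. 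Your route --- maximize a capped slack $\delta$, observe $\delta^\ast>0$ via the witness $\bc$, and bound a basic optimal solution by Cramer and Hadamard --- is a legitimate alternative, and your observation that the basic-solution bound depends only on $t$ and the size of individual rows plays exactly the role of the paper's generator-size bound.

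There is, however, one genuine gap: your treatment of vertex existence. Restricting to the affine subspace $V$ of tight equalities does not remove lines from the feasible region, and your fallback --- intersecting with ``a box of polynomially bounded radius to create vertices while retaining a strictly feasible point'' --- is circular. The only strictly feasible point you have in hand is $\bc$ itself, whose norm is not bounded in terms of $m$ and $t$; guaranteeing that a polynomially bounded box still contains a point with positive slack is precisely the statement you are trying to prove. The standard repair avoids vertices altogether: a feasible LP with objective bounded above attains its optimum on a face, any minimal face of the optimal face is a minimal face of the polyhedron and hence an affine subspace $\{\bx \mid B_I\bx=\bb_I\}$ cut out by a subsystem of tight rows, and Gaussian elimination (Cramer on a maximal independent subsystem) yields a point of that face with entries of size $\mathrm{poly}(m,t)$; every point of this face satisfies all constraints, with slack $\ge\delta^\ast>0$ on the strict ones. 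Alternatively, quotient by the lineality space (the kernel of the constraint matrix) to make the polyhedron pointed before extracting a vertex. With either patch your argument goes through and delivers the same $(mt)^{O(1)}$ bound as the paper.
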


\begin{remark}
	For proving \cref{equivalent-rational-vector}, it is not sufficient to
	pick a rational $\bc'$ with $\|\bc'-\bc\|<\varepsilon$ for some small 
	enough $\varepsilon$. For example, note that if in some coordinate, $\bc$
	contains a rational number of size $\le m$, then in this coordinate,
	$\bc'$ and $\bc$ must agree exactly for $\bc\sim_m\bc'$ to hold.
\end{remark}
Before we prove \cref{equivalent-rational-vector}, let us see how to deduce \cref{rational-pc}: in a PC $\alpha$ we understand each polynomial $p(X_1,\dots,X_n)$ as a scalar product $\langle\bw,\bu\rangle$ where $\bw$ constains only variables and $\bu$ consists of all coefficients. Then \cref{equivalent-rational-vector} yields a vector $\bv\sim_{2m}\bu$ containing only rationals with the same behavior as $\bu$. From this we finally obtain polynomials having only rational coefficients, which also proves \cref{rational-pc}. A detailed proof of \cref{rational-pc} can be found in \cref{appendix-tc0}.

In the proof of \cref{equivalent-rational-vector}, we use the following fact about solution sizes to systems of inequalities. 
\begin{lemma}\label{rational-solution}
	Let $A\in\Q^{k\times n}$, $A'\in\Q^{\ell\times n}$, $\bz\in\Q^k$,
	$\bz'\in\Q^\ell$ with
	$\normof{A},\normof{A'},\normof{\bz},\normof{\bz'}\le m$.  If the
	inequalities $A\bx\gg\bz$ and $A'\bx\ge\bz'$ have a solution in $\R^n$,
	then they have a solution $\br\in\Q^n$ with $\normof{\br}\le (mn)^{O(1)}$.
\end{lemma}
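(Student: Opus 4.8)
The plan is to prove \cref{rational-solution} by the classical \emph{basic solution} method from the theory of linear inequalities: a feasible rational system always admits a solution that is pinned down by an invertible square subsystem, whose coordinates are then controlled by Cramer's rule and Hadamard's inequality. Two preprocessing steps are needed before this core estimate applies, namely eliminating the strict inequalities $A\bx\gg\bz$ and clearing denominators so that all data become integers of size $(mn)^{O(1)}$.

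First I would dispose of the strict inequalities by relaxing them with a slack scalar. Introduce a fresh variable $\varepsilon$ and form the non-strict polyhedron $Q=\{(\bx,\varepsilon) : A\bx\ge\bz+\varepsilon\mathbf{1},\ A'\bx\ge\bz',\ 0\le\varepsilon\le 1\}$, where $\mathbf{1}$ is the all-ones vector. Feasibility of the original strict system over $\R$ yields a point $(\bx_0,\varepsilon_0)\in Q$ with $\varepsilon_0>0$. The core step below extracts from $(\bx_0,\varepsilon_0)$ a basic feasible solution $(\br,\varepsilon')\in Q$ with $\varepsilon'\ge\varepsilon_0>0$; since then $A\br\ge\bz+\varepsilon'\mathbf{1}\gg\bz$ and $A'\br\ge\bz'$, the $\bx$-part $\br$ solves the original strict system. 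After multiplying each inequality by the least common denominator of its coefficients, I may assume the data defining $Q$ are integers of size $(mn)^{O(1)}$.

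For the core step I would run a rank-increasing walk on $Q$ that never decreases $\varepsilon$. Starting from $(\bx_0,\varepsilon_0)$, repeatedly move along a direction $\bd$ lying in all currently tight constraint-hyperplanes, orienting $\bd$ so that $\varepsilon$ does not decrease (possible since $\bd$ and $-\bd$ are both available), until either a new inequality of $Q$ becomes tight or, along an $\varepsilon$-preserving unbounded direction, some free coordinate is driven to $0$, in which case I record the equation $x_j=0$ (a unit row $\be_j$ of negligible size). Each move strictly increases the rank of the tight subsystem, so after at most $n+1$ steps the current point is the unique solution of an invertible square integer system $M'(\bx,\varepsilon)^{\top}=\bb'$ of dimension $n+1$, assembled from rows of $Q$ and unit rows. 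By Cramer's rule each coordinate is a ratio of subdeterminants of $M'$, and Hadamard's inequality bounds every $(n+1)\times(n+1)$ determinant of an integer matrix with entries of size $(mn)^{O(1)}$ again by $(mn)^{O(1)}$; hence $\normof{\br}\le (mn)^{O(1)}$, as required.

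The main obstacle I anticipate is the interaction between the strict inequalities and the possible presence of a lineality space: if $Q$ contains a line it has no vertices, so a plain vertex argument fails, and it is precisely this case that forces the coordinate-fixing equations $x_j=0$ into the square subsystem during the walk. Care is needed to verify that this augmentation keeps $M'$ invertible, stays compatible with feasibility, and preserves the $\varepsilon>0$ guarantee of the non-decreasing walk. Everything else—the denominator clearing and the determinant size estimates via Hadamard—is routine.
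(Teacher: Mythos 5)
Your proof is correct, but it takes a genuinely different route from the paper's. The paper invokes the decomposition theorem for rational polyhedra with size bounds (Schrijver, Theorem 10.2): the solution set of $A\bx\ge\bz$, $A'\bx\ge\bz'$ is written as the convex hull of polynomial-size rational points plus the cone of polynomial-size rational generators; Carath\'{e}odory's theorem restricts attention to $n$ hull points and $n$ generators whose combination contains the given real solution, and the rational witness is the barycenter of those hull points plus the sum of those generators---strictness survives because for each strict row at least one of the selected points must satisfy it strictly, as otherwise the real solution would satisfy that row with equality. You instead absorb strictness into a bounded slack variable $\varepsilon$ and extract a basic solution of the relaxed non-strict system by a rank-increasing walk that never decreases $\varepsilon$, concluding with Cramer's rule and Hadamard's inequality. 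Your argument is more elementary and self-contained (the decomposition theorem you avoid is itself proved by exactly such basic-solution estimates), and it confronts the absence of vertices head-on by recording equations $x_j=0$, whereas the paper delegates that issue to the cited theorem; the paper's route, in turn, outsources all size bookkeeping to a standard result at the cost of the slightly delicate barycenter argument. Both approaches share the feature that matters for the application in \cref{equivalent-rational-vector}: the bound $(mn)^{O(1)}$ depends only on $m$ and $n$ and not on the numbers $k,\ell$ of inequalities (which are exponentially large there)---in your proof because the final invertible system consists of only $n+1$ rows, each a constraint row or a unit row.

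One step you should spell out in the walk: a direction $\bd$ whose $\varepsilon$-component is nonzero is automatically bounded on both sides by the constraints $0\le\varepsilon\le1$, so the doubly unbounded (lineality) case can only occur with $\varepsilon$-component zero, and driving a coordinate $x_j$ to $0$ there is $\varepsilon$-preserving; with $\bd$ oriented so that the $\varepsilon$-component is nonnegative, the forward endpoint is finite whenever that component is positive (because of $\varepsilon\le 1$), so every move---whether to a newly tight inequality or to a recorded equation $x_j=0$---maintains $\varepsilon\ge\varepsilon_0>0$ and strictly increases the rank of the tight system, since the new row is not annihilated by $\bd$ while all previously tight rows are. With this case analysis made explicit, the argument is complete.
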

We prove \cref{rational-solution} in the appendix. The proof idea is the following. By standard results about polyhedra, the set of vectors $\bx$ satisfying $A\bx\ge\bz$ and $A'\bx\ge\bz'$ can be written as the convex hull of some finite set $X=\{\bx_1,\ldots,\bx_s\}$, plus the cone generated by some finite set $\{\by_1,\ldots,\by_t\}$. Here, the vectors in $X\cup Y$ are all rational and of polynomial size.
By the Carath\'{e}odory Theorem, the real solution $\bs\in\R^n$ to $A\bs\gg\bz$ and $A'\bs\ge\bz'$ belongs to the convex hull of $n$ elements of $X$, plus a conic combination of $n$ elements of $Y$. We then argue that by taking the barycenter of those $n$ elements of $X$, plus the sum of the $n$ elements of $Y$ gives a rational vector $\br\in\Q^n$ with $A\br\gg\bz$ and $A'\br\ge\bz'$. The full proof of \cref{rational-solution} is in \cref{appendix-rational-solution}. To prove \cref{equivalent-rational-vector}, given $\bc\in\R^n$, we set up a system of (exponentially many) inequalities of polynomial size so that the solutions are exactly the vectors $\bd$ with $\bd\sim_m\bc$. The solution provided by \cref{rational-solution} is the desired $\bc'$ (see \cref{appendix-equivalent-rational-vector}).

\paragraph{Step III: Constructing $TC^0$ circuits}
It is now straightforward to translate a polynomial-sized CPR with rational coefficients and bounded alternations into a $\TCzero$ circuit:
\begin{proposition}\label{construct-tc0}
  Every language accepted by a UHAT with positional encoding is recognized by a family of circuits in $\TCzero$.
\end{proposition}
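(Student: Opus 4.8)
The plan is to assemble the pieces from Steps~I and~II into an explicit nonuniform family of constant-depth, polynomial-size threshold circuits, one circuit per input bit-length. First I would index the family by the bit-length $N$ of the encoding $v_1\#\cdots\#v_n$. Since every data vector contributes at least one symbol, the number $n$ of data vectors satisfies $n\le N$, and $n$ is itself determined by the input (it is one more than the number of separators $\#$). So for each candidate length $n\in\{1,\ldots,N\}$ I would build a subcircuit $C_n$ deciding membership among length-$n$ sequences, evaluate all of them in parallel, and select the output of $C_{n_0}$ where $n_0$ is the actual number of data vectors read off from the input. The separator counting and the final selection are prefix-counting operations, hence in $\TCzero$, and since each $C_n$ has polynomial size, the total over the $N$ choices of $n$ remains polynomial.

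For a fixed $n$, \cref{prop:uhatToPC} gives a polynomial-sized PC $\alpha$ with $O(\ell)$ alternations whose accepted set equals the set of accepted length-$n$ sequences, but whose coefficients may be real. I would set $m:=2^{N}$, which upper-bounds the magnitude of every rational entry occurring in an input of bit-length $N$, and apply \cref{rational-pc} to obtain a rational PC $\alpha'$ with $\setof{\alpha}\cap\Q_{\le m}^{d n}=\setof{\alpha'}\cap\Q_{\le m}^{d n}$. As all admissible inputs lie in $\Q_{\le m}^{d n}$, the subcircuit $C_n$ only has to evaluate $\alpha'$. The point that makes this circuit-implementable is that, by \cref{rational-pc} (via \cref{equivalent-rational-vector}), the coefficients of $\alpha'$ have magnitude $(m\cdot t)^{O(1)}=2^{O(N)}$, i.e.\ bit-length polynomial in $N$, so they can be hard-wired into a polynomial-size circuit.

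The circuit $C_n$ then proceeds as follows. It first parses the input to extract, for each position $i$ and coordinate $j$, the integer numerator and denominator of $\bx_i[j]$ as binary numbers; locating the field boundaries between the separators is again a prefix-counting task and lies in $\TCzero$. Each polynomial occurring in $\alpha'$ has degree at most $2$, so every monomial is a product of a hard-wired rational coefficient with at most two input rationals; a product of a constant number of rationals is computable in $\TCzero$ (\cite{chandra1984constant}), and bringing the polynomially many monomials over a common denominator and summing them amounts to iterated multiplication and iterated addition of polynomially-bounded integers, both in $\TCzero$. The single resulting rational value is then sign-tested against $0$ to decide each atom $p(\bar{\bx})>0$ or $p(\bar{\bx})\ge 0$. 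Finally, the positive Boolean combination of these atoms is realized by unbounded fan-in AND/OR gates; because $\alpha'$ has only $O(\ell)$ alternations and $\ell$ is a fixed constant of the UHAT, this Boolean layer has constant depth. Composing a constant number of constant-depth $\TCzero$ gadgets yields a constant-depth, polynomial-size threshold circuit, which is what the proposition asserts.

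The hard part here is bookkeeping rather than a new idea. I must verify that the coefficient bit-lengths stay polynomial—this is exactly what the $(mt)^{O(1)}$ magnitude bound in \cref{equivalent-rational-vector} guarantees, since bit-length is logarithmic in magnitude—and that no step secretly inflates the depth. The bounded alternation count from Step~I is essential: were the number of alternations to grow with $n$, the top Boolean combination would have non-constant depth and the construction would leave $\TCzero$. Everything else reduces to the standard facts that iterated addition and multiplication of polynomially many polynomially-bounded integers, and hence sign tests on bounded-degree rational polynomials, are computable in $\TCzero$ (see \cite{Vollmer-book}).
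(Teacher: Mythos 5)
Your proposal follows essentially the same route as the paper's proof: combine \cref{prop:uhatToPC} and \cref{rational-pc} to get a polynomial-size rational PC with $O(\ell)$ alternations, evaluate its degree-$2$ atoms by $\TCzero$ rational arithmetic (iterated addition/multiplication) plus a sign test, and realize the bounded-alternation Boolean combination in constant depth; your explicit treatment of parsing the encoding and selecting among subcircuits $C_n$ for each candidate length $n$ merely spells out bookkeeping the paper leaves implicit. One small correction: the parameter $m$ in $\Q_{\le m}$ and \cref{rational-pc} should be taken as the encoding (bit) size of the input entries, i.e.\ $m$ polynomial in $N$, not $m = 2^N$ --- a bound on the \emph{magnitude} of a rational does not bound its bit-length (consider $p/q$ close to $1$ with huge $p,q$), so your inference ``magnitude $2^{O(N)}$, hence bit-length $\mathrm{poly}(N)$'' needs the size-based reading of \cref{equivalent-rational-vector}, under which the argument goes through exactly as in the paper.
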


We now show that the $\TCzero$ upper bound is tight: There is a UHAT whose
language is $\TCzero$-hard under $\ACzero$ reductions. In particular, this
language is not in $\ACzero$, since $\ACzero$ is strictly included
in $\TCzero$.
\begin{proposition}
  There is a $\TCzero$-complete language that is accepted by a UHAT, even without positional encoding and masking,
  but is not recognized by any family of circuits in $\ACzero$.
\end{proposition}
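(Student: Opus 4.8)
By \cref{construct-tc0} every UHAT language is in $\TCzero$, so the entire task reduces to exhibiting a single UHAT---without positional encoding and without masking---whose language is $\TCzero$\emph{-hard} under $\ACzero$ reductions; since $\ACzero\subsetneq\TCzero$, hardness immediately forbids an $\ACzero$ family. The plan is to have the UHAT perform the one operation that is simultaneously available to it through the \emph{bilinear} attention score and provably outside $\ACzero$, namely comparing an integer product against a threshold. Concretely, over $\Sigma=\Q^3$ I take the language of length-$1$ sequences $L=\{(a,b,c)\mid a\cdot b\ge c\}$ and argue (i) that a one-layer UHAT recognizes $L$, and (ii) that $L$ is $\TCzero$-hard because integer multiplication is.

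For (i), recall that without positional encoding the datum $(a,b,c)$ is presented as the two-vector sequence $(1,a,b,c),\bm{0}\in\Q^{4}$, so $\bv_1=(1,a,b,c)$ carries a constant coordinate $1$. A single standard encoder layer suffices. Taking the affine maps $A=\mathrm{id}$ and $B$ linear, the self-attention score at position $1$ is the quadratic form $\langle A\bv_1,B\bv_1\rangle=\bv_1^{\top}B\,\bv_1$, and choosing $B$ so that this form equals $a\cdot b-c$ is possible precisely because of the constant coordinate: the monomial $ab$ is the product of coordinates $2$ and $3$, while $-c$ is realized as $-1$ times the product of the constant coordinate and coordinate $4$. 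The score to position $2$ is $\langle A\bv_1,B\bm{0}\rangle=0$. By unique hard attention with its minimal-index tie-break, position $1$ attends to itself iff $ab-c\ge 0$ and to the end marker $\bm{0}$ otherwise. Reading the first coordinate of the attended vector (which is $1$ after self-attention and $0$ after attending to $\bm{0}$) and subtracting $\tfrac12$ yields an output value of sign $+$ iff $ab\ge c$; letting $\bt$ select this coordinate gives $\langle\bt,\bv_1'\rangle>0\iff(a,b,c)\in L$.

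For (ii), I reduce from integer multiplication, which is $\TCzero$-complete under $\ACzero$ reductions. The point is that a product-\emph{comparison} oracle lets an $\ACzero$ circuit locate $a\cdot b$ inside a prescribed grid of thresholds: each test ``$a\cdot b\ge c$'' is one membership query to $L$ with an $\ACzero$-producible threshold $c$ (e.g.\ a shifted power of two), and polynomially many such answers combine, in $\ACzero$, to recover the high-order bits of the product. As computing high-order bits of multiplication is itself $\TCzero$-complete (it is $\ACzero$-interreducible with full multiplication, e.g.\ by bit reversal), this is an $\ACzero$ (Turing) reduction of a $\TCzero$-complete problem to $L$. Hence $L$ is $\TCzero$-hard, therefore $\TCzero$-complete, and in particular not recognizable in $\ACzero$.

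The construction in (i) is routine; the main obstacle is (ii). The subtlety is that a UHAT can aggregate only $O(1)$ input positions into any output value (attention returns a single vector), so it cannot form sums or counts, and consequently counting-type complete problems such as \textsc{Majority} or iterated addition do \emph{not} reduce to $L$ directly. All of the hardness must therefore be carried by the multiplicative content of a single product comparison, which is why the crux is to identify the right complete problem---multiplication, and specifically its high-order bits---and to verify carefully (handling carries in the bit-reversal equivalence, and bounding the number of threshold queries to be polynomial) that it $\ACzero$-reduces to $L$.
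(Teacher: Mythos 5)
Your part~(i) is correct (and is a genuinely different construction from the paper's: you realize the product $ab$ \emph{inside} the bilinear attention score, using the constant coordinate and the minimal-index tie-break against the terminal $\bm 0$), but part~(ii) has a real gap. The entire hardness argument rests on the claim that computing the \emph{high-order} bits of $a\cdot b$ is $\TCzero$-hard, ``$\ACzero$-interreducible with full multiplication, e.g.\ by bit reversal.'' That is not a known fact, and the suggested justification fails: bit reversal does not commute with multiplication, because carries propagate only upward, so after carry resolution the top bits of $\mathrm{rev}(a)\cdot\mathrm{rev}(b)$ do not encode the low bits of $ab$. The standard $\TCzero$-hardness of multiplication lives in the \emph{middle} bits (one embeds $\textsc{Majority}$ by placing the count $\sum_i x_i$ in a middle block of a structured product, with block width chosen to suppress carries), and middle bits are precisely what your oracle cannot deliver: polynomially many non-adaptive queries ``$ab\ge c$'' only localize the value $ab$ within one of polynomially many intervals, whereas ``bit $i$ of $ab$'' for a middle $i$ is a union of exponentially many intervals of that value; and the adaptive alternative, binary search, needs $\Theta(n)$ sequential rounds, which a constant-depth Turing reduction cannot afford. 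So the chain multiplication $\to$ high-order bits $\to$ threshold queries does not establish hardness as written.

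The repair is to notice that no bits of the product need to be recovered at all: by Buss~\cite[Corollary~3]{DBLP:journals/ipl/Buss92} (which the paper invokes), the decision problem ``$ab=c$'' for binary-encoded integers is already $\TCzero$-complete under $\ACzero$ reductions, and over the integers $ab=c$ holds iff $ab\ge c$ and $\neg(ab\ge c+1)$, a two-query truth-table reduction under which $\ACzero$ is closed. Hence your language $L=\{(a,b,c)\mid ab\ge c\}$ is $\TCzero$-hard essentially by the identity map, and combined with your part~(i) and \cref{construct-tc0} the proposition follows. For comparison, the paper avoids even the quadratic attention trick: its UHAT merely checks $r>s$ on a single pair, and all multiplicative hardness is pushed into the input encoding of rationals, reducing $ab>c$ to membership of $(a,\tfrac{c}{b})$, where $\tfrac{c}{b}$ is written down verbatim as a numerator/denominator pair in $\ACzero$. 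Either vehicle works, but in both cases the hardness must be imported from a result like Buss's, not reconstructed from comparison queries.
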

\begin{proof}
  As shown by Buss~\cite[Corollary 3]{DBLP:journals/ipl/Buss92}, the problem of
	deciding whether $ab=c$ for given binary encoded integers $a,b,c\in\Z$
	is $\TCzero$-complete under $\ACzero$-reductions. Since $ab=c$ if and only if
	$ab>c-1$ and $-ab>-(c+1)$, the problem of deciding $ab>c$ is also
	$\TCzero$-complete. We exibit a UHAT such that
	the problem of deciding $ab>c$ can be $\ACzero$-reduced to membership in the
	language.

It suffices to define a UHAT $T$ that accepts a language $L \subseteq (\Q^2)^+$ 
such that for all $r,s \in \Q$ we have that $(r,s) \in L$ if and only if $r > s$.
Then we can reduce the test $ab>c$ to checking whether $(a,\frac{c}{b}) \in L$.
Note that formally, $\frac{c}{b}$ is represented as a string 
containing the binary encodings of $c$ and $b$ separated by a special symbol.
The UHAT $T$ is by definition initialized with the sequence $(1,r,s),(0,0,0) \in \Q^3$
since we only have to consider the accepted language restricted to sequences of length $1$.
It can directly check that $r-s > 0$ using the acceptance vector $\bm t := (0,1,-1)$.
\end{proof}

%
%

\section{UHAT and regular languages over infinite alphabets}
\label{sec:non-regular}
It was shown by \citet{ACY24} that UHATs with no positional encoding on binary 
input strings
accept only regular languages, even if masking is allowed.
We show that UHATs with masking over data sequences can recognize
``non-regular'' languages over infinite alphabet (\cref{thm:non-regular}). More
precisely, a standard 
notion of regularity over the alphabet $\ialphabet = \Q^d$ is that of 
\emph{symbolic 
automata} (see the CACM article \cite{DV21}), since it extends and 
shares all nice
closure and algorithmic properties of finite automata over finite alphabets,
while at the same time permitting arithmetics. Intuitively, a
transition rule in a symbolic automaton is of the form $p \to_\varphi q$,
where $\varphi$ represents the (potentially infinite) set $S \subseteq \Q^d$ of 
solutions to an
arithmetic constraint $\varphi$ (e.g. $2x = y$ represents $\{(n,2n) : n \in
\Q\}$). The meaning of such a transition rule is: move from
state $p$ to state $q$ by reading any $a \in S$.

To prove \cref{thm:non-regular}, we define the language
\[\mathsf{Double} := \{(r_1,\dots,r_n) \in \Q^n \mid n \ge 1 \text{ and } 2r_i < r_{i+1} \text{ for all } 1 \le i < n\}\]
of sequences of rational numbers where each number is more than double the 
preceding number.

\begin{lemma}
    UHAT with past masking and without positional encoding can recognize 
    $\mathsf{Double}$.
\end{lemma}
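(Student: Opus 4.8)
The plan is to build a UHAT with two standard masked layers (together with the allowed compositions of affine maps, $\max$, and $\min$) and no positional encoding. Recall that without positional encoding the input is presented as $(1,r_1),\dots,(1,r_n),(0,0)\in\Q^2$, so each position carries a flag $f_i$ (equal to $1$ on the data positions $1,\dots,n$ and $0$ on the end marker) together with its value $r_i$. First I would use a standard layer whose query transformation is the zero map, so that all attention scores equal $0$; under past masking, position $i$ then attends to the minimal future index, i.e.\ to its immediate successor $i+1$. Composing with an affine map and $\min/\max$, this lets me compute at every position $i$ two quantities: the \emph{check bit} $c_i:=\max(0,f_i+f_{i+1}-1)\in\{0,1\}$, which realises the Boolean AND of the two flags and hence equals $1$ exactly when both $i$ and $i+1$ are data positions (i.e.\ for $1\le i<n$), and the \emph{clamped gap} $\tilde b_i:=\min(1,\max(-1,\,r_{i+1}-2r_i))\in[-1,1]$. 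Since clamping preserves the sign of $r_{i+1}-2r_i$, the sequence lies in $\mathsf{Double}$ iff $\tilde b_i>0$ for every $i$ with $c_i=1$.

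The step I expect to be the main obstacle is the aggregation of these local tests into a single acceptance value. A naive attempt would multiply the check bit by the gap to ``mask out'' the unchecked positions (the last data position, whose successor is the marker, and the marker itself), but multiplying a bit by an \emph{unbounded} value cannot be done by affine maps and $\max/\min$ with fixed constants. This is exactly why I clamp the gap to $[-1,1]$ first: with $\tilde b_i$ bounded, a fixed constant suffices to separate checked from unchecked positions. Concretely, I would use a second standard layer, querying from position $1$, with attention score at position $j$ equal to $10\,c_j-\tilde b_j$ (a fixed query $\bm q=(10,-1)$ against the key $(c_j,\tilde b_j)$). Because $\tilde b_j\in[-1,1]$, the summand $10\,c_j$ dominates, so the hard attention always prefers a checked position when one exists and, among checked positions, selects one minimising $\tilde b_j$, i.e.\ the \emph{worst} local constraint; call the attended pair $(c^*,\tilde b^*)$.

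Under past masking, position $1$ cannot attend to itself, so the second layer only sees positions $\ge 2$; I therefore recombine $(c^*,\tilde b^*)$ with position $1$'s own pair $(c_1,\tilde b_1)$, which is available at position $1$. Setting $g_1:=\tilde b_1+2(1-c_1)$ and $g^*:=\tilde b^*+2(1-c^*)$, the bound $\tilde b\in[-1,1]$ together with $2>1$ guarantees that each of $g_1,g^*$ is $>0$ exactly when its constraint holds or is vacuous (when the corresponding check bit is $0$). I would then output $\min(g_1,g^*)$ at position $1$ and let the acceptance vector $\bm t$ read off this coordinate, so that $\langle\bm t,\bm v'_1\rangle>0$ holds iff $g_1>0$ and $g^*>0$, i.e.\ iff all checked gaps are positive. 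It remains to check the boundary cases, all routine: for $n=1$ both check bits are $0$ and the construction accepts (vacuously correct), while a violation either at position $1$ (captured by $g_1$) or at some $j\ge 2$ (captured by the worst-constraint selection in $g^*$) forces $\min(g_1,g^*)\le 0$ and hence rejection.
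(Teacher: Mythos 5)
Your proposal is correct, and it takes a genuinely different route from the paper's proof, even though both use two masked standard layers without positional encoding. The paper's first layer exploits \emph{unbounded} bilinear attention scores: for positions $i<j$ it uses the score $2\bm u_i[2]\cdot\bm u_j[1]-\bm u_j[2]$, which equals $2r_i-r_j$ on data positions and $0$ at the end marker, so that position $i$'s attention lands on the marker if and only if $2r_i<r_j$ for \emph{all} later data positions $j$; this produces one Boolean per position, and a second layer (score $\bm v_j\in\{0,1\}$) checks that no position carries a $1$. You instead use attention purely as a successor shift (zero query, minimal-index tie-breaking), evaluate each constraint locally as a clamped gap $\tilde b_i\in[-1,1]$, and aggregate with a dominance trick ($10c_j-\tilde b_j$) that selects the worst checked constraint; your clamping step is exactly what substitutes for the paper's reliance on unbounded scores in the aggregation. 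Your route buys two things. First, it tests precisely the consecutive condition $2r_i<r_{i+1}$ defining $\mathsf{Double}$, whereas the paper's construction tests the stronger pairwise condition $2r_i<r_j$ for all $i<j$, which the paper treats as equivalent; over all of $\Q$ it is not---e.g.\ $(-1,-9/5,-3)$ satisfies the consecutive condition but violates the pairwise one ($2\cdot(-1)=-2\not<-3$)---so the paper's UHAT as written rejects some sequences in $\mathsf{Double}$ containing negative values, while yours handles them correctly. Second, your worst-constraint selection pattern (bounded values plus a large constant offset on a check bit) is a reusable gadget for aggregating unboundedly many local tests at a single position. The price is a slightly more delicate construction: you must handle position $1$ separately because past masking prevents self-attention there, and you rely on the convention that the last position self-attends (your formulas do cover this, since $c_{n+1}=\max\{0,0+0-1\}=0$); the constant query $(10,-1)$ is unproblematic because the maps $A,B$ are affine. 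All boundary cases ($n=1$, $n=2$, the marker's vacuous score) check out, so the argument is complete as stated.
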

\begin{proof}
Given an input sequence $\bm u_1,\ldots,\bm u_{n+1}=(1,r_1),\ldots,(1,r_n),(0,0)
\in \Q^2$, we need to check that for all pairs $1\leq i<j<n+1$, we have $2\cdot
r_i<r_j$. To this end, a first standard encoder layer uses the differences
$2r_i-r_j$ as attention scores---except for $j=n+1$, where the attention score
will be $0$. This is achieved by setting the attention score for positions
$i,j$ to $2\bm u_i[2]\cdot \bm u_j[1]-\bm u_j[2]$. Indeed, this evaluates to
$2r_i-r_j$ for $i<j<n+1$, and to $0$ for $i<j=n+1$. In particular, for a
position $i\in[1,n]$, the attention score is maximized at $j=n+1$ if and only
if $2r_i<r_j$ for all $j\in[i+1,n]$.

The output vector $\bm v_i$ at position $i$ is then set to $\bm a_i[1]$, where
$\bm a_i$ is the attention vector at position $i$.  Thus, the output vector has
dimension $1$, and for each $i\in[1,n+1]$, we have $\bm v_i=0$ if and only if
$2\cdot r_i<r_j$ holds for all $j\in[i+1,n]$.
  
In a second standard encoder layer we now check whether all $\bm v_i$ have value $0$.
To this end, we choose for $i<j\leq n+1$ the attention score $\bm v_j$. Let
$\bm b_i$ denote the attention vector at position $i$. Then $\bm b_i=0$ iff
$\bm v_{i+1},\ldots,\bm v_n$ are all $0$. We then output $\bm w_i=1-(\bm
v_i+\bm b_i)$, which is positive if and only if $\bm v_i=\cdots=\bm v_{n+1}=0$.
Finally, with the acceptance vector $\bm t=1$ we accept if and only if $\bm
w_1>0$, which is equivalent to $\bm v_1=\cdots=\bm v_{n+1}=0$. As we saw above,
the latter holds if and only if $2r_i<r_j$ for all $i,j$ with $1\le i<j<n+1$.
\end{proof}
The proof of non-regularity of $\mathsf{Double}$ is easy (see \cref{app:non-regular}).
One could also easily show that $\mathsf{Double}$ 
cannot be recognized by other existing models 
in the 
literature of 
formal language theory over infinite alphabets, e.g., register
automata \cite{KF94,atom-book,DFSS19,ST11,TZ22}, variable/parametric automata
\cite{JLMR23,GKS10,FK18}, and data automata variants
\cite{two-variable,FL22}. For example, for register automata over $(\Q;<)$ (see
the book \cite{atom-book}), one could use the result therein that
data sequences accepted by such an automaton are closed under any 
order-preserving map of the elements in the sequence
(e.g., if $1,2,3$ is accepted, then so is $10,11,20$), which is not satisfied by
$\mathsf{Double}$.

\section{Logical languages accepted by UHAT}\label{sec:ltl}
In this section we show that an extension of linear temporal logic (LTL) with linear rational arithmetic (LRA)
and unary numerical predicates is expressible in UHAT over data sequences (\cref{thm:ltl}).
A formula of dimension $d > 0$ in \emph{locally testable LTL} (\ltltl) has the following syntax:
\[\varphi ::= \psi_k(\bm x_1,\dots,\bm x_{k+1}) \mid \Theta \mid \neg \varphi \mid \varphi \vee \varphi 
\mid X \varphi \mid \varphi U \varphi\]
Here, $\psi_k$ for $k \ge 0$ is an atom in LRA over the $d$-dimensional vectors of variables $\bm x_i$ of the form
$\langle\bm a,(\bm x_1,\dots,\bm x_{k+1})\rangle + b > 0$
where $\bm a \in \Q^{d(k+1)}$ and $b \in \Q$. 
Intuitively, $\psi_k$ allows one to check the values in the sequence with
$k$ ``lookaheads''.
Furthermore, $\Theta$ is a \emph{unary numerical predicate}, i.e. a family of functions 
$\theta_n \colon \{1,\dots,n\} \to \{0,1\}$
for all $n \ge 1$.
We define the satisfaction of an \ltltl\ formula $\varphi$ over a sequence 
$\bar{\bm v} = (\bm v_1,\dots,\bm v_n)$ of vectors in $\Q^d$ at position $i \in [1,n]$,
written $(\bar{\bm v},i) \models \varphi$, inductively as follows (omitting negation and disjunction):
\begin{itemize}
\item $(\bar{\bm v},i) \models \psi_k(\bm x_1,\dots,\bm x_{k+1})$ iff $i \le n-k$ and $\psi_k(\bm v_i,\dots,\bm v_{i+k})$
\item $(\bar{\bm v},i) \models \Theta$ iff $\theta_n(i) = 1$ 
\item $(\bar{\bm v},i) \models X \varphi$ iff $i < n$ and $(\bar{\bm v},i+1) \models \varphi$
\item $(\bar{\bm v},i) \models \varphi U \psi$ iff there is $j\in[i,n]$ with $(\bar{\bm v},j) \models \psi$ and
$(\bar{\bm v},k) \models \varphi$ for all $k\in[i,j-1]$
\end{itemize}
We write $L(\varphi) := \{\bar{\bm v} \in (\Q^d)^+ \mid (\bar{\bm v},1) \models \varphi\}$ for the language of $\varphi$.
\begin{example}
Consider sequences of the form 
$\bm x, A\bm x, A^2\bm x,\dots, A^n\bm x$
such that $\bm y A^n \bm x = 0$ and $n \ge 0$ is minimal with this property 
where $\bm y \in \Q^{1 \times d}$ and $A \in \Q^{d \times d}$ are fixed and $\bm x \in \Q^d$.
\cref{thm:ltl} implies that this language is   accepted by a UHAT since
it is defined by the \ltltl\ formula $G[(\neg\mathit{Last} \to (\bm y\bm x_1 \ne 0 \wedge A\bm x_1 = \bm x_2)) \wedge (\mathit{Last} \to \bm y\bm x_1 = 0)]$, where $\mathit{Last} := \neg X\top$.
\end{example}
\begin{example}
    Take the standard notion of 7-day Simple Moving Average (7-SMA); this can 
    be generalized to larger sliding windows of 50-days, or 100 days, which 
    are often used in finance. Using $\ltltl$, it is easy to show that the 
    following notion of ``uptrend'' can be captured using UHAT: sequences of 
    numbers such that the value at each time $t$ is greater than the 7-SMA 
    value at time $t$. The formula for this is:
    \[
        G(X^7\top \to \varphi(x_1,\ldots,x_7))
    \]
    where $\varphi(\bar x)$ is the formula $7x_7 > \sum_{i=1}^7 x_i$.
    Note here that $G\psi$ means (as usual for LTL) ``globally'' $\psi$, which
    can be written as $\neg (\top U \neg \psi)$. Similarly, $X^i$ means that $X$
    is repeated $i$ times.
\end{example}

We assume UHATs with positional encoding and a zero vector at the end of the input sequence (see \cref{sec:prelims}).
In the following we always assume that the components from the positional encoding are implicitly given
and are not changed by any UHAT.
So we write the sequence in \cref{eq:input} as $\bm v_1,\dots,\bm v_n,\bm 0$.
We use the following results from \citep{barcelo2024logical} that also hold for UHATs over data sequences.
\begin{lemma}\label{lem:iclr}
Let $d > 0$ and $\ell \in [1,d]$.
\begin{enumerate}[1)]
\item For every $b \in \{0,1\}$ there is a UHAT with positional encoding that on every sequence $\bm v_1,\dots,\bm v_n \in \Q^d$ 
with $v_i[\ell] \in \{0,1\}$ for all $i \in [1,n]$
outputs the sequence $\bm v_1,\dots,\bm v_{n-1},(\bm v_n[1,\ell-1],b,\bm v_n[\ell+1,d])$.\label{it:iclr-1}
\item There is a UHAT layer with positional encoding that on every sequence $\bm v_1,\dots,\bm v_n \in \Q^d$ and 
for every $i \in [1,n-1]$ picks attention vector $\bm a_i = \bm v_{i+1}$.\label{it:iclr-2}
\item There is a UHAT layer with positional encoding that on every sequence $\bm v_1,\dots,\bm v_n \in \Q^d$,
for every $\ell \in [1,d]$ with $\bm v_1[\ell],\dots,\bm v_{n-1}[\ell] \in \{0,1\}$ and $\bm v_n[\ell] = 0$, and for every $i \in [1,n]$
picks attention vector $\bm a_i = \bm v_j$ with minimal $j \in [i,n]$ such that $\bm v_j[\ell] = 0$.\label{it:iclr-3}
\end{enumerate}
\end{lemma}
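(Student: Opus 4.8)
The plan is to prove all three claims by exhibiting explicit UHAT gadgets, adapting the finite-alphabet constructions of \citep{barcelo2024logical}. The single new point over data sequences is that, apart from the designated coordinate $\ell$ (whose values are promised to lie in $\{0,1\}$), the coordinates $\bm v_i[1],\dots,\bm v_i[d]$ carry \emph{arbitrary} rationals. Accordingly, I would arrange every gadget to act only on coordinate $\ell$ and on the positional coordinates, while the remaining data coordinates are transported unchanged as inert payload (by giving them weight $0$ in the affine maps $A,B$ and copying them through the identity part of $C$). Since none of the constructions ever branches on a symbol of a finite alphabet, this is exactly what lets them transfer from finite $\Sigma$ to $\ialphabet=\Q^d$.

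For the first claim no attention is needed. I would let the positional encoding supply an indicator $\mathrm{last}_i := p(i,n)$ with $\mathrm{last}_i=1$ iff $i=n$ (legitimate, since $p$ may depend on the length $n$). A single ReLU layer then replaces coordinate $\ell$ of $\bm v_i$ by
\[
\max\{0,\ \bm v_i[\ell]-\mathrm{last}_i\}+b\cdot\mathrm{last}_i .
\]
Using $\bm v_i[\ell]\in\{0,1\}$, this equals $\bm v_i[\ell]$ when $i<n$ (as $\mathrm{last}_i=0$) and equals $b$ when $i=n$ (as $\bm v_n[\ell]-1\le 0$, so the ReLU term vanishes and only $b$ survives), which is precisely the required output.

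For the second claim I would have the positional encoding expose $\pi_i:=i$ and $\pi_i^2$, and choose $A,B$ so that the attention score between positions $i$ and $j$ equals $2(\pi_i+1)\pi_j-\pi_j^2$, which is $-(\pi_j-(\pi_i+1))^2$ up to a summand independent of $j$. For fixed $i<n$ this is uniquely maximized over $j\in[1,n]$ at $j=i+1$, so the unique hard attention selects $\bm a_i=\bm v_{i+1}$; at $i=n$ the statement imposes no requirement. Because the data coordinates receive weight $0$ in $A,B$, they do not perturb the score, so this works verbatim over $\Q^d$.

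The third claim is the main obstacle. Here the attention at position $i$ must land on the minimal $j\ge i$ with $\bm v_j[\ell]=0$ (such $j$ always exists since $\bm v_n[\ell]=0$). The natural idea is to reward zero-positions by a dominant term $-K\cdot\bm v_j[\ell]$ with large $K$, so that the maximizers are exactly the zero-positions, and then let the unique-hard-attention min-position tie-break supply minimality. The difficulty is \emph{directionality}: a plain zero-rewarding score selects the \emph{globally} first zero rather than the first zero at or after $i$, and restricting to $j\ge i$ is delicate because the sharp cutoff $[j\ge i]$ cannot be produced by a single constant-dimensional attention head without care. I would resolve this exactly as in the cited construction, using the positional encoding to install a small, strictly direction-dependent bias that orders zero-positions so that those at or after $i$ are preferred and, among them, smaller indices win, while the dominant $-K\cdot\bm v_j[\ell]$ term keeps all nonzero-positions out of contention; the min-position rule then returns the least admissible $j$. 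The only data-sequence-specific check is, once more, that coordinate $\ell$ alone drives the score and the other rational coordinates are carried through untouched. I expect the fiddliest part of writing this out in full to be verifying the boundary behaviour (the promise $\bm v_n[\ell]=0$ and the inclusion of $j=i$ itself) and confirming that the directional bias never outweighs the $-K\cdot\bm v_j[\ell]$ term across all lengths $n$.
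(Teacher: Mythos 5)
Your proposal is correct and takes essentially the same route as the paper: the paper also treats these as purely positional gadgets imported from \citep{barcelo2024logical} (it states that items \ref{it:iclr-2}) and \ref{it:iclr-3}) follow directly from that work, exactly the deferral you make for the third item, and your score $2(\pi_i+1)\pi_j-\pi_j^2$ for the second item is the standard construction there), with only item \ref{it:iclr-1}) requiring a new argument for $b=1$. For that item the paper's variant uses the positional encoding $p(i,n):=(i,n)$ and sets the $\ell$-th coordinate to $\min\{\bm v_i[\ell],\,n-i\}$ if $b=0$ and $\max\{\bm v_i[\ell],\,i-n+1\}$ if $b=1$, which is interchangeable with your indicator-plus-ReLU formula $\max\{0,\bm v_i[\ell]-\mathrm{last}_i\}+b\cdot\mathrm{last}_i$.
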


Here, \ref{it:iclr-2} and \ref{it:iclr-3} directly follow from \citep{barcelo2024logical}.
For \ref{it:iclr-1} we remark that in \citep{barcelo2024logical} only the case $b = 0$ is shown.
On input $\bm v_1,\dots,\bm v_n$ as in \ref{it:iclr-1}, the UHAT uses positional encoding function $p(i,n) := (i,n)$ 
and a composition of affine transformations and ReLU to output at position $i \in [1,n]$ the vector
$(\bm v_i[1,\ell-1],b_i,\bm v_i[\ell+1,d])$ where 
$b_i := \min\{\bm v_i[\ell],n-i\}$ if $b = 0$ and 
$b_i := \max\{\bm v_i[\ell],i-n+1\}$ if $b = 1$.

Using \cref{lem:iclr}, we show that a UHAT can transform rational values $>0$ to 1
and values $\le 0$ to 0.
This will be used to evaluate inequalities by outputting 1 for true and 0 for false.

\begin{lemma}\label{lem:bools}
Let $d > 0$ and $\ell \in [1,d]$.
There is a UHAT with positional encoding that on every sequence $\bm v_1,\dots,\bm v_{n+1} \in \Q^d$
outputs $\bm v'_1,\dots,\bm v'_{n+1} \in \Q^d$ with
$\bm v'_i := (\bm v_i[1,\ell-1],b_i,\bm v_i[\ell+1,d])$ for all $i \in [1,n+1]$ where
$b_i := 1$ if $\bm v_i[\ell] > 0$ and $b_i := 0$ otherwise.
\end{lemma}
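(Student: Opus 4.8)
The plan is to exploit the genuine discontinuity of unique hard attention. The obstacle is that $b_i$ depends discontinuously on $\bm v_i[\ell]$ (it jumps at $0$), whereas any per-position composition of affine maps and ReLU is continuous; moreover, since the entries are arbitrary rationals, no fixed rescaling $\min(1,\max(0,c\,\bm v_i[\ell]))$ can separate small positive values from $0$ for all inputs. Hence the sign test cannot be computed within one position's feed-forward part and must instead come from the argmax selection of an attention layer, whose output flips discontinuously.

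Concretely, I would realize the whole gadget with a single standard encoder layer, relying on two positional-encoding coordinates that (per the standing assumption of \cref{sec:ltl}) are available at every position $j$: the index $j$ itself, and a marker $e_j$ with $e_j=1$ iff $j=n+1$ and $e_j=0$ otherwise (the latter is the unique end-of-input marker). Choosing affine maps $A,B$ so that $A\bm v_i=(\bm v_i[\ell],0,\dots,0)$ and $B\bm v_j=(j,0,\dots,0)$ makes the attention score equal to $\langle A\bm v_i,B\bm v_j\rangle=\bm v_i[\ell]\cdot j$, a clean bilinear product with no stray terms. For fixed $i$ this score, as a function of $j\in[1,n+1]$, is strictly increasing when $\bm v_i[\ell]>0$, strictly decreasing when $\bm v_i[\ell]<0$, and constantly $0$ when $\bm v_i[\ell]=0$. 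Thus the attended position (the minimal maximizer) is $n+1$ exactly when $\bm v_i[\ell]>0$, and is $1$ otherwise; for $\bm v_i[\ell]=0$ the tie is broken to the minimal position $1$, and $1\ne n+1$ since $n\ge 1$.

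It then remains to read off the marker from the attended vector. I would let the output transformation $C(\bm v_i,\bm a_i)$ set coordinate $\ell$ to $\bm a_i[e]$ (the marker coordinate of the attention vector) while copying $\bm v_i[1,\ell-1]$ and $\bm v_i[\ell+1,d]$ unchanged from the query vector. Since $\bm a_i=\bm v_{n+1}$ gives $\bm a_i[e]=1$ and $\bm a_i=\bm v_1$ gives $\bm a_i[e]=0$, this yields exactly $b_i=1$ iff $\bm v_i[\ell]>0$ and $b_i=0$ otherwise, uniformly for all $i\in[1,n+1]$ (including the marker position). The only real subtlety---and the conceptual heart of the argument---is recognizing that the discontinuity must be supplied by the attention argmax rather than the ReLU network; once the score is arranged as the product $\bm v_i[\ell]\cdot j$, correctness is immediate from the monotonicity cases above, and no masking is needed.
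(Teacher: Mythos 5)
Your construction is correct and rests on the same core mechanism as the paper's proof: unique hard attention discontinuously sends each query either to the end position (carrying a $0/1$ marker) or, via tie-breaking to the minimal position, to position $1$, and the output map $C$ copies the attended marker into coordinate $\ell$. The differences are only implementational: the paper first applies a ReLU step $r_i=\max\{\bm v_i[\ell],0\}$ and uses the score $r_i\cdot e_j$ against an end-marker built with item~\ref{it:iclr-1} of \cref{lem:iclr}, whereas your score $\bm v_i[\ell]\cdot j$ handles negative values directly through monotonicity in $j$ (saving the ReLU layer and the marker-construction layers), and you legitimately take $e_j=1$ iff $j=n+1$ as a coordinate of the (arbitrary) positional encoding function $p(j,n+1)$ rather than constructing it.
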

\begin{proof}
On input $\bm v_1,\dots,\bm v_{n+1} \in \Q^d$, the first layer
outputs at position $i \in [1,n+1]$ the vector $\bm w_i := (\bm v_i[1,\ell-1],r_i,\bm v_i[\ell+1,d])$ where
$r_i := \max\{\bm v_i[\ell],0\}$. Thus, 
$r_i = 0$ if $\bm v_i[\ell] \le 0$ and $r_i > 0$ otherwise.
The second layer turns the sequence $\bm w_1,\dots,\bm w_{n+1}$ into $(0,\bm w_1),\dots,(0,\bm w_{n+1})$.
We then apply \ref{it:iclr-1} of \cref{lem:iclr} to obtain the sequence $(0,\bm w_1),\dots,(0,\bm w_n),(1,\bm w_{n+1})$,
i.e.\ the last vector has first component 1, and all other vectors have first component 0.
Let $\bm u_1,\dots,\bm u_{n+1} \in \Q^{d+1}$ be the resulting sequence.
The final layer uses attention score
$\langle A \bm u_i, B \bm u_j \rangle$
for all $1 \le i,j \le n+1$ where the affine transformations $A,B \colon \Q^{d+1} \to \Q^{d+1}$ yield
$A \bm u_i = (\bm u_i[\ell],0,\dots,0)$ and
$B \bm u_j = (\bm u_j[1],0,\dots,0)$.
Let $\bm a_i$ be the attention vector of position $i \in [1,n+1]$.
Since $\bm u_i[\ell] \ge 0$, we have $\bm a_i[1] = 0$ if $\bm u_i[\ell] = 0$ 
and $\bm a_i[1] = 1$ if $\bm u_i[\ell] > 0$.
The layer outputs
$\bm v'_i := (\bm u_i[2,\ell],\bm a_i[1],\bm u_i[\ell+2,d+1])$
at position $i \in [1,n+1]$.
\end{proof}


    We now prove \cref{thm:ltl}.
We claim that for every \ltltl\ formula $\varphi$ of dimension $d$ and every $m \ge d$ 
there exists a UHAT $T_{\varphi,m}$ with positional encoding
that on every sequence $\bm w_1,\dots,\bm w_n,\bm 0 \in \Q^m$ outputs a sequence
$\bm w'_1,\dots,\bm w'_n,\bm 0 \in \Q^{m+1}$ such that for all $i \in [1,n]$ we have
$\bm w'_i[1,m] = \bm w_i$ and $\bm w'_i[m+1] = 1$ if $(\bar{\bm v},i) \models \varphi$ and $\bm w'_i[m+1] = 0$ otherwise,
where $\bar{\bm v} := (\bm w_1[1,d],\dots,\bm w_n[1,d])$.
Then the theorem follows since for every \ltltl\ formula $\varphi$ of dimension $d$ the UHAT $T_{\varphi,d}$ outputs
on every sequence $\bar{\bm v} = (\bm v_1,\dots,\bm v_n)$ of vectors in $\Q^d$ extended with the vector $\bm 0 \in \Q^d$ a sequence
$\bm v'_1,\dots,\bm v'_n, \bm 0 \in \Q^{d+1}$ such that $\bm v'_1[d+1] > 0$ if and only if $(\bar{\bm v},1) \models \varphi$.
Thus, $T_{\varphi,d}$ accepts $L(\varphi)$ by taking the acceptance vector
$\bm t := (0,\dots,0,1) \in \Q^{d+1}$. 

We prove the claim by induction on the structure of \ltltl\ formulas.
If the formula is a unary numerical predicate $\Theta$, then we can use the positional encoding
$p(i,n+1) := \theta_n(i)$ for all $i \in [1,n]$ and $p(n+1,n+1) := 0$ 
to output on every sequence $\bm w_1,\dots,\bm w_n,\bm 0 \in \Q^m$ the sequence
$(\bm w_1,p(1,n+1)),\dots,(\bm w_n,p(n,n+1)),(\bm 0,p(n+1,n+1)) \in \Q^{m+1}$.

If the formula is an atom $\psi_k(\bm x_1,\dots,\bm x_{k+1})$ of the form $\langle \bm a, (\bm x_1,\dots,\bm x_{k+1}) \rangle + b > 0$,
the UHAT $T_{\psi_k,m}$ adds in its first layer a component that is set to 1 to the top of every vector, 
outputting on every sequence $\bm w_1,\dots,\bm w_n,\bm 0 \in \Q^m$ the sequence $(1,\bm w_1),\dots,(1,\bm w_n),(1,\bm 0)$.
Then we apply \ref{it:iclr-1} of \cref{lem:iclr} to turn this sequence into $(1,\bm w_1),\dots,(1,\bm w_n),\bm 0$.
Next, $T_{\psi_k,m}$ uses $k$ layers to allow each position to gather the first $d+1$ components of its $k$ right neighbors.
More precisely, the $\ell$-th layer, for $\ell \in [1,k]$, on sequence $\bm u_1,\dots,\bm u_{n+1}$ 
uses \ref{it:iclr-2} of \cref{lem:iclr} to get for every position $i \in [1,n]$ the attention vector $\bm a_i = \bm u_{i+1}$
and the attention vector $\bm a_{n+1}$ is arbitrary.
Note that if $\ell = 1$, then $\bm a_n = \bm 0$.
Then it applies an affine transformation to output at position $i \in [1,n]$ the vector $(\bm a_i[1,d+1],\bm u_i)$
and using \ref{it:iclr-1} at position $n+1$ the vector $(0,\bm a_{n+1}[2,d+1],\bm u_{n+1})$.
Let $\bm u_1,\dots,\bm u_{n+1} \in \Q^{k(d+1)+m+1}$ be the output of the $k$-th of those layers.
We add another layer that using a composition of ReLU and affine functions outputs at every position $i \in [1,n+1]$ the vector 
$\bm u'_i := (\bm u_i[k(d+1)+2,k(d+1)+m+1],r_i) \in \Q^{m+1}$ where
$r_i := \min\{\bm u_i[1], \langle \bm a, \hat{\bm u}_i \rangle + b\}$
and\ifSubmission{\pagebreak}
\[\hat{\bm u}_i := (\bm u_i[2,d+1],\bm u_i[d+3,2(d+1)],\dots,\bm u_i[kd+k+2,(k+1)(d+1)]),\]
which contains the first $d$ components of the initial input vector and its $k$ right neighbors.
That is, for all $i \in [1,n]$ we have that
$r_i = 0$ if $\langle \bm a, \hat{\bm u}_i \rangle + b \le 0$ or
$i+k > n$ since $\bm u_i[1]$ can only be 0 if it was gathered from the vector at position $n+1$ using attention.
Furthermore, $r_i > 0$ if $\langle \bm a, \hat{\bm u}_i \rangle + b > 0$ and $i+k \le n$.
Note that $\bm u'_i[1,m]$ is equal to the input vector $\bm w_i$ from the beginning if $i \in [1,n]$ and $\bm 0$ if $i = n+1$.
Finally, we apply \cref{lem:bools} followed by \ref{it:iclr-1} to output at position $i \in [1,n]$ the vector 
$\bm w'_i := (\bm u'_i[1,m],r'_i)$ where $r'_i := 1$ if $r_i > 0$ and $r'_i := 0$ otherwise
and at position $n+1$ the vector $\bm w'_{n+1} := (\bm u'_{n+1}[1,m],0) = \bm 0$.
Thus, for all $i \in [1,n]$ we have that $\bm w'_i[m+1] = 1$ if $(\bar{\bm v},i) \models \psi_k$ and $\bm w'_i[m+1] = 0$ otherwise,
where $\bar{\bm v} := (\bm w_1[1,d],\dots,\bm w_n[1,d])$.

Let us now continue with the inductive step where we assume that $\varphi$ and $\psi$ are \ltltl\ formulas of dimension $d$
such that for all $m \ge d$ we already showed existence of the UHATs $T_{\varphi,m}$ and $T_{\psi,m+1}$.
For the cases $\neg\varphi$, $\varphi \vee \psi$, and $X\varphi$ we refer to \cref{app:ltl}.
For $\varphi U \psi$ define the UHAT $T_{\varphi U \psi,m}$ that first applies $T_{\psi,m+1} \circ T_{\varphi,m}$
outputting a sequence $\bm u_1,\dots,\bm u_n, \bm 0 \in \Q^{m+2}$.
Observe that $(\bar{\bm v},i) \models \varphi U \psi$ for $i \in [1,n]$ and $\bar{\bm v} := (\bm u_1[1,d],\dots,\bm u_n[1,d])$ 
if and only if 
for the minimal $j \in [i,n]$ with $(\bar{\bm v},j) \models \neg\varphi \vee \psi$ we have $(\bar{\bm v},j) \models \psi$
and such a $j$ exists.
Equivalently, for the minimal $j \in [i,n+1]$ with $\min\{\bm u_j[m+1],1-\bm u_j[m+2]\} = 0$ it holds that $\bm u_j[m+2] = 1$.
To check this, we first add a layer that outputs at position $i \in [1,n+1]$ the vector
$\bm u'_i := (\bm u_i,\min\{\bm u_i[m+1],1-\bm u_i[m+2]\}) \in \Q^{m+3}$.
Finally, we add a layer that uses \ref{it:iclr-3} of \cref{lem:iclr} to get attention vector
$\bm a_i = \bm u'_j$ with $j \in [i,n+1]$ minimal such that $\bm u'_j[m+3] = 0$.
The layer then outputs at position $i \in [1,n+1]$ the vector
$\bm w'_i := (\bm u'_i[1,m],\bm a_i[m+2]) \in \Q^{m+1}$.

\section{Concluding remarks}
\label{sec:conc}

We initiated the study of the expressive power of
transformers, when the input is a sequence of (tuples of) numbers, which is the
setting for applications like time series analysis/forecasting. Our results
indicate an increased expressiveness of transformers on such input data, 
in comparison to the previous formal language theoretic 
setting (see survey \citep{transformers-survey}), i.e., when a 
token
embedding function (with a bounded number of tokens) is first applied before
feeding the input to a transformer. More precisely, this represents for Unique
Hard Attention Transformers (UHAT) a jump from
the complexity class $\ACzero$ to $\TCzero$ (since $\ACzero \subsetneq
\TCzero$), and the jump from regular to non-regular languages (when position 
encoding is not allowed). On the positive side, we successfully developed
an expressive class of logical languages recognized by UHAT in terms of a logic
called locally testable LTL, which extends previously identified logic for UHAT
for strings over finite alphabets \citep{ACY24,barcelo2024logical}.

\textbf{Limitations.} While we follow the standard 
formalization of transformer encoders in Formal Languages and 
Neural Networks (e.g. 
\citep{transformers-survey,barcelo2024logical,Hao22,Hahn20}), limitations
of the models are known (see \cite{transformers-survey}
for a thorough discussion). 
For example, used real numbers could be of 
\emph{unbounded} precision, which allow one to precisely represent
values of $\sin$ and $\cos$ functions (actually used in practice for
positional encoding). In addition,
the positional encoding used by the model could be \emph{uncomputable}.
Three answers
can be given. First, an \emph{upper bound complexity} on the model 
with unbounded precision and arbitrary positional encodings
(e.g. in $\TCzero$) still applies in the case of
bounded precision. Second, limiting the power of UHAT (e.g. allow only
rational numbers, and assuming 
efficient (i.e.\ uniform $\TCzero$) computability of the 
positional encoding $p: \N \times \N \to \Q^d$), our proof in fact yields 
\emph{uniformity} of our $\TCzero$ upper bound. Third, our lower bound for
non-regularity of UHAT (cf. \cref{thm:non-regular}) holds even with only
rational numbers and no positional encodings. Finally, to alleviate these
issues, 
we have always made an explicit distinction between UHAT with and without positional encodings.

\textbf{Future directions.} Our paper opens up a plethora of research 
avenues on the expressive power 
of transformers on data sequences. In particular, one could consider other 
transformer encoder
models that have been considered in the formal language theoretic setting to
transformers (see the survey \citep{transformers-survey}). For example, instead 
of unique hard attention mechanism, we 
could consider the expressive power of transformers on 
data sequences using \emph{average hard attention} mechanism. Similar question
could be asked if we use a softmax function instead of a hard attention,
which begs the question of which numerical functions could be computed in
different circuit complexity classes like $\TCzero$.
Another important question concerns a logical characterization for UHAT over
sequences of numbers. 
This is actually still an open question even for the case of finite alphabets. 
Barcelo et al.~\cite{barcelo2024logical} 
showed that first-order logic (equivalently, LTL) with monadic numerical 
predicates (called LTL(Mon)) is subsumed in UHAT with arbitrary position 
encodings, i.e., the transformer model that we are generalizing in this paper 
to sequences of numbers.
There are UHAT languages (e.g.\ the set of palindromes) that are not captured by this. As remarked in
\cite{barcelo2024logical}, LTL(Mon) can be extended with arbitrary linear 
orders on the positions (parameterized by lengths), which then can define 
the palindromes. [An analogous extension for $\ltltl$ can define palindromes
over an infinite alphabet.] However, it is possible to show 
that the resulting logic is still not expressive enough to capture the full generality of UHAT.
That said, although our logic does not capture the full UHAT, it can still be 
used to see at a glance what languages can be recognized by UHAT (e.g.\ Simple
Moving Averages).
There could perhaps be a hope of obtaining a precise logical characterization 
if we restrict the model of UHAT. The recent paper \cite{ACY24} showed that 
LTL(Mon) captures precisely the languages of masked UHAT with position 
encodings with ``finite image''. It is interesting to study similar 
restrictions for UHAT in the case of sequences of numbers.

\ifFull{%
\paragraph{Acknowledgments}
\raisebox{-9pt}[0pt][0pt]{\includegraphics[height=.8cm]{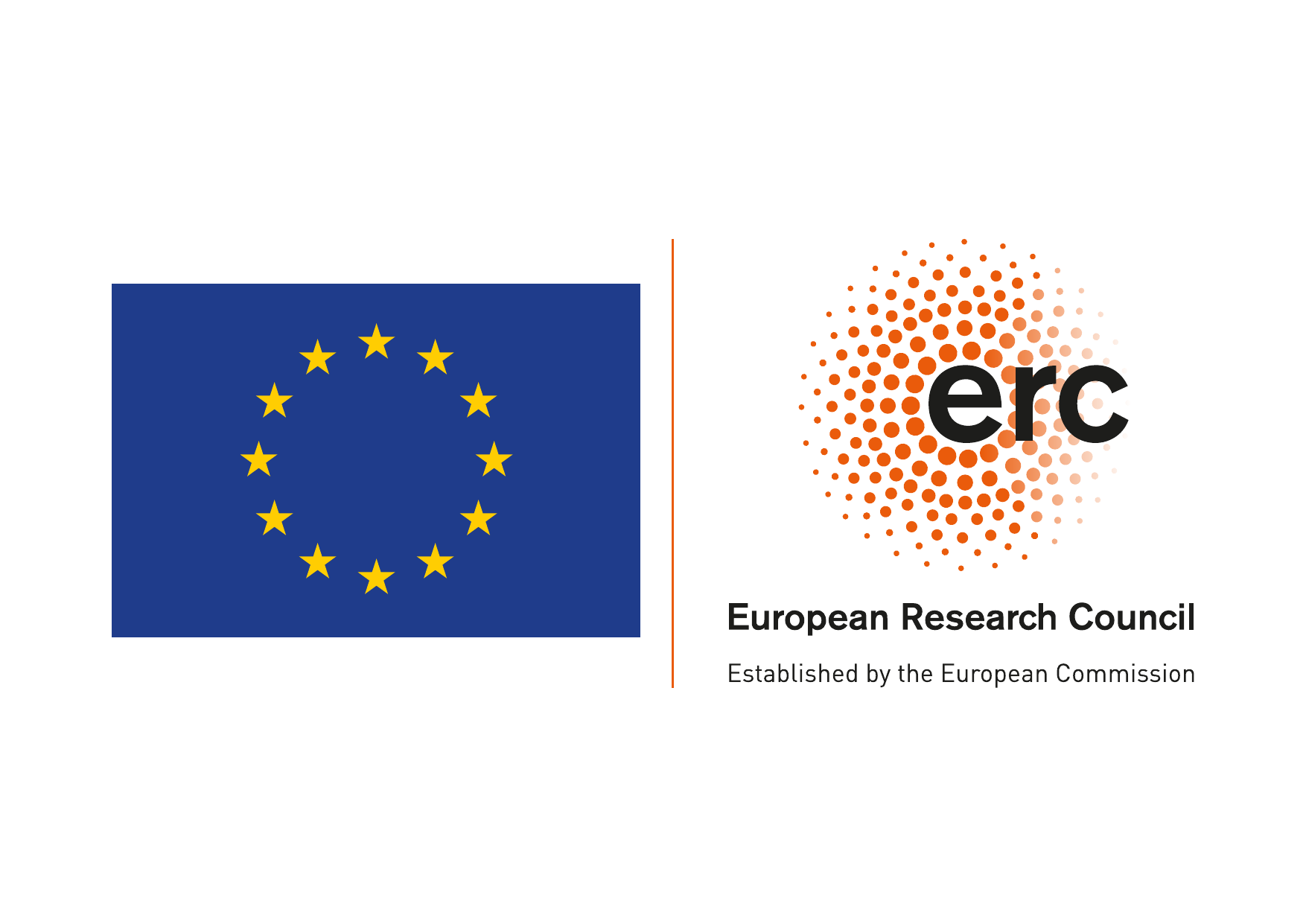}}
Funded by the European Union (ERC, LASD, 101089343 and FINABIS, 101077902). 
Views and opinions expressed are however those of the authors only and do not necessarily reflect those of the European Union 
or the European Research Council Executive Agency. 
Neither the European Union nor the granting authority can be held responsible for them.

}
\ifConference{%

}

\label{beforebibliography}
\newoutputstream{pages}
\openoutputfile{main.pages.ctr}{pages}
\addtostream{pages}{\getpagerefnumber{beforebibliography}}
\closeoutputstream{pages}
\bibliographystyle{apalike}
\bibliography{refs}

\begin{thebibliography}{}

\bibitem[Allender et~al., 2006]{ABKM06}
Allender, E., B{\"{u}}rgisser, P., Kjeldgaard{-}Pedersen, J., and Miltersen,
  P.~B. (2006).
\newblock On the complexity of numerical analysis.
\newblock In {\em 21st Annual {IEEE} Conference on Computational Complexity
  {(CCC} 2006), 16-20 July 2006, Prague, Czech Republic}, pages 331--339.
  {IEEE} Computer Society.

\bibitem[Angluin et~al., 2023]{ACY24}
Angluin, D., Chiang, D., and Yang, A. (2023).
\newblock Masked hard-attention transformers and boolean {RASP} recognize
  exactly the star-free languages.
\newblock {\em CoRR}, abs/2310.13897.

\bibitem[Barcelo et~al., 2024]{barcelo2024logical}
Barcelo, P., Kozachinskiy, A., Lin, A.~W., and Podolskii, V. (2024).
\newblock Logical languages accepted by transformer encoders with hard
  attention.
\newblock In {\em ICLR}.

\bibitem[Boja\'{n}czyk, 2019]{atom-book}
Boja\'{n}czyk, M. (2019).
\newblock {\em Slightly Infinite Sets}.
\newblock See \url{https://www.mimuw.edu.pl/~bojan/upload/main-10.pdf}.

\bibitem[Boja\'{n}czyk et~al., 2011]{two-variable}
Boja\'{n}czyk, M., David, C., Muscholl, A., Schwentick, T., and Segoufin, L.
  (2011).
\newblock Two-variable logic on data words.
\newblock {\em {ACM} Trans. Comput. Log.}, 12(4):27:1--27:26.

\bibitem[Buss, 1992]{DBLP:journals/ipl/Buss92}
Buss, S.~R. (1992).
\newblock The graph of multiplication is equivalent to counting.
\newblock {\em Inf. Process. Lett.}, 41(4):199--201.

\bibitem[Chandra et~al., 1984]{chandra1984constant}
Chandra, A.~K., Stockmeyer, L., and Vishkin, U. (1984).
\newblock Constant depth reducibility.
\newblock {\em SIAM Journal on Computing}, 13(2):423--439.

\bibitem[Chiang et~al., 2023]{Chiang23}
Chiang, D., Cholak, P., and Pillay, A. (2023).
\newblock Tighter bounds on the expressivity of transformer encoders.
\newblock In {\em ICML}, volume 202, pages 5544--5562. {PMLR}.

\bibitem[D'Antoni et~al., 2019]{DFSS19}
D'Antoni, L., Ferreira, T., Sammartino, M., and Silva, A. (2019).
\newblock Symbolic register automata.
\newblock In Dillig, I. and Tasiran, S., editors, {\em Computer Aided
  Verification - 31st International Conference, {CAV} 2019, New York City, NY,
  USA, July 15-18, 2019, Proceedings, Part {I}}, volume 11561 of {\em Lecture
  Notes in Computer Science}, pages 3--21. Springer.

\bibitem[D'Antoni and Veanes, 2017]{DV17}
D'Antoni, L. and Veanes, M. (2017).
\newblock The power of symbolic automata and transducers.
\newblock In Majumdar, R. and Kuncak, V., editors, {\em Computer Aided
  Verification - 29th International Conference, {CAV} 2017, Heidelberg,
  Germany, July 24-28, 2017, Proceedings, Part {I}}, volume 10426 of {\em
  Lecture Notes in Computer Science}, pages 47--67. Springer.

\bibitem[D'Antoni and Veanes, 2021]{DV21}
D'Antoni, L. and Veanes, M. (2021).
\newblock Automata modulo theories.
\newblock {\em Commun. {ACM}}, 64(5):86--95.

\bibitem[Devlin et~al., 2019]{BERT}
Devlin, J., Chang, M., Lee, K., and Toutanova, K. (2019).
\newblock {BERT:} pre-training of deep bidirectional transformers for language
  understanding.
\newblock In Burstein, J., Doran, C., and Solorio, T., editors, {\em
  Proceedings of the 2019 Conference of the North American Chapter of the
  Association for Computational Linguistics: Human Language Technologies,
  {NAACL-HLT} 2019, Minneapolis, MN, USA, June 2-7, 2019, Volume 1 (Long and
  Short Papers)}, pages 4171--4186. Association for Computational Linguistics.

\bibitem[Dong et~al., 2018]{speech-transformers}
Dong, L., Xu, S., and Xu, B. (2018).
\newblock Speech-transformer: {A} no-recurrence sequence-to-sequence model for
  speech recognition.
\newblock In {\em 2018 {IEEE} International Conference on Acoustics, Speech and
  Signal Processing, {ICASSP} 2018, Calgary, AB, Canada, April 15-20, 2018},
  pages 5884--5888. {IEEE}.

\bibitem[Dosovitskiy et~al., 2021]{vision-transformers}
Dosovitskiy, A., Beyer, L., Kolesnikov, A., Weissenborn, D., Zhai, X.,
  Unterthiner, T., Dehghani, M., Minderer, M., Heigold, G., Gelly, S.,
  Uszkoreit, J., and Houlsby, N. (2021).
\newblock An image is worth 16x16 words: Transformers for image recognition at
  scale.
\newblock In {\em 9th International Conference on Learning Representations,
  {ICLR} 2021, Virtual Event, Austria, May 3-7, 2021}. OpenReview.net.

\bibitem[DuSell and Chiang, 2024]{dusell24}
DuSell, B. and Chiang, D. (2024).
\newblock Stack attention: Improving the ability of transformers to model
  hierarchical patterns.
\newblock In {\em Proc. ICLR}, Vienna, Austria.

\bibitem[Faran and Kupferman, 2018]{FK18}
Faran, R. and Kupferman, O. (2018).
\newblock {LTL} with arithmetic and its applications in reasoning about
  hierarchical systems.
\newblock In Barthe, G., Sutcliffe, G., and Veanes, M., editors, {\em
  {LPAR-22.} 22nd International Conference on Logic for Programming, Artificial
  Intelligence and Reasoning, Awassa, Ethiopia, 16-21 November 2018}, volume~57
  of {\em EPiC Series in Computing}, pages 343--362. EasyChair.

\bibitem[Figueira and Lin, 2022]{FL22}
Figueira, D. and Lin, A.~W. (2022).
\newblock Reasoning on data words over numeric domains.
\newblock In {\em {LICS}}, pages 37:1--37:13. {ACM}.

\bibitem[Garey et~al., 1976]{GGJ76}
Garey, M.~R., Graham, R.~L., and Johnson, D.~S. (1976).
\newblock Some {NP}-complete geometric problems.
\newblock In Chandra, A.~K., Wotschke, D., Friedman, E.~P., and Harrison,
  M.~A., editors, {\em Proceedings of the 8th Annual {ACM} Symposium on Theory
  of Computing, May 3-5, 1976, Hershey, Pennsylvania, {USA}}, pages 10--22.
  {ACM}.

\bibitem[Grumberg et~al., 2010]{GKS10}
Grumberg, O., Kupferman, O., and Sheinvald, S. (2010).
\newblock Variable automata over infinite alphabets.
\newblock In Dediu, A., Fernau, H., and Mart{\'{\i}}n{-}Vide, C., editors, {\em
  Language and Automata Theory and Applications, 4th International Conference,
  {LATA} 2010, Trier, Germany, May 24-28, 2010. Proceedings}, volume 6031 of
  {\em Lecture Notes in Computer Science}, pages 561--572. Springer.

\bibitem[Hahn, 2020]{Hahn20}
Hahn, M. (2020).
\newblock Theoretical limitations of self-attention in neural sequence models.
\newblock {\em Trans. Assoc. Comput. Linguistics}, 8:156--171.

\bibitem[Hao et~al., 2022]{Hao22}
Hao, Y., Angluin, D., and Frank, R. (2022).
\newblock Formal language recognition by hard attention transformers:
  Perspectives from circuit complexity.
\newblock {\em Trans. Assoc. Comput. Linguistics}, 10:800--810.

\bibitem[Hesse et~al., 2002]{DBLP:journals/jcss/HesseAB02}
Hesse, W., Allender, E., and Barrington, D. A.~M. (2002).
\newblock Uniform constant-depth threshold circuits for division and iterated
  multiplication.
\newblock {\em J. Comput. Syst. Sci.}, 65(4):695--716.

\bibitem[Jez et~al., 2023]{JLMR23}
Jez, A., Lin, A.~W., Markgraf, O., and R{\"{u}}mmer, P. (2023).
\newblock Decision procedures for sequence theories.
\newblock In {\em {CAV} {(2)}}, volume 13965 of {\em Lecture Notes in Computer
  Science}, pages 18--40. Springer.

\bibitem[Kaminski and Francez, 1994]{KF94}
Kaminski, M. and Francez, N. (1994).
\newblock Finite-memory automata.
\newblock {\em Theor. Comput. Sci.}, 134(2):329--363.

\bibitem[Karimov et~al., 2022]{skolem-survey}
Karimov, T., Kelmendi, E., Ouaknine, J., and Worrell, J. (2022).
\newblock What's decidable about discrete linear dynamical systems?
\newblock In Raskin, J., Chatterjee, K., Doyen, L., and Majumdar, R., editors,
  {\em Principles of Systems Design - Essays Dedicated to Thomas A. Henzinger
  on the Occasion of His 60th Birthday}, volume 13660 of {\em Lecture Notes in
  Computer Science}, pages 21--38. Springer.

\bibitem[Libkin et~al., 2016]{LMV16}
Libkin, L., Martens, W., and Vrgoc, D. (2016).
\newblock Querying graphs with data.
\newblock {\em J. {ACM}}, 63(2):14:1--14:53.

\bibitem[Lipton et~al., 2022]{lipton22}
Lipton, R., Luca, F., Nieuwveld, J., Ouaknine, J., Purser, D., and Worrell, J.
  (2022).
\newblock On the {Skolem} problem and the {Skolem} conjecture.
\newblock In Baier, C. and Fisman, D., editors, {\em {LICS} '22: 37th Annual
  {ACM/IEEE} Symposium on Logic in Computer Science, Haifa, Israel, August 2 -
  5, 2022}, pages 5:1--5:9. {ACM}.

\bibitem[Liu et~al., 2023]{iTransformer-iclr24}
Liu, Y., Hu, T., Zhang, H., Wu, H., Wang, S., Ma, L., and Long, M. (2023).
\newblock itransformer: Inverted transformers are effective for time series
  forecasting.
\newblock {\em CoRR}, abs/2310.06625.
\newblock To appear in ICLR'24.

\bibitem[Luca et~al., 2023]{luca23}
Luca, F., Maynard, J., Noubissie, A., Ouaknine, J., and Worrell, J. (2023).
\newblock {Skolem} meets {Bateman-Horn}.
\newblock {\em CoRR}, abs/2308.01152.

\bibitem[Merrill et~al., 2022]{saturated}
Merrill, W., Sabharwal, A., and Smith, N.~A. (2022).
\newblock Saturated transformers are constant-depth threshold circuits.
\newblock {\em Trans. Assoc. Comput. Linguistics}, 10:843--856.

\bibitem[Mishra, 1993]{Mishra93}
Mishra, B. (1993).
\newblock {\em Algorithmic Algebra}.
\newblock Texts and Monographs in Computer Science. Springer.

\bibitem[P{\'{e}}rez et~al., 2021]{Perez21}
P{\'{e}}rez, J., Barcel{\'{o}}, P., and Marinkovic, J. (2021).
\newblock Attention is turing-complete.
\newblock {\em J. Mach. Learn. Res.}, 22:75:1--75:35.

\bibitem[Reif, 1987]{DBLP:conf/coco/Reif87}
Reif, J.~H. (1987).
\newblock On threshold circuits and polynomial computation.
\newblock In {\em Proceedings of the Second Annual Conference on Structure in
  Complexity Theory, Cornell University, Ithaca, New York, USA, June 16-19,
  1987}, pages 118--123. {IEEE} Computer Society.

\bibitem[Reif and Tate, 1992]{DBLP:journals/siamcomp/ReifT92}
Reif, J.~H. and Tate, S.~R. (1992).
\newblock On threshold circuits and polynomial computation.
\newblock {\em {SIAM} J. Comput.}, 21(5):896--908.

\bibitem[Schrijver, 1986]{Schrijver1986}
Schrijver, A. (1986).
\newblock {\em Theory of linear and integer programming}.
\newblock John Wiley \& Sons.

\bibitem[Segoufin and Toru\'{n}czyk, 2011]{ST11}
Segoufin, L. and Toru\'{n}czyk, S. (2011).
\newblock Automata based verification over linearly ordered data domains.
\newblock In Schwentick, T. and D{\"{u}}rr, C., editors, {\em 28th
  International Symposium on Theoretical Aspects of Computer Science, {STACS}
  2011, March 10-12, 2011, Dortmund, Germany}, volume~9 of {\em LIPIcs}, pages
  81--92. Schloss Dagstuhl - Leibniz-Zentrum f{\"{u}}r Informatik.

\bibitem[Sipser, 1997]{Sipser-book}
Sipser, M. (1997).
\newblock {\em Introduction to the Theory of Computation}.
\newblock PWS Publishing Company.

\bibitem[Strobl, 2023]{Strobl23}
Strobl, L. (2023).
\newblock Average-hard attention transformers are constant-depth uniform
  threshold circuits.
\newblock {\em CoRR}, abs/2308.03212.

\bibitem[Strobl et~al., 2024]{transformers-transducers}
Strobl, L., Angluin, D., Chiang, D., Rawski, J., and Sabharwal, A. (2024).
\newblock Transformers as transducers.
\newblock {\em CoRR}, abs/2404.02040.

\bibitem[Strobl et~al., 2023]{transformers-survey}
Strobl, L., Merrill, W., Weiss, G., Chiang, D., and Angluin, D. (2023).
\newblock Transformers as recognizers of formal languages: A survey on
  expressivity.

\bibitem[Toru\'{n}czyk and Zeume, 2022]{TZ22}
Toru\'{n}czyk, S. and Zeume, T. (2022).
\newblock Register automata with extrema constraints, and an application to
  two-variable logic.
\newblock {\em Log. Methods Comput. Sci.}, 18(1).

\bibitem[Vaswani et~al., 2017]{Vaswani}
Vaswani, A., Shazeer, N., Parmar, N., Uszkoreit, J., Jones, L., Gomez, A.~N.,
  Kaiser, L., and Polosukhin, I. (2017).
\newblock Attention is all you need.
\newblock In {\em NeurIPS}, pages 5998--6008.

\bibitem[Veanes et~al., 2012]{popl12}
Veanes, M., Hooimeijer, P., Livshits, B., Molnar, D., and Bj{\o}rner, N.~S.
  (2012).
\newblock Symbolic finite state transducers: algorithms and applications.
\newblock In Field, J. and Hicks, M., editors, {\em Proceedings of the 39th
  {ACM} {SIGPLAN-SIGACT} Symposium on Principles of Programming Languages,
  {POPL} 2012, Philadelphia, Pennsylvania, USA, January 22-28, 2012}, pages
  137--150. {ACM}.

\bibitem[Vollmer, 1999]{Vollmer-book}
Vollmer, H. (1999).
\newblock {\em Introduction to Circuit Complexity}.
\newblock Springer.

\bibitem[Wen et~al., 2023]{time-series-survey}
Wen, Q., Zhou, T., Zhang, C., Chen, W., Ma, Z., Yan, J., and Sun, L. (2023).
\newblock Transformers in time series: {A} survey.
\newblock In {\em {IJCAI}}, pages 6778--6786. ijcai.org.

\bibitem[Zhou et~al., 2021]{Zhou21}
Zhou, H., Zhang, S., Peng, J., Zhang, S., Li, J., Xiong, H., and Zhang, W.
  (2021).
\newblock Informer: Beyond efficient transformer for long sequence time-series
  forecasting.
\newblock In {\em Thirty-Fifth {AAAI} Conference on Artificial Intelligence,
  {AAAI} 2021, Thirty-Third Conference on Innovative Applications of Artificial
  Intelligence, {IAAI} 2021, The Eleventh Symposium on Educational Advances in
  Artificial Intelligence, {EAAI} 2021, Virtual Event, February 2-9, 2021},
  pages 11106--11115. {AAAI} Press.

\end{thebibliography}

\ifFull{%
\newpage

\appendix

\section{Example UHAT with real parameters}\label{appendix-example-real-parameters}
We present here an example UHAT in which two real numbers $\alpha$ and $\beta$ occur (each occurs once in a matrix associated to a particular layer) such that the simple sequence 
\begin{equation} (1,0,\be_1)(1,0,\be_2)(1,0,\be_3), \label{example-sequence}\end{equation}
where $\be_i\in\R^3$ is the $i$-th unit vector,
 is accepted if and only if $\alpha\beta=2$ and $\alpha=\beta$. This shows that even if we restrict the input sequence to a particular number $B$ of bits (e.g. the number of bits to represent the input sequence \cref{example-sequence}), it is not possible to replace the real constants $\alpha$ and $\beta$ by rational numbers without changing the accepted sequences of up to $B$ bits: The sequence \cref{example-sequence} is only accepted if $\alpha=\beta=\sqrt{2}$. And if we change $\alpha$ or $\beta$ in any way, the sequence \cref{example-sequence} will not be accepted anymore.

\begin{enumerate}
\item Input layer: 
\[ (1,0,\be_1)(1,0,\be_2)(1,0,\be_3). \]
\item Using a standard encoding layer, we multiply the first component in each position by $\alpha$.
Result: 
\[ (\alpha,0,\be_1)(\alpha,0,\be_2)(\alpha,0,\be_3). \]
\item Using attention, we can apply distinct affine transformations to the first three positions. We choose the following affine transformations. The first position is unchanged. The second position is mapped to $(1,0,\be_2)$, and the third position is mapped to $(0,\alpha,\be_3)$, using a matrix that flips the first and second component. Result: 
\[ (\alpha,0,\be_1)(1,0,\be_2)(0,\alpha,\be_3). \]
\item Using a standard encoding layer, we multiply the first component in each position with $\beta$.
Result: 
\[ (\alpha\beta,0,\be_1)(\beta,0,\be_2)(0,\alpha,\be_3). \]
\item Finally, by using our result on $\ltltl$, we can build further layers so that we accept if and only if (i)~the first component of the first position equals $2$ and (ii)~the first component of the second position equals the second component of the third position. Thus, we accept our original input if and only if $\alpha\beta=2$ and $\alpha=\beta$.
\end{enumerate}

\section{Omitted definitions and proofs in Section~\ref{sec:tc0}} \label{appendix-tc0}

\subsection{Descriptional size}
\newcommand{\size}[1]{\operatorname{size}(#1)}

Let $x\in\R$ be some real number. The \emph{(descriptional) size} of $x$ is $\size{x}=1+\lceil\log_2(|p|+1)\rceil+\lceil\log_2(|q|+1)\rceil$ if $x=\tfrac{p}{q}$ is a rational number (where $p$ and $q$ are relatively prime integers) and $\size{x}=1$ if $x$ is an irrational number. Note that in the latter case we use the number $1$ as some placeholder, since we do not have to represent irrational numbers in any algorithm. However, for analysis of the sizes in our constructions we still need some value.

Let $\bv\in\R^d$ be a vector. The \emph{size} of $\bv$ is $\size{\bv}=n+\sum_{i=1}^d\size{v_i}$ where $\bv=(v_1,\dots,v_d)^T$.

Let $M\in\R^{m\times n}$ be a matrix. The \emph{size} of $M$ is $\size{M}=mn+\sum_{1\leq i\leq m,1\leq j\leq n}\size{a_{ij}}$ where $M=(a_{ij})_{1\leq i\leq m,1\leq j\leq n}$.

Let $A\colon\R^{d}\to\R^e$ be an affine transformation, i.e., we have $A(\bx)=B\bx+\bc$ with $B\in\R^{d\times e}$ and $\bc\in\R^e$. Then the \emph{size} of $A$ is $\size{A}=\size{B}+\size{c}+1$.

Now, let $p\in\R[X_1,\dots,X_n]$ be a polynomial. Then we have $p(X_1,\dots,X_n)=\sum_{0\leq r_1,\dots,r_n\leq k}c_{r_1,\dots,r_n}X_1^{r_1}\cdots X_n^{r_n}$ for some numbers $k\in\N$ and $c_{r_1,\dots,r_n}\in\R$. The \emph{size} of $p$ is
\[\size{p}=\sum_{0\leq r_1,\dots,r_n\leq k}\size{c_{r_1,\dots,r_n}}+\size{r_1}+\dots+\size{r_n}+n\,.\]

Let $\alpha$ be a polynomial constraint. We define the \emph{size} of $\alpha$ inductively on the structure of the formula as follows:
\begin{itemize}
  \item if $\alpha=(p(X_1,\dots,X_n)\sim0)$ with $\mathord{\sim}\in\{>,\geq\}$ is an atom. Then $\size{\alpha}=\size{p}+1$.
  \item if $\alpha=\bigwedge_{1\leq i\leq k}\beta_i$ or $\alpha=\bigvee_{1\leq i\leq k}\beta_i$ is a formula with PCs $\beta_1,\dots,\beta_k$, then $\size{\alpha}=k+\sum_{1\leq i\leq k}\size(\beta_i)$.
\end{itemize}

Let $R=(\phi_1,D_1),\dots,(\phi_k,D_k)$ be a CPR. Then the \emph{size} of this CPR is $\size{R}=k+\sum_{i=1}^{k}\size{\phi_i}+\size{D_k}$.

\subsection{ReLU case in Lemma~\ref{construct-cpr}}
Let us now consider a ReLU layer. Assume that we compute the ReLU-value for the $j$-th component, i.e., we compute $\max\{0,x_{i,j}\}$ for each $i\in\{1,\dots,n+1\}$ where $x_{i,j}$ is the $j$-th component of $\bx_i$. From each conditional assignment $\varphi_{i,k}\to D_{i,k}$ with $i\in\{1,\dots,n+1\}$ and $1\leq k\leq s_i$ we construct two new conditional assignments:
\begin{enumerate}
  \item $\langle\be_{(i-1)\cdot n+j},D_{i,k}\bar{\bx}\rangle\geq0\to D_{i,k}$ where $\be_h$ is the $h$-th unit vector.
  \item $\langle-\be_{(i-1)\cdot n+j},D_{i,k}\bar{\bx}\rangle>0\to MD_{i,k}$ where $M=(m_{gh})_{1\leq g,h\leq d'\cdot (n+1)}$ is the matrix with $m_{gh}=1$ if $g=h\neq(i-1)\cdot n+j$ and $m_{gh}=0$ otherwise (i.e., $M$ is the unit matrix except for the $(i-1)\cdot j$-th entry).
\end{enumerate}
Now, if the $j$-th component of $\bx_i$ is non-negative, only the first conditional assignment is satisfied and the value of this component is left untouched. Otherwise, the $j$-th component is negative. But then the second conditional assignment is satisfied and the value of this component is set to $0$ (while the others stay unchanged). So, we obtain again some polynomial sized CPR with the same number of alternations as before.

\subsection{Omitted proofs}
\begin{proof}[Proof of \cref{prop:uhatToPC}]
  Let $\bt\in\R^{e}$ be the acceptance criterion of the UHAT and $f\colon\R^{d\cdot (n+1)}\to\R^{e\cdot (n+1)}$ be the computed function for inputs of length $n$.
  By \cref{construct-cpr}, we can construct in polynomial time an $O(\ell)$-alternation-bounded CPR computing $f$. So, let $\varphi_{i,k}\to D_{i,k}$ be the conditional assignments in this CPR (for $1\leq i\leq n+1$ and $1\leq k\leq s_i$). Then we obtain a PC from this CPR as follows:
  \[ \bigvee_{J=1}^{s_1} \varphi_{1,J}\wedge \langle \bt,D_{1,J}\bar{\bx}\rangle>0\,. \]
  Note that this PC has still polynomial size, accepts an input sequence $(\bx_1,\dots,\bx_n,\bm0)$ if, and only if, the UHAT accepts $(\bx_1,\dots,\bx_n)$, and --- if the CPR is $a$-alternation-bounded --- then it has at most $a+2$ alternations of disjunctions and conjunctions.
\end{proof}

\begin{proof}[Proof of \cref{rational-pc}]
  Consider a constraint $p(X_1,\dots,X_n)>0$ (or $\geq0$ resp.) in $\alpha$. Then $p\in\R[X_1,\dots,X_n]$ is a polynomial of degree at most $2$, i.e., there are real numbers $c_{i,j,r,s}\in\R$ such that
  \[p(X_1,\dots,X_n)=\sum_{0\leq r+s\leq 2}\sum_{1\leq i\leq j\leq n}c_{i,j,r,s}X_i^rX_j^s\,.\]
  Now, construct two vectors $\bu$ and $\bw$ with a component for each tuple $(i,j,r,s)$: $u_{i,j,r,s}=c_{i,j,r,s}$ and $w={i,j,r,s}=X_i^rX_j^s$. Then it is clear that $p(X_1,\dots,X_n)=\langle\bu,\bw(X_1,\dots,X_n)\rangle$ holds.
  
  Application of \cref{equivalent-rational-vector} yields a vector $\bv\in\Q^t$ with $\normof{\bv}\leq(2mt)^{O(1)}$ and $\bu\sim_{2m}\bv$. Note that we need to consider rational numbers up to size $2m$ due to the fact that substitution of the variables in $\bw$ by rational numbers $\bx\in\Q^n_{\leq m}$ yields a rational vector $\bw(\bx)\in Q^t_{\leq2m}$. Let $p'(X_1,\dots,X_n)$ be the polynomial obtained from $p$ be replacing the coefficients $c_{i,j,r,s}\in\R$ by $v_{i,j,r,s}\in\Q$. Then for each $\bx\in\Q^n_{\leq m}$ we have $p(\bx)=\langle\bu,\bw(\bx)\rangle>0$ (resp. $\geq0$) if, and only if, $p'(\bx)=\langle\bv,\bw(\bx)\rangle>0$ (resp. $\geq0$).
  
  Replacing each real polynomial $p$ in $\alpha$ by the constructed rational polynomial $p'$ results in a rational PC $\alpha'$ with $\setof{\alpha}\cap\Q^n_{\leq m}=\setof{\alpha}\cap\Q^n_{\leq m}$.
\end{proof}

\subsection{Proof of Lemma~\ref{rational-solution}}\label{appendix-rational-solution}
\newcommand{\convhull}[1]{\mathsf{conv.hull}~#1}
\newcommand{\cone}[1]{\mathsf{cone}~#1}
The rest of this subsection is devoted to proving \cref{equivalent-rational-vector}, for which we rely on results from convex geometry, which
requires some terminology. For a set $S\subseteq\R^n$, we define the \emph{convex hull of $S$} as
\begin{multline*} \convhull{S}=\{\lambda_1\bs_1+\cdots+\lambda_m\bs_m \mid m>0,~\bs_1,\ldots,\bs_m\in S,\\\lambda_1,\ldots,\lambda_m\in[0,1],~\lambda_1+\cdots+\lambda_m=1\} \end{multline*}
and the \emph{cone generated by $S$} as
\begin{multline*} \cone{S}=\{\lambda_1\bs_1+\cdots+\lambda_m\bs_m \mid m>0,~\bs_1,\ldots,\bs_m\in S,~\lambda_1,\ldots,\lambda_m\ge 0\}. \end{multline*}
We will also rely on Carath\'{e}odory's theorem, which says that a points in cones and convex sets can be obtained from at most $n$ points. See, for example, \cite[Theorems 7.1i and 7.1j]{Schrijver1986}.
\begin{theorem}[Carath\'{e}odory's Theorem]
	Let $S\subseteq\R^n$. For every $\bx\in\convhull{S}$, there are
	$\bx_1,\ldots,\bx_n\in S$ with $\bx\in\convhull{\{\bx_1,\ldots,\bx_n\}}$. Moreover, for
	every $\by\in\cone{S}$, there are $\by_1,\ldots,\by_n\in S$ with
	$\by\in\cone{\{\by_1,\ldots,\by_n\}}$.
\end{theorem}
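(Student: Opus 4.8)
The plan is to prove both halves by the same ``minimal representation plus dependence elimination'' argument, establishing the conic statement first and then deducing the convex one from it by homogenization. The whole argument is elementary: the idea is that whenever a point is written as a combination of ``too many'' generators, a linear (resp.\ affine) dependence among them lets us perturb the coefficients so as to zero out one term without leaving the cone (resp.\ convex hull), and iterating drives the number of generators down to the dimension bound available in the ambient space.

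For the cone case, take $\by\in\cone{S}$ and, among all ways of writing $\by=\sum_{i=1}^m\lambda_i\by_i$ with $\by_1,\dots,\by_m\in S$ and $\lambda_1,\dots,\lambda_m>0$, fix one with $m$ as small as possible. I would show that the vectors $\by_1,\dots,\by_m$ are then linearly independent, which forces $m\le n$ since $\R^n$ contains no more than $n$ linearly independent vectors. Suppose they were dependent: then there are scalars $\mu_1,\dots,\mu_m$, not all zero, with $\sum_{i=1}^m\mu_i\by_i=\bm 0$, and after replacing $(\mu_i)$ by $(-\mu_i)$ if necessary we may assume some $\mu_i>0$. For $t\ge 0$ set $\lambda_i(t):=\lambda_i-t\mu_i$, so that $\sum_i\lambda_i(t)\by_i=\by$ for all $t$; letting $t^\star:=\min\{\lambda_i/\mu_i:\mu_i>0\}$, attained at an index $i_0$, we get $\lambda_i(t^\star)\ge 0$ for every $i$, with $\lambda_{i_0}(t^\star)=0$. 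Deleting the vanishing term yields a representation of $\by$ with fewer than $m$ positive coefficients, contradicting minimality; hence the $\by_i$ are independent and $m\le n$.

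For the convex case I would lift one dimension higher: with $\bx\in\convhull{S}$, put $\hat{S}:=\{(\bs,1):\bs\in S\}\subseteq\R^{n+1}$ and observe $(\bx,1)\in\cone{\hat{S}}$, since any convex combination $\bx=\sum_i\lambda_i\bs_i$ (with $\lambda_i\ge 0$, $\sum_i\lambda_i=1$) lifts to the conic combination $(\bx,1)=\sum_i\lambda_i(\bs_i,1)$. Applying the conic statement just proved, now in $\R^{n+1}$, expresses $(\bx,1)$ as a nonnegative combination of boundedly many lifted points; reading off the last coordinate shows the coefficients sum to $1$, so projecting back exhibits $\bx$ as a convex combination of the stated number of points of $S$. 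Equivalently, one can argue directly in $\R^n$ using an affine dependence $\sum_i\mu_i=0$, $\sum_i\mu_i\bx_i=\bm 0$ together with the same perturbation $\lambda_i(t)=\lambda_i-t\mu_i$, which preserves both $\sum_i\lambda_i(t)=1$ and $\sum_i\lambda_i(t)\bx_i=\bx$ while eliminating one term whenever the points are affinely dependent. I expect no genuine obstacle here, as the result is classical; the one step to write out with care is this perturbation, namely choosing the sign of the dependence so that some coefficient can be driven to $0$ and then checking that at $t=t^\star$ every coefficient is still nonnegative (and, in the convex case, that they still sum to $1$), so that deleting the vanishing term leaves a valid combination and the final independence bound applies.
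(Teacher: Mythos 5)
Your argument is correct, but there is nothing in the paper to compare it against in the usual sense: the paper does not prove this theorem, it imports it as a classical black box with a pointer to \cite[Theorems 7.1i and 7.1j]{Schrijver1986}. Your proof is the standard one (and essentially the one in the cited source): for the conic half, take a representation $\by=\sum_{i=1}^m\lambda_i\by_i$ with all $\lambda_i>0$ and $m$ minimal, use a linear dependence $\sum_i\mu_i\by_i=\bm 0$ with some $\mu_i>0$ to perturb $\lambda_i\mapsto\lambda_i-t\mu_i$ up to $t^\star=\min\{\lambda_i/\mu_i:\mu_i>0\}$, and conclude linear independence, hence $m\le n$; the homogenization $\hat S=\{(\bs,1):\bs\in S\}\subseteq\R^{n+1}$ correctly reduces the convex case to the conic one, and your direct affine-dependence variant is equally fine. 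If one insists on \emph{exactly} $n$ vectors as in the statement rather than at most $n$, pad with repetitions; this is cosmetic.

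One concrete point should not be glossed over, however. The convex-hull half \emph{as literally stated}, with only $n$ points in $\R^n$, is false, and your proof does not deliver it --- nor could any: your lift lands in $\R^{n+1}$, so the conic bound there yields at most $n+1$ linearly independent lifted vectors, i.e.\ at most $n+1$ points of $S$, which is the classical Carath\'eodory bound (equivalently, affine rather than linear independence in $\R^n$). Your phrase ``the stated number of points'' hides this off-by-one. A counterexample to the $n$-point version: $S=\{0,1\}\subseteq\R^1$ and $\bx=\tfrac12\in\convhull{S}$, which lies in the convex hull of no single element of $S$. (An $n$-point version does hold for connected $S$ by Fenchel--Bunt, but that is irrelevant here.) The slip originates in the paper's statement rather than in your argument, and it is harmless for the paper's only application, the proof of \cref{rational-solution}, where one may simply take the barycenter of $n+1$ points $\bx_1,\ldots,\bx_{n+1}$ instead of $n$; the resulting size bound $(mn)^{O(1)}$ is unaffected. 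Still, a careful write-up should state the convex half with $n+1$ points, and your proof then establishes it in full.
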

A \emph{polyhedron} is a set of the form $\{\bx\in\R^n \mid A\bx\ge\bb\}$,
where $A\in\R^{m\times n}$ and $\bb\in\R^m$ for some $m\in\N$. If the matrix
$A$ and the vector $\bb$ are rational, then the polyhedron is called a
\emph{rational polyhedron}. It is a standard result about polyhedra that if $A$ and $\bb$ are rational of size at most $m$, then the polyhedron $P=\{\bx\in\R^n\mid A\bx\ge\bb\}$ can be written as
\[ P=\convhull\{\bx_1,\ldots,\bx_s\}+\cone\{\by_1,\ldots,\by_t\}, \]
with rational vectors $\bx_1,\ldots,\bx_s,\by_1,\ldots,\by_t$, where each
vector has size polynomial in $mn$. See, for example, \cite[Theorem
10.2]{Schrijver1986}. 
\begin{proof}
	We define the matrix $B\in \Q^{(k+\ell)\times n}$ and the vector $\bb\in\Q^{k+\ell}$ as
	\begin{align*}
		B &= \begin{bmatrix} A \\ A'\end{bmatrix} & \bb=\begin{bmatrix} \bz \\ \bz'\end{bmatrix}.
	\end{align*}
	and consider the polyhedron $P=\{\bx\in\R^n \mid B\bx\ge \bb\}$. As mentioned above, we can write
	\[ P=\convhull\{\bx_1,\ldots,\bx_s\}+\cone\{\by_1,\ldots,\by_t\}, \]
	where $\bx_1,\ldots,\bx_s,\by_1,\ldots,\by_t$ are rational vectors of
	size polynomial in $mn$. By our assumption, there exists an $\bs\in\R^n$ with $A\bs\gg\bz$ and $A'\bs\ge\bz'$. By Carath\'{e}odory's Theorem, wlog, $\bs$ belongs to the smaller polyhedron
	\[ Q=\convhull\{\bx_1,\ldots,\bx_n\}+\cone\{\by_1,\ldots,\by_n\}. \]
	Now note that we have $A\bu\ge\bz$ and $A'\bu\ge\bz'$ for every
	$\bu\in U:=\{\bx_1,\ldots,\bx_n,\by_1,\ldots,\by_n\}$.
	We claim that the vector 
	\[ \br=\tfrac{1}{n}(\bx_1+\cdots+\bx_n)+\by_1+\cdots+\by_n \]
	satisfies $A\br\gg\bz$ and $A'\br\ge\bz'$. Indeed, it clearly belongs
	to $Q\subseteq P$ and thus satisfies $A'\br\ge\bz'$. Moreover, for every
	row $\ba^\top \bx>z$ of $A\bx\gg\bz$, there must be a $\bu\in U$ with
	$\ba^\top\bu>z$---otherwise, we would would have $\ba^\top\bu=z$ for
	every $\bu\in U$ and thus $\ba^\top\bs=z$. In particular, we have
	$A\br\gg\bz$.  Finally, the vector $\br$ has size at most
	$\normof{\bx_1}+\cdots+\normof{\bx_n}+\normof{\by_1}+\cdots+\normof{\by_n}$,
	which is polynomial in $mn$, since each $\bx_i$ and each $\by_i$ has size polynomial in $mn$.
\end{proof}

\subsection{Proof of Lemma~\ref{equivalent-rational-vector}}\label{appendix-equivalent-rational-vector}
\begin{proof}[Proof of \cref{equivalent-rational-vector}]
	Collect the set of all inequalities $\langle \bw,\bu\rangle>z$ or
	$\langle \bw,\bu\rangle\ge z$ with $\bw\in\Q^t_{\le m}$ and
	$z\in\Q_{\le m}$ that are satisfied for $\bu$.  This results in two
	large matrices $A\in\Q^{k\times t}_{\le m}$ and $A'\in\Q^{\ell\times
	t}_{\le m}$ and vectors $\bz\in\Q_{\le m}^k$ and $\bz'\in\Q^{\ell}_{\le
	m}$ such that we have $\bu\sim_m\bv$ if and only if $A\bv\gg\bz$ and
	$A'\bv\ge\bz'$. Thus, we can construct $\bc'$ using
	\cref{rational-solution}.  Observe that the bound from
	\cref{rational-solution} does not depend on the (exponentially large)
	$k$ and $\ell$.
\end{proof}

\begin{proof}[Proof of \cref{construct-tc0}]
  Let $T$ be some UHAT with positional encoding and $n,m\in\N$ be two natural numbers. In the following, we consider input sequences of $T$ having the length $n$ and size $m$. By \cref{prop:uhatToPC,rational-pc} there is polynomial sized, $O(\ell)$-alternation-bounded, and rational PC $\alpha$ (where $\ell$ is the number of layers in $T$) such that the set of sequences of size $m$ accepted by the UHAT $T$ equals $\setof{\alpha}\cap\Q^{n\cdot d}_{\leq m}$.
  
	We finally show that the PC $\alpha$ can be realized as a circuit of constant depth and polynomial size. So, consider a constraint of the form $p(\bar{\bx}) \sim 0$ where $\mathord{\sim} \in\{\geq,>\}$, $p$ is a polynomial of degree at most $2$ and $\bar{\bx}$ represents the input sequence. Since addition and multiplication of rational numbers are realizable in $\TCzero$~\cite{chandra1984constant}, it is clear that the computation of the value $p(\bar{\bx})$ is also realizable. Additionally, checking whether this value is $\geq0$ (or $>0$, resp.) is a simple check of the bit representing the signum (and checking that the numerator has at least one non-zero bit).
  
  Finally, we have to connect all the atoms of the form $p(\bar{\bx}) \sim 0$
to the Boolean formula $\alpha$. Since $\alpha$ alternates only a bounded
number of times between disjunctions and conjunctions, we can realize the
complete formula $\alpha$ in a circuit of constant depth and with polynomial
size. Since $\alpha$ is equivalent (up to the input size $m$), the UHAT $T$ is
realizable in $\TCzero$. Note that here, we need to perform iterated addition
of rational numbers, which requires iterated multiplication of integers
represented in binary. The latter is well-known to be possible in
$\TCzero$~\cite{DBLP:conf/coco/Reif87,DBLP:journals/siamcomp/ReifT92} (even
uniformly~\cite{DBLP:journals/jcss/HesseAB02}, but this is not needed in our
setting).
\end{proof}

\section{Proof of non-regularity in Section \ref{sec:non-regular}}
\label{app:non-regular}

Recall that 
\[\mathsf{Double} := \{(r_1,\dots,r_n) \in \Q^n \mid n \ge 1 \text{ and } 2r_i <
r_{i+1} \text{ for all } 1 \le i < n\}.
\]

We next define the notion of symbolic automata \cite{DV21,popl12,DV17}.
A \emph{symbolic automaton} is a tuple $(Q,\delta,q_0,F)$, where $Q$ is a finite
set of states, $q_0 \in Q$ is an initial state, $F \subseteq Q$ is a set of
final states, and $\delta$ is a set of transition rules of the form $(p,S,q)$,
where $S \subseteq \Q$. For $a \in \Q$, we write $p \to_a q$ (read ``there
is a transition from $p$ to $q$ reading $a$) if there is a
transition rule $(p,S,q)$ such that $a \in S$. Slightly abusing notation, for a
set $S \subseteq \Q$, we also write $p \to_S q$ to mean that $(p,S,q)$ is a
transition rule in $\delta$.  The notion of a run, and an
accepting run can then be defined in exactly the same way as for finite
automata (e.g. see \cite{Sipser-book}); namely, it is a sequence of transitions
$q_0 \to_{a_1} \cdots \to_{a_n} q_n$, where $q_n \in F$, reading the sequence 
$w = a_1\cdots a_n$ over $\Q$. 

To prove that there is no symbolic automaton recognizing $\mathsf{Double}$, let
us assume to the contrary that such an automaton $A$ exists, say, with $n$
states. Consider a sufficiently long $w = a_1\cdots a_m \in \mathsf{Double}$ 
(i.e. of length at least $n$), and an accepting run of $A$:
\[
    q_0 \to_{S_1} \cdots \to_{S_m} q_m
\]
where each $a_i \in S_i$ and $q_m \in F$. 
Since $m+1 > n$, by pigeonhole principle, it must be the case that $q_r = q_s$ for some $r < s$.
Thus, also the sequence
\[a_1\cdots a_r (a_{r+1} \cdots a_s)^2 a_{s+1} \cdots a_m\]
is accepted by $A$ and therefore contained in $\mathsf{Double}$.
This is a contradiction since $\mathsf{Double}$ imposes $a_s < a_s$.

\section{Omitted cases in proof of Theorem \ref{thm:ltl}}\label{app:ltl}
For $\neg\varphi$ the UHAT $T_{\neg\varphi,m}$ first applies $T_{\varphi,m}$ and on the obtained sequence
$\bm u_1,\dots,\bm u_n,\bm 0 \in \Q^{m+1}$ uses a further layer followed by \ref{it:iclr-1} to output
$\bm w'_i := (\bm u_i[1,m],1-\bm u_i[m+1])$ at position $i \in [1,n]$ and
$\bm 0 \in \Q^{m+1}$ at position $n+1$.

For $\varphi \vee \psi$ we define the UHAT $T_{\varphi \vee \psi,m}$ that first applies $T_{\psi,m+1} \circ T_{\varphi,m}$
followed by a layer that on sequence $\bm u_1,\dots,\bm u_{n+1} \in \Q^{m+2}$ outputs $\bm w'_1,\dots,\bm w'_{n+1} \in \Q^{m+1}$ with
\[\bm w'_i := (\bm u_i[1,m],\max\{\bm u_i[m+1],\bm u_i[m+2]\})\] 
for all $i \in [1,n+1]$.
Note that if $\bm u_{n+1} = \bm 0$, then also $\bm w'_{n+1} = \bm 0$.

For $X \varphi$ the UHAT $T_{X\varphi,m}$ first applies $T_{\varphi,m}$ to output a sequence
$\bm u_1,\dots,\bm u_{n+1}\in \Q^{m+1}$ with $\bm u_{n+1} = \bm 0$.
With an additional layer that uses \ref{it:iclr-2} to get attention vector
$\bm a_i = \bm u_{i+1}$ for all $i \in [1,n]$ it then outputs at position $i \in [1,n]$ the vector
$\bm w'_i := (\bm u_i[1,m],\bm a_i[m+1])$
and at position $n+1$ the vector $\bm w'_{n+1} := (\bm u_{n+1}[1,m],0)$ after applying \ref{it:iclr-1}.

}

\ifSubmission{%

\newpage
\section*{NeurIPS Paper Checklist}

\begin{enumerate}
  
  \item {\bf Claims}
  \item[] Question: Do the main claims made in the abstract and introduction accurately reflect the paper's contributions and scope?
  \item[] Answer: \answerYes{} 
  \item[] Justification: This paper contains theoretical results on the 
  expressive power of transformers on data sequences. As standard in
  theoretical results, we try to mathematically formulate the results as 
  as precisely as they can be.
  \item[] Guidelines:
  \begin{itemize}
    \item The answer NA means that the abstract and introduction do not include the claims made in the paper.
    \item The abstract and/or introduction should clearly state the claims made, including the contributions made in the paper and important assumptions and limitations. A No or NA answer to this question will not be perceived well by the reviewers. 
    \item The claims made should match theoretical and experimental results, and reflect how much the results can be expected to generalize to other settings. 
    \item It is fine to include aspirational goals as motivation as long as it is clear that these goals are not attained by the paper. 
  \end{itemize}
  
  \item {\bf Limitations}
  \item[] Question: Does the paper discuss the limitations of the work performed by the authors?
  \item[] Answer: \answerYes{} 
  \item[] Justification: We have discussed assumptions we made in this paper
  (e.g., regarding the formal model of transformers) and what consequences
  they have (see Limitations in Concluding Remarks).
  \item[] Guidelines:
  \begin{itemize}
    \item The answer NA means that the paper has no limitation while the answer No means that the paper has limitations, but those are not discussed in the paper. 
    \item The authors are encouraged to create a separate "Limitations" section in their paper.
    \item The paper should point out any strong assumptions and how robust the results are to violations of these assumptions (e.g., independence assumptions, noiseless settings, model well-specification, asymptotic approximations only holding locally). The authors should reflect on how these assumptions might be violated in practice and what the implications would be.
    \item The authors should reflect on the scope of the claims made, e.g., if the approach was only tested on a few datasets or with a few runs. In general, empirical results often depend on implicit assumptions, which should be articulated.
    \item The authors should reflect on the factors that influence the performance of the approach. For example, a facial recognition algorithm may perform poorly when image resolution is low or images are taken in low lighting. Or a speech-to-text system might not be used reliably to provide closed captions for online lectures because it fails to handle technical jargon.
    \item The authors should discuss the computational efficiency of the proposed algorithms and how they scale with dataset size.
    \item If applicable, the authors should discuss possible limitations of their approach to address problems of privacy and fairness.
    \item While the authors might fear that complete honesty about limitations might be used by reviewers as grounds for rejection, a worse outcome might be that reviewers discover limitations that aren't acknowledged in the paper. The authors should use their best judgment and recognize that individual actions in favor of transparency play an important role in developing norms that preserve the integrity of the community. Reviewers will be specifically instructed to not penalize honesty concerning limitations.
  \end{itemize}
  
  \item {\bf Theory Assumptions and Proofs}
  \item[] Question: For each theoretical result, does the paper provide the full set of assumptions and a complete (and correct) proof?
  \item[] Answer: \answerYes{} 
  \item[] Justification: Since this is a theory paper, we try to be explicit in
  all the assumptions that we made for our results. We have also included
  all the proofs for the claims we made in the paper.
  \item[] Guidelines:
  \begin{itemize}
    \item The answer NA means that the paper does not include theoretical results. 
    \item All the theorems, formulas, and proofs in the paper should be numbered and cross-referenced.
    \item All assumptions should be clearly stated or referenced in the statement of any theorems.
    \item The proofs can either appear in the main paper or the supplemental material, but if they appear in the supplemental material, the authors are encouraged to provide a short proof sketch to provide intuition. 
    \item Inversely, any informal proof provided in the core of the paper should be complemented by formal proofs provided in appendix or supplemental material.
    \item Theorems and Lemmas that the proof relies upon should be properly referenced. 
  \end{itemize}
  
  \item {\bf Experimental Result Reproducibility}
  \item[] Question: Does the paper fully disclose all the information needed to reproduce the main experimental results of the paper to the extent that it affects the main claims and/or conclusions of the paper (regardless of whether the code and data are provided or not)?
  \item[] Answer: \answerNA{} 
  \item[] Justification: The paper contains no experiments.
  \item[] Guidelines:
  \begin{itemize}
    \item The answer NA means that the paper does not include experiments.
    \item If the paper includes experiments, a No answer to this question will not be perceived well by the reviewers: Making the paper reproducible is important, regardless of whether the code and data are provided or not.
    \item If the contribution is a dataset and/or model, the authors should describe the steps taken to make their results reproducible or verifiable. 
    \item Depending on the contribution, reproducibility can be accomplished in various ways. For example, if the contribution is a novel architecture, describing the architecture fully might suffice, or if the contribution is a specific model and empirical evaluation, it may be necessary to either make it possible for others to replicate the model with the same dataset, or provide access to the model. In general. releasing code and data is often one good way to accomplish this, but reproducibility can also be provided via detailed instructions for how to replicate the results, access to a hosted model (e.g., in the case of a large language model), releasing of a model checkpoint, or other means that are appropriate to the research performed.
    \item While NeurIPS does not require releasing code, the conference does require all submissions to provide some reasonable avenue for reproducibility, which may depend on the nature of the contribution. For example
    \begin{enumerate}
      \item If the contribution is primarily a new algorithm, the paper should make it clear how to reproduce that algorithm.
      \item If the contribution is primarily a new model architecture, the paper should describe the architecture clearly and fully.
      \item If the contribution is a new model (e.g., a large language model), then there should either be a way to access this model for reproducing the results or a way to reproduce the model (e.g., with an open-source dataset or instructions for how to construct the dataset).
      \item We recognize that reproducibility may be tricky in some cases, in which case authors are welcome to describe the particular way they provide for reproducibility. In the case of closed-source models, it may be that access to the model is limited in some way (e.g., to registered users), but it should be possible for other researchers to have some path to reproducing or verifying the results.
    \end{enumerate}
  \end{itemize}

  \item {\bf Open access to data and code}
  \item[] Question: Does the paper provide open access to the data and code, with sufficient instructions to faithfully reproduce the main experimental results, as described in supplemental material?
  \item[] Answer: \answerNA{} 
  \item[] Justification: The paper contains no experiments and no code.
  \item[] Guidelines:
  \begin{itemize}
    \item The answer NA means that paper does not include experiments requiring code.
    \item Please see the NeurIPS code and data submission guidelines (\url{https://nips.cc/public/guides/CodeSubmissionPolicy}) for more details.
    \item While we encourage the release of code and data, we understand that this might not be possible, so “No” is an acceptable answer. Papers cannot be rejected simply for not including code, unless this is central to the contribution (e.g., for a new open-source benchmark).
    \item The instructions should contain the exact command and environment needed to run to reproduce the results. See the NeurIPS code and data submission guidelines (\url{https://nips.cc/public/guides/CodeSubmissionPolicy}) for more details.
    \item The authors should provide instructions on data access and preparation, including how to access the raw data, preprocessed data, intermediate data, and generated data, etc.
    \item The authors should provide scripts to reproduce all experimental results for the new proposed method and baselines. If only a subset of experiments are reproducible, they should state which ones are omitted from the script and why.
    \item At submission time, to preserve anonymity, the authors should release anonymized versions (if applicable).
    \item Providing as much information as possible in supplemental material (appended to the paper) is recommended, but including URLs to data and code is permitted.
  \end{itemize}

  \item {\bf Experimental Setting/Details}
  \item[] Question: Does the paper specify all the training and test details (e.g., data splits, hyperparameters, how they were chosen, type of optimizer, etc.) necessary to understand the results?
  \item[] Answer: \answerNA{} 
  \item[] Justification: The paper contains no experiments.
  \item[] Guidelines:
  \begin{itemize}
    \item The answer NA means that the paper does not include experiments.
    \item The experimental setting should be presented in the core of the paper to a level of detail that is necessary to appreciate the results and make sense of them.
    \item The full details can be provided either with the code, in appendix, or as supplemental material.
  \end{itemize}
  
  \item {\bf Experiment Statistical Significance}
  \item[] Question: Does the paper report error bars suitably and correctly defined or other appropriate information about the statistical significance of the experiments?
  \item[] Answer: \answerNA{} 
  \item[] Justification: The paper contains no experiments.
  \item[] Guidelines:
  \begin{itemize}
    \item The answer NA means that the paper does not include experiments.
    \item The authors should answer "Yes" if the results are accompanied by error bars, confidence intervals, or statistical significance tests, at least for the experiments that support the main claims of the paper.
    \item The factors of variability that the error bars are capturing should be clearly stated (for example, train/test split, initialization, random drawing of some parameter, or overall run with given experimental conditions).
    \item The method for calculating the error bars should be explained (closed form formula, call to a library function, bootstrap, etc.)
    \item The assumptions made should be given (e.g., Normally distributed errors).
    \item It should be clear whether the error bar is the standard deviation or the standard error of the mean.
    \item It is OK to report 1-sigma error bars, but one should state it. The authors should preferably report a 2-sigma error bar than state that they have a 96\% CI, if the hypothesis of Normality of errors is not verified.
    \item For asymmetric distributions, the authors should be careful not to show in tables or figures symmetric error bars that would yield results that are out of range (e.g. negative error rates).
    \item If error bars are reported in tables or plots, The authors should explain in the text how they were calculated and reference the corresponding figures or tables in the text.
  \end{itemize}
  
  \item {\bf Experiments Compute Resources}
  \item[] Question: For each experiment, does the paper provide sufficient information on the computer resources (type of compute workers, memory, time of execution) needed to reproduce the experiments?
  \item[] Answer: \answerNA{} 
  
  \item[] Justification: The paper contains no experiments.
  \item[] Guidelines:
  \begin{itemize}
    \item The answer NA means that the paper does not include experiments.
    \item The paper should indicate the type of compute workers CPU or GPU, internal cluster, or cloud provider, including relevant memory and storage.
    \item The paper should provide the amount of compute required for each of the individual experimental runs as well as estimate the total compute. 
    \item The paper should disclose whether the full research project required more compute than the experiments reported in the paper (e.g., preliminary or failed experiments that didn't make it into the paper). 
  \end{itemize}
  
  \item {\bf Code Of Ethics}
  \item[] Question: Does the research conducted in the paper conform, in every respect, with the NeurIPS Code of Ethics \url{https://neurips.cc/public/EthicsGuidelines}?
  \item[] Answer: \answerYes{} 
  \item[] Justification: This is a theory paper with no reference to human
  subjects and datasets. 
  Our
  paper does not have neither direct societal impact nor potential harmful
  consequences. We have also carefully checked the impact mitigation
  measures, which do not directly apply to our paper, which
  involves neither experiments nor code.
  \item[] Guidelines:
  \begin{itemize}
    \item The answer NA means that the authors have not reviewed the NeurIPS Code of Ethics.
    \item If the authors answer No, they should explain the special circumstances that require a deviation from the Code of Ethics.
    \item The authors should make sure to preserve anonymity (e.g., if there is a special consideration due to laws or regulations in their jurisdiction).
  \end{itemize}

  \item {\bf Broader Impacts}
  \item[] Question: Does the paper discuss both potential positive societal impacts and negative societal impacts of the work performed?
  \item[] Answer: \answerNA{}
  \item[] Justification: This is a theory paper with no direct societal impacts.
  \item[] Guidelines:
  \begin{itemize}
    \item The answer NA means that there is no societal impact of the work performed.
    \item If the authors answer NA or No, they should explain why their work has no societal impact or why the paper does not address societal impact.
    \item Examples of negative societal impacts include potential malicious or unintended uses (e.g., disinformation, generating fake profiles, surveillance), fairness considerations (e.g., deployment of technologies that could make decisions that unfairly impact specific groups), privacy considerations, and security considerations.
    \item The conference expects that many papers will be foundational research and not tied to particular applications, let alone deployments. However, if there is a direct path to any negative applications, the authors should point it out. For example, it is legitimate to point out that an improvement in the quality of generative models could be used to generate deepfakes for disinformation. On the other hand, it is not needed to point out that a generic algorithm for optimizing neural networks could enable people to train models that generate Deepfakes faster.
    \item The authors should consider possible harms that could arise when the technology is being used as intended and functioning correctly, harms that could arise when the technology is being used as intended but gives incorrect results, and harms following from (intentional or unintentional) misuse of the technology.
    \item If there are negative societal impacts, the authors could also discuss possible mitigation strategies (e.g., gated release of models, providing defenses in addition to attacks, mechanisms for monitoring misuse, mechanisms to monitor how a system learns from feedback over time, improving the efficiency and accessibility of ML).
  \end{itemize}
  
  \item {\bf Safeguards}
  \item[] Question: Does the paper describe safeguards that have been put in place for responsible release of data or models that have a high risk for misuse (e.g., pretrained language models, image generators, or scraped datasets)?
  \item[] Answer: \answerNA{} 
  \item[] Justification: This is a theory paper with neither experiments, codes,
  nor the use of datasets.
  \item[] Guidelines:
  \begin{itemize}
    \item The answer NA means that the paper poses no such risks.
    \item Released models that have a high risk for misuse or dual-use should be released with necessary safeguards to allow for controlled use of the model, for example by requiring that users adhere to usage guidelines or restrictions to access the model or implementing safety filters. 
    \item Datasets that have been scraped from the Internet could pose safety risks. The authors should describe how they avoided releasing unsafe images.
    \item We recognize that providing effective safeguards is challenging, and many papers do not require this, but we encourage authors to take this into account and make a best faith effort.
  \end{itemize}
  
  \item {\bf Licenses for existing assets}
  \item[] Question: Are the creators or original owners of assets (e.g., code, data, models), used in the paper, properly credited and are the license and terms of use explicitly mentioned and properly respected?
  \item[] Answer: \answerNA{} 
  \item[] Justification: This is a theory paper that uses no code, data, and 
  models. We do not use assets of others in any form.
  \item[] Guidelines:
  \begin{itemize}
    \item The answer NA means that the paper does not use existing assets.
    \item The authors should cite the original paper that produced the code package or dataset.
    \item The authors should state which version of the asset is used and, if possible, include a URL.
    \item The name of the license (e.g., CC-BY 4.0) should be included for each asset.
    \item For scraped data from a particular source (e.g., website), the copyright and terms of service of that source should be provided.
    \item If assets are released, the license, copyright information, and terms of use in the package should be provided. For popular datasets, \url{paperswithcode.com/datasets} has curated licenses for some datasets. Their licensing guide can help determine the license of a dataset.
    \item For existing datasets that are re-packaged, both the original license and the license of the derived asset (if it has changed) should be provided.
    \item If this information is not available online, the authors are encouraged to reach out to the asset's creators.
  \end{itemize}
  
  \item {\bf New Assets}
  \item[] Question: Are new assets introduced in the paper well documented and is the documentation provided alongside the assets?
  \item[] Answer: \answerNA{} 
  \item[] Justification: We do not release new assets. 
  \item[] Guidelines:
  \begin{itemize}
    \item The answer NA means that the paper does not release new assets.
    \item Researchers should communicate the details of the dataset/code/model as part of their submissions via structured templates. This includes details about training, license, limitations, etc. 
    \item The paper should discuss whether and how consent was obtained from people whose asset is used.
    \item At submission time, remember to anonymize your assets (if applicable). You can either create an anonymized URL or include an anonymized zip file.
  \end{itemize}
  
  \item {\bf Crowdsourcing and Research with Human Subjects}
  \item[] Question: For crowdsourcing experiments and research with human subjects, does the paper include the full text of instructions given to participants and screenshots, if applicable, as well as details about compensation (if any)? 
  \item[] Answer: \answerNA{} 
  \item[] Justification: This is a theory paper with no crowdsourcing and no
  human subjects. 
  \item[] Guidelines:
  \begin{itemize}
    \item The answer NA means that the paper does not involve crowdsourcing nor research with human subjects.
    \item Including this information in the supplemental material is fine, but if the main contribution of the paper involves human subjects, then as much detail as possible should be included in the main paper. 
    \item According to the NeurIPS Code of Ethics, workers involved in data collection, curation, or other labor should be paid at least the minimum wage in the country of the data collector. 
  \end{itemize}
  
  \item {\bf Institutional Review Board (IRB) Approvals or Equivalent for Research with Human Subjects}
  \item[] Question: Does the paper describe potential risks incurred by study participants, whether such risks were disclosed to the subjects, and whether Institutional Review Board (IRB) approvals (or an equivalent approval/review based on the requirements of your country or institution) were obtained?
  \item[] Answer: \answerNA{} 
  \item[] Justification: This is a theory paper with no crowdsourcing and no
  human subjects. 
  \item[] Guidelines:
  \begin{itemize}
    \item The answer NA means that the paper does not involve crowdsourcing nor research with human subjects.
    \item Depending on the country in which research is conducted, IRB approval (or equivalent) may be required for any human subjects research. If you obtained IRB approval, you should clearly state this in the paper. 
    \item We recognize that the procedures for this may vary significantly between institutions and locations, and we expect authors to adhere to the NeurIPS Code of Ethics and the guidelines for their institution. 
    \item For initial submissions, do not include any information that would break anonymity (if applicable), such as the institution conducting the review.
  \end{itemize}
  
\end{enumerate}
}

\ifConference{%

}

\label{afterbibliography}
\newoutputstream{pagestotal}
\openoutputfile{main.pagestotal.ctr}{pagestotal}
\addtostream{pagestotal}{\getpagerefnumber{afterbibliography}}
\closeoutputstream{pagestotal}

\newoutputstream{todos}
\openoutputfile{main.todos.ctr}{todos}
\addtostream{todos}{\arabic{@todonotes@numberoftodonotes}}
\closeoutputstream{todos}

\end{document}